\newtheorem{theorem}{Theorem}
\newtheorem{lemma}{Lemma}
\newcommand{\AComment}[1]{\Comment*[r]{#1}}
\newcommand{\ACommentL}[1]{\Comment*[l]{#1}}
\pgfplotsset{width=10cm,compat=1.9}
\def\st{\phantom{X}}
\def\Set{\mathsf{Set}}
\def\<{\langle}
\def\>{\rangle}
\def\S{\mathcal{S}}
\def\llet{\textbf{let}}
\def\lin{\textbf{in}}
\def\Implements{\textbf{Implements}}
\def\Vars{\textbf{vars}}
\def\Uses{\textbf{uses}}
\def\request{\textbf{request}}
\def\call{\textbf{call}}
\def\response{\textbf{response}}
\def\true{\mathsf{true}}
\def\false{\mathsf{false}}
\def\TOB{\mathsf{TotalOrderBroadcast}}
\def\for_all{\textsf{for all}}
\def\rb{\mathit{rb}}
\def\abeb{\mathit{abeb}}
\def\brd{\mathit{brd}}
\def\complained{\mathit{complained}}
\def\tob{\mathit{tob}}
\def\apl{\mathit{apl}}
\def\lem{\mathit{le}}
\def\process{\mathit{process}}
\def\return{\mathit{return}}
\def\interBroadcast{\mathit{inter\text{-}broadcast}}
\def\Inter{\mathit{Inter}}
\def\batchSize{\mathit{batch\text{-}size}}
\def\execute{\mathit{execute}}
\def\complain{\mathit{complain}}
\def\newLeader{\mathit{new\text{-}leader}}
\def\join{\mathit{join}}
\def\joined{\mathit{joined}}
\def\leave{\mathit{leave}}
\def\left{\mathit{left}}
\def\RequestJoin{\mathit{RequestJoin}}
\def\RequestLeave{\mathit{RequestLeave}}
\def\CurrState{\mathit{CurrState}}
\def\Transaction{\mathit{Trans}}
\def\reconfigure{\mathit{reconfigure}}
\def\Reconfig{\mathit{Reconfig}}
\def\send{\mathit{send}}
\def\deliver{\mathit{deliver}}
\def\broadcast{\mathit{broadcast}}
\def\Recs{\mathit{Recs}}
\def\Agg{\mathit{Agg}}
\def\echo{\mathit{Echo}}
\def\ready{\mathit{Ready}}
\def\sendRecs{\mathit{send\text{-}recs}}
\def\certs{\mathit{certs}}
\def\preCerts{\mathit{p\text{-}certs}}
\def\operations{\mathit{operations}}
\def\preOps{\mathit{p\text{-}ops}}
\def\ops{\mathit{ops}}
\def\mym{\mathit{my}\text{-}\!\mathit{m}}
\def\nextLeader{\mathit{next\text{-}leader}}
\def\timer{\mathit{timer}}
\def\clientTimer{\mathit{client\text{-}timer}}
\def\Local{\mathit{Local}}
\def\Complaint{\mathit{Complaint}}
\def\LComplaint{\mathit{LComplaint}}
\def\RComplaint{\mathit{RComplaint}}
\def\cs{\mathit{cs}}
\def\cn{\mathit{cn}}
\def\rcn{\mathit{rcn}}
\def\ts{\mathit{ts}}
\def\C{\mathit{C}}
\def\leader{\mathit{leader}}
\def\recs{\mathit{recs}}
\def\st{\mathit{state}}
\def\echoed{\mathit{echoed}}
\def\readied{\mathit{readied}}
\def\delivered{\mathit{delivered}}
\def\valid{\mathit{valid}}
\def\highValid{\mathit{high\text{-}valid}}
\def\Valid{\mathit{Valid}}
\def\rc{\mathit{rc}}
\definecolor{forestgreen}{rgb}{0.0, 0.27, 0.13}
\definecolor{tropicalrainforest}{rgb}{0.0, 0.46, 0.37}
\newcommand{\mllater}[1]{}
\newcommand{\todo}[1]{\textcolor{purple}{{[Todo:~#1]}}}
\DeclareMathAlphabet{\mathpzc}{OT1}{pzc}{m}{it}
\newcommand{\dotminus}{\mathop{\mbox{$-^{\hspace{-.5em}\cdot}\,$}}}
\newcommand{\eqq}{\stackrel{\text{\tiny ?}}{=}}
\def\ie{\textit{i.e.}}
\def\self{\textbf{\textit{self}}}
\def\i{\textbf{\textit{i}}}
\newcounter{MyAlgoLine}
\newcommand{\alglinenoRestore}{\setcounter{AlgoLine}{\value{MyAlgoLine}}}
\newcommand{\alglinenoStore}{\setcounter{MyAlgoLine}{\value{AlgoLine}}}
\def\algsize{\footnotesize}
\begin{document}

\setcounter{page}{1}

\title{Hamava: Fault-tolerant Reconfigurable Geo-Replication on Heterogeneous Clusters}

\author{
\IEEEauthorblockN{
Tejas Mane\IEEEauthorrefmark{1},
Xiao Li\IEEEauthorrefmark{1},
Mohammad Sadoghi\IEEEauthorrefmark{2},
Mohsen Lesani\IEEEauthorrefmark{3}
}
\IEEEauthorblockA{\IEEEauthorrefmark{1}University of California, Riverside,  
\IEEEauthorrefmark{2}University of California, Davis,
\IEEEauthorrefmark{3}University of California, Santa Cruz
}
}

\maketitle

\begin{abstract}
Fault-tolerant replicated database systems consume significantly 
less energy than the compute-intensive proof-of-work blockchain.
Thus, they are promising technologies for the building blocks that assemble global financial infrastructure.
To facilitate global scaling, clustered replication protocols are
essential in orchestrating nodes into clusters based on proximity.
However, existing approaches often assume a homogeneous and fixed model in which the number of nodes across clusters is the same and fixed, and often limited to a fail-stop fault model. 
This paper presents heterogeneous and reconfigurable clustered replication for the general environment with arbitrary failures.
In particular, we present \textsc{Hamava}, a fault-tolerant reconfigurable geo-replication that allows dynamic membership: replicas are allowed to join and leave clusters.
We formally state and prove the safety and liveness properties of the protocol. Furthermore, our replication protocol is consensus-agnostic, meaning each cluster can utilize any local replication mechanism. 
In our comprehensive evaluation, we instantiate our replication with both HotStuff and BFT-SMaRt.
Experiments on geo-distributed deployments on Google Cloud demonstrates that members of clusters can be reconfigured without significantly affecting  transaction processing,
and
that heterogeneity of clusters may significantly improve throughput.
\end{abstract}

\vspace{-1mm}

\section{Introduction} 
Blockchains such as
Bitcoin \cite{nakamoto2008peer}
and Ethereum \cite{wood2014ethereum}
maintain a global replicated ledger
on untrusted hosts.
However, 
they suffer from a few drawbacks, including
high energy consumption,
partitions \cite{tran2020stealthier,gervais2016security,saad2023three},
and stake and vote centralization
\cite{wahrstatter2024blockchain}.
Byzantine replicated systems 
such as
PBFT \cite{castro1999practical} and its numerous following variants \cite{miller2016honey,yin2019hotstuff,spiegelman2022bullshark,androulaki2018hyperledger, stathakopoulou2019mir,amiri2019caper,gupta2019proof,ruan2021blockchains,ruan2021lineagechain,hs1,bedrock}
can maintain consistent replications in the presence of malicious nodes.
More interestingly,
these techniques
avoid energy-intensive proof-of-work hashing.
Therefore, they are an appealing technology
to serve as the global financial infrastructure.
Thus, several projects, such as
Hyperledger \cite{androulaki2018hyperledger},
Solida \cite{abraham2016solida}, 
Tendermint \cite{buchman2016tendermint},
Casper \cite{buterin2017casper}, 
Algorand \cite{gilad2017algorand}, 
OmniLedger \cite{kokoris2018omniledger} 
RapidChain \cite{zamani2018rapidchain},
and 
\cite{cachin2023quorum}
deployed Byzantine replication protocols to manage blockchains.

However,
Byzantine replication protocols need to be improved on two fronts: 
\emph{scale} and \emph{dynamic} membership.
They often require rounds of message-passing between nodes;
therefore, they tend not to scale to many or distant nodes.
Further, 
their membership is often fixed.
In fact, 
the resulting blockchains are called permissioned
since the nodes are fixed and initially known.

To scale Byzantine replication across the globe,
projects such as
Steward \cite{amir2008steward}
and
ResilientDB \cite{gupta13resilientdb,gupta2021rcc}
and Narwhal \cite{danezis2022narwhal}
try to 
use global communication judiciously,
and
decrease global
in favor of
local communication.
They allow neighboring nodes to form clusters. This enables each cluster to order transactions locally while reducing the need for inter-cluster communication to reach an agreement on the global order.
Since the communication among members of a cluster is local,
clusters can maintain high throughput and low latency.
Further, 
coordination is divided between clusters,
and they can order transactions in parallel.

However, existing clustered replication protocols are homogeneous and fixed.
The number of nodes is the same across clusters.
Further, nodes cannot join or leave clusters.
A global financial system needs to be \emph{heterogeneous}: 
different regions might have different numbers of active nodes.
More importantly, decentralization promised \emph{dynamic} 
membership: active nodes should be able to churn.
This is the property that proof-of-work blockchains, such as 
Bitcoin, observe, allowing any incentivized nodes to join and keep the system running.
Reconfiguration has been studied for non-clustered replication 
\cite{duan2022foundations, jehl2015smartmerge, duan2014bchain, lamport2010reconfiguring, lamport1998part, guerraoui2010next}
but it remains an open problem for clustered replication.
Can we have the best of both worlds?
\emph{Can we have the energy efficiency, equity, and scalability of clustered Byzantine replication, and the dynamic membership of proof-of-work?
Can we have reconfigurable clustered replication?}

Reconfiguring a clustered replication system
\emph{without compromising security} is a challenging task.
If the reconfigurations are not propagated uniformly to all clusters,
correct replicas (\ie, processes) might accept invalid messages or miss valid ones.
Inconsistent views of membership may lead to violation of both safety and liveness.
Byzantine replicated systems can often tolerate one-third of replicas to be Byzantine.
Thus, if a message is received from more than one-third of replicas,
at least one correct replica must have sent it;
therefore, the message can be trusted.
Consider a cluster $C_{old}$ that is not informed of new additions to another cluster $C_{new}$.
The cluster $C_{old}$'s record of one-third is less than the actual one-third for $C_{new}$.
Therefore, the Byzantine replicas
in $C_{new}$ can form a group that is larger than the old one-third,
and 
can make $C_{old}$ accept a invalid message.
On the flip side,
$C_{new}$ might miss messages from $C_{old}$.
Since $C_{old}$ thinks that $C_{new}$ is smaller,
in order to communicate a message,
$C_{old}$ might send a message to 
an insufficient number of replicas in $C_{new}$.
Thus, the Byzantine replicas in $C_{new}$ can censor the message for 
other replicas in that cluster, hindering liveness.
Uniform propagation of reconfigurations is particularly challenging
when the leader 
simultaneously changes.

In this paper, we present \textsc{Hamava}, a reconfigurable clustered replication that can tolerate arbitrary faults.
It allows replicas to be divided into multiple \emph{heterogeneous} clusters, and further allows \emph{dynamic} membership for clusters: replicas can join and leave a cluster. This clustered design further reduces the cost of inter-cluster communication by allowing each cluster to independently reach an agreement on its membership and order transactions locally and only propagate the local decisions globally.
Reconfigurations are processed efficiently in parallel to transactions. 
Since the reconfigurations received in a round $r$ take effect for 
the next round $r+1$,
they do not need to be ordered in round $r$. 
Thus, instead of processing them in sequence through the consensus that orders transactions,
they are \emph{aggregated} into a set, and processed together. 
We present the reconfiguration protocol and
formally state and prove its safety and liveness.

Reconfiguration of heterogenous clusters introduces nuances 
that affect ordering and executing transactions.
In particular, the inter-cluster communication primitive and 
the remote leader fault detection mechanism must have up-to-date 
knowledge of the size of the local and remote clusters in order 
to ensure safety and liveness.

\textsc{Hamava} is a meta-protocol that is
\emph{agnostic} to the local replication protocol.
We implement \textsc{Hamava}
for HotStuff \cite{yin2019hotstuff}
in C++,
and 
for BFT-SMaRt \cite{bessani2014state}
in Java.
We deployed our systems
on geo-distributed clusters in multiple regions of 
Google Cloud.
The experimental results show that 
heterogeneous geo-distributed deployments
significantly improve throughput,
can be reconfigured without affecting transaction processing,
and 
can gracefully tolerate Byzantine failures.

In short, this paper makes the following contributions: 
 \begin{itemize}
   \item 
   We present \textsc{Hamava}, a \emph{reconfigurable clustered replication protocol} 
   that allows replicas to dynamically join and leave clusters safely and efficiently.
   \item \textsc{Hamava} entails \emph{Heterogeneous clustered replication} to support clusters with varying sizes.
   Thus, \textsc{Hamava} includes a novel inter-cluster communication primitive and remote leader replacement mechanism.

 \item  \emph{Formal specification and proof of safety and liveness properties} of \textsc{Hamava}
 including dynamic reconfiguration in Byzantine 
 heterogeneous environments.
    \item  \emph{Implementation and experimental results} that demonstrate that \textsc{Hamava} is agnostic to the the local consensus (e.g., HotStuff and BFT-SMaRt).
   Thorough experiments with the resulting systems (\textsc{Ava-Hotstuff} and \textsc{Ava-Bftsmart}) show that \textsc{Hamava} supports efficient and fault-tolerant global replication and reconfiguration,
   and that heterogeneity improves performance.
\end{itemize}

\vspace{-1mm}

\section{Overview}
\label{sec:overview}
We describe the system and threat model, and illustrate the protocol with diagrams and representative executions.


\textbf{System and Threat Model. \ }
A replicated system consists of a set $P$ of replicas
that are partitioned into 
clusters $C = \{N_{1}, .., C_{N}\}$.
Clients can send requests to any replica to execute 
operations of two different types: transactions and reconfigurations.
The state is replicated at each replica.
A replica can be correct or Byzantine. 
A Byzantine replica can fail arbitrarily including but not limited to crash failures, sending conflicting messages, dropping messages, and impersonating other Byzantine replicas.
We assume that 
at any time
in each cluster, at most one-third of replicas can be Byzantine,
\ie, at most $f$ out of $3f+1$ replicas can be Byzantine.
(This paper does not consider problems orthogonal to Byzantine fault tolerance such as access control or sybil resistance \cite{rajabi2023feasibility,tran2011optimal,zhang2014you}.)
We further assume that each replica can be identified by its public key, and 
that replicas are computationally bound, and cannot subvert standard cryptographic primitives.
Thus, replicas can communicate with authenticated links.
We consider a partially synchronous network \cite{dwork1988consensus}:
after an unknown global stabilization time,
messages between any pair of correct replicas 
will be eventually delivered within a bounded delay.
Replicas communicate with authenticated perfect links $\apl$,
and authenticated best-effort broadcast $\abeb$ 
which simply abstracts $\apl$ to send a message to all replicas.
Each message $m^\sigma$ delivered from an authenticated link comes with a signature $\sigma$ of the sender. 

\textbf{Overview. \ }
The protocol proceeds in consecutive rounds $r$.
In order to avoid global communication in favor of local communication, replicas are divided into clusters.
Each round has three stages.
In the first stage, clusters process requests in parallel; the replicas of each cluster agree on the transactions and their order, and further the set of reconfigurations for that round.
In the second stage, clusters communicate these operations with each other.
Finally in the third stage, they execute all the operations in the decided orders.
Each cluster has a leader that coordinates the replication of both transactions and reconfigurations.
Each replica knows the $\leader$ of its cluster and its associated timestamp $\ts$.
(Leaders of each cluster are elected together with a monotonically increasing timestamp which replicas use to decide which leader is more recent.)
A leader might continue to serve for multiple rounds.
Several leaders might change until a correct leader
properly replicates transactions and reconfigurations of the round.


\begin{figure*}[t!]
\centering
   \begin{subfigure}{0.4\linewidth}
   \centering
   \includegraphics[width=\linewidth]{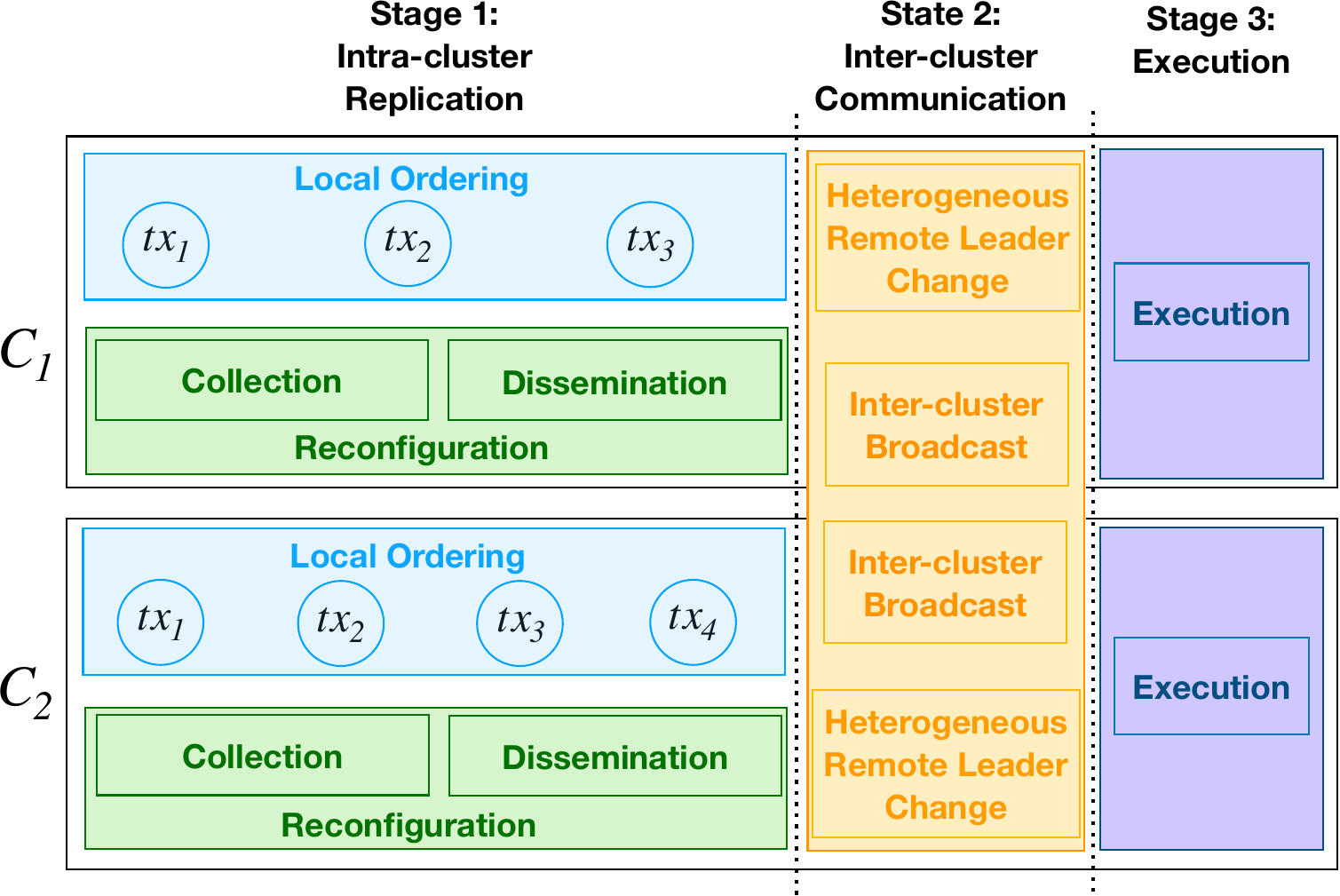}
   \caption{Overview of Stages and Sub-protocols}
   \label{fig:modules-overview}
   \end{subfigure}
   \begin{subfigure}{0.2\linewidth}
   \centering
      \includegraphics[width=\linewidth]{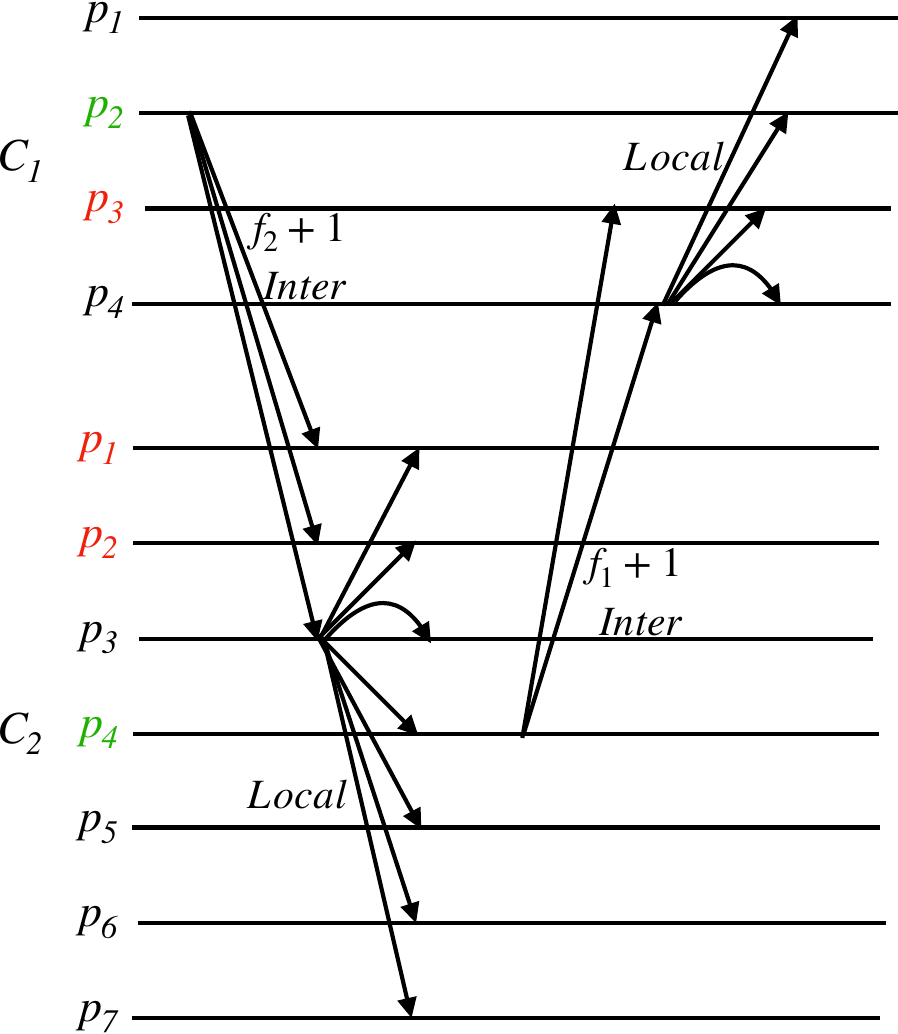}
      \caption{Inter-cluster Broadcast}      
      \label{fig:inter-comm-a}
   \end{subfigure}
   \begin{subfigure}{0.38\linewidth}
   \centering
      \includegraphics[width=\linewidth]{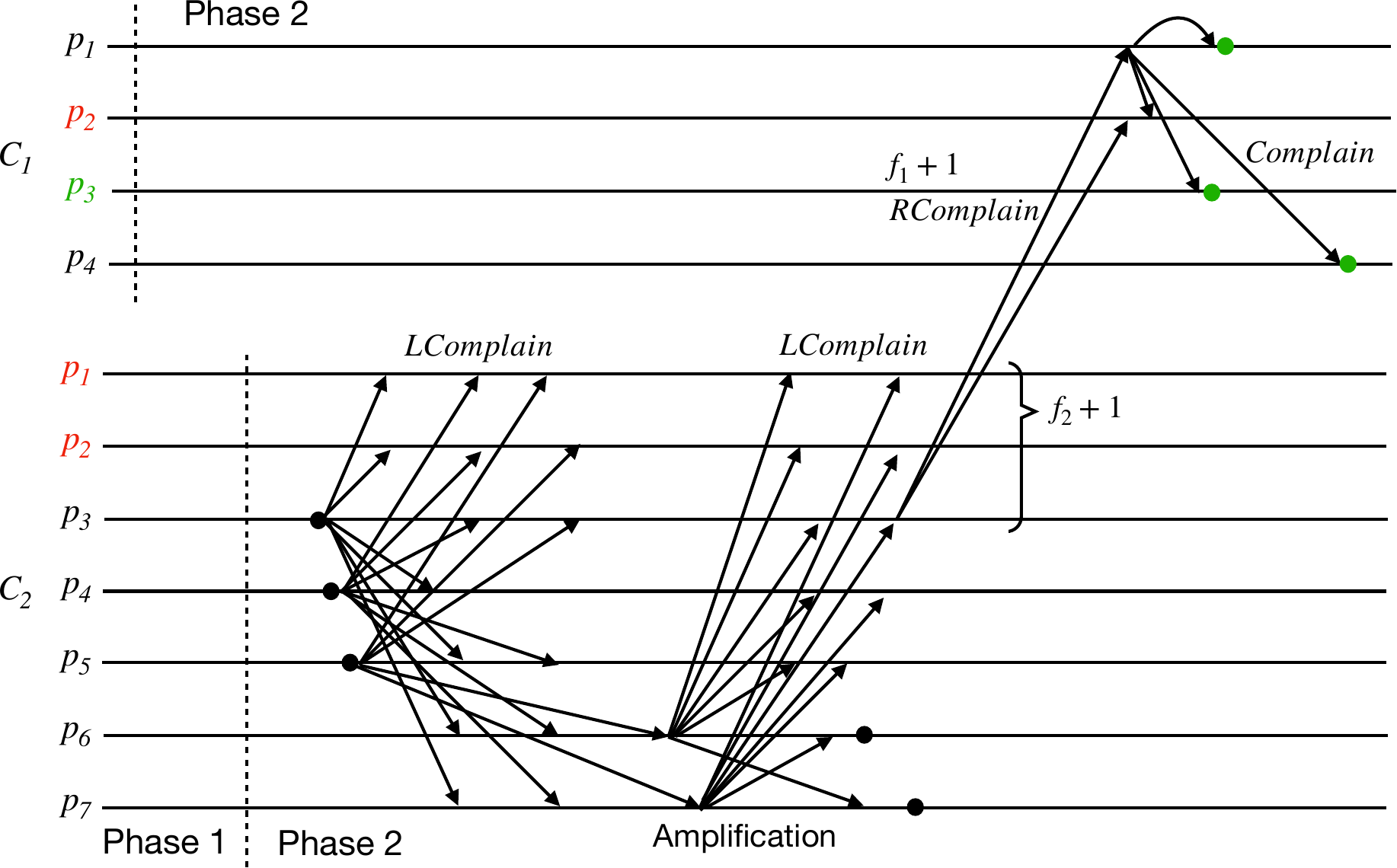}
      \caption{Remote Leader Change}
      \label{fig:inter-comm-b}
   \end{subfigure}
%
\caption{\textsc{Hamava} Reconfigurable Clustered Replication Protocol}
   \vspace{-3mm}
   \label{fig:inter-comm}
\end{figure*}

\textbf{Stages. \ }
The \textsc{Hamava} protocol has three stages. 
\autoref{fig:modules-overview} shows an overview of the stages and sub-protocols in a round.
The replicas are split into clusters.
The figure shows two clusters $C_1$ and $C_2$ that 
make progress
from left to right through the stages.

\textit{State 1: Intra-cluster Replication.}
The first stage is intra-cluster replication
where each cluster 
coordinates replication locally and independently of other clusters.
The first stage has two parts that are executed in parallel:
local ordering, and reconfiguration.
The local ordering protocol orders a batch of transactions uniformly across the replicas of the cluster.
The protocol is 
agnostic to
the local ordering sub-protocol; 
any consensus protocol can be used.
The second part, the reconfiguration protocol, collects and uniformly disseminates the reconfiguration requests across 
the cluster,
even if the leader is Byzantine or changes simultaneously.

\textit{State 2: Inter-cluster Communication.}
After the first stage finishes intra-cluster replication,
the second stage performs inter-cluster communication:
the leader of each cluster broadcasts to other clusters
the transactions and reconfigurations that it has locally replicated.
Each cluster waits to receive these messages from every other cluster.
If a remote leader is Byzantine, it may refrain from sending these messages.
Therefore, 
to ensure progress, 
if the replicas of a cluster don't receive the message from a remote cluster,
they trigger the remote leader change protocol to eventually change the leader of that remote cluster.

\textit{Stage 3: Execution.}
Finally, in the third stage,
each replica 
orders the transactions and reconfigurations that it has received from all clusters by a predefined order for the clusters,
executes them in order,
and issues responses.
This predefined order yields a total-order for operations across replicas.
replicas converge to the same state at the end of the round.

\subsection{Overview of \textsc{Hamava} Inter-cluster Communication}
Let us consider the inter-cluster communication stage (\ie, stage 2).
\autoref{fig:inter-comm} shows example executions of this stage for two heterogeneous clusters $C_1$ and $C_2$ with $4$ and $7$ replicas respectively.
In \autoref{fig:inter-comm-a},
the leaders of $C_1$ and $C_2$ are the green replicas $p_2$ and $p_4$ respectively.
The Byzantine replicas of $C_2$ and $C_2$ are the red replicas $\{ p_{3} \}$ and $\{ p_{1}, p_{2} \}$ respectively.
We note that in both clusters, the number of Byzantine replicas is less than one-third of the size of the cluster: $f < |C| / 3$ that is
$f_{1} = 1 < 4 / 3$ and $f_{2} = 2 < 7 / 3$.

\textit{Inter-cluster Broadcast. \ }
\autoref{fig:inter-comm-a} shows an execution of the inter-cluster broadcast protocol.
Each cluster has already locally replicated operations (including transactions and reconfiguration requests);
each operation is paired with a certificate of consensus which is approval signatures from a quorum of replicas in that cluster.
The leader of each cluster sends its operations together with their certificates to other clusters 
as inter-cluster messages $\Inter$.
To ensure that at least one correct replica in the remote cluster receives the message,
the leader sends the message to $f+1$ replicas in the remote cluster.
In our heterogeneous clusters example,
the leader $p_{2}$ of $C_{1}$ sends the message to $2+1 = 3$ replicas in $C_{2}$,
and the leader $p_{4}$ of $C_{2}$ sends the message to $1+1 = 2$ replicas in $C_{1}$.
In the remote cluster,
the correct replica that receives the $\Inter$ message
then broadcasts the operations as $\Local$ messages to replicas in its own cluster.
Thus, if the leaders are correct,
all correct replicas eventually receive operations from all clusters.

\begin{table}[] 
   \setlength{\tabcolsep}{1.5pt}
   \vspace{-2mm}
   \footnotesize
   \centering
	\begin{tabular}{@{}ccll@{}c}
		\toprule
		\multirow{2}{*}{Protocol} & \multirow{2}{*}{$D$} & \multicolumn{2}{c}{Communication} & \multirow{2}{*}{$DC$} \\
		\cmidrule(lr){3-4} 
		& & Local & Global & \\ \midrule

		Ava-HotStuff
		& z & $O(8zn)$ & $O(fz^2)$ & $\checkmark$ \\ 
		Ava-BftSmart
		& z & $O(2zn^2)$ & $O(fz^2)$ & $\checkmark$\\ 
      GeoBFT                                 & z & $O(4n^2)$ & $O(fz^2)$ & $\checkmark$ \\ 
		Steward                                & 1 & $O(2zn^2)$ & $O(z^2)$ & $\times$ \\ 
		PBFT                                      & 1 & \multicolumn{2}{c}{$O(2(zn)^2)$} & $\times$ \\
		ZYZZYVA                               & 1 & \multicolumn{2}{c}{$O(zn)$} & $\times$ \\ 
		\bottomrule

	\end{tabular}
\caption{
\textcolor{black}{
Best-case complexity of the protocols, where $D$ is decisions, $z$ is the \# of clusters, $n$ is the maximum \# of nodes per cluster, and $f$ is the \# of faulty nodes in a cluster for the given $n$, and $DC$ is Decentralized.}}
\label{tab:complexity}
\vspace{-5mm}
\end{table}

Clustering reduces 
the number of rounds and message complexity for
global communication.
We just considered the inter-cluster broadcast of stage 2 above.
Let us compare the complexity of classical (\ie, not clustered) replication such as PBFT
with clustered replication.
Consider $n_{1} = |C_{1}|$, $n_{2} = |C_{2}|$, and the total number $n = n_{1} + n_{2}$ replicas.
To process a single transaction,
replication
requires $2$ global rounds with message complexity $O((n_{1} + n_{2})^{2})$.
To process $2$ transactions in parallel in $C_{1}$ and $C_{2}$,
clustered replication
executes stage 1 with
$2$ local rounds with message complexity $O(n_{1}^{2} + n^{2}_{2})$,
and then
stage 2 with
$1$ global round with message complexity $(f_{1} + 1) + (f_{2} + 1) = O(n_{1} + n_{2})$,
and finally, $1$ local round with message complexity $(f_{1} + 1)  \times  n_{1} + (f_{2} + 1)  \times  n_{2} = O(n_{1}^{2} + n_{2}^{2})$.
Therefore, global communication is reduced
from $2$ rounds of complexity $O((n_{1} + n_{2})^{2})$
to
$1$ round of complexity $O(n_{1} + n_{2})$.

\textcolor{black}{
We present a comparison of the complexity of \textsc{Hamava} and previous works in 
\autoref{tab:complexity}.
}

\textit{Remote Leader Change. \ }
A Byzantine leader may behave properly in the local cluster, but skip sending
$\Inter$ messages to other clusters.
Let us now consider how 
the replicas of a cluster can instigate
the change of the leader of a remote cluster if they don't receive the expected message from it.
\autoref{fig:inter-comm-b} shows an execution of the remote leader change protocol.
The current leader $p_{2}$ of the cluster $C_{1}$ is Byzantine, 
and 
will be changed to the correct replica $p_{3}$.
In cluster $C_{2}$, 
the replicas $p_{3}$, $p_{4}$ and $p_{5}$ have not received the operations of $C_{1}$,
and their timers expire;
thus, they broadcast a local complaint $\LComplaint$ in $C_{2}$ about $C_{1}$.
The replicas $p_{6}$ and $p_7$ in $C_{2}$ have not already complained,
but receive $f_{2} + 1 = 3$ complaints from the three replicas above.
Since at least $1$ out of $3$ is from a correct replica,
they amplify the complaint by broadcasting an $\LComplaint$ message locally.
A replica accepts the local complain
only when it receive it from $2  \times  f_{2} + 1 = 5$ replicas.
It can be shown that 
this prevents a coalition of Byzantine replicas from forcing a leader change,
and
ensures that all local correct replicas eventually deliver the complaint.
When the first $f_{2} + 1 = 3$ replicas accept the local complaint,
they send a remote complaint $\RComplaint$.
To make sure that the message is sent to the remote cluster,
it is sent by $f_{2}+1$ replicas which contain at least one correct replica.
In the first three replicas,
$p_{3}$ is correct and sends the remote complaint.
The complaint should reach at least one correct replica in $C_{1}$;
thus, $p_{3}$ sends it to $f_{1}+1 = 2$ replicas in $C_{1}$.
The replica $p_{1}$ in $C_{1}$ is correct, and receives the remote complaint.
It accepts the complain if it carries $2  \times  f_{2} + 1$ signatures from $C_{2}$.
It then broadcasts a $\Complaint$ message locally in $C_{1}$.
When the correct replicas receive the local complaint
(at green circles),
they move to the next leader $p_{3}$.
The protocol should deal with complaint replay attacks, and multiple simultaneous change requests, that we will describe in the next section.

\subsection{Overview of \textsc{Hamava} Reconfiguration}
Let's now consider reconfiguration.
Reconfiguration
not only allows replicas to join and leave
but also supports rebalancing the system to maintain 
the proximity of replicas in a cluster, and 
similarity of performance across clusters.

\begin{figure*}
\vspace{-8mm}
\captionsetup[subfigure]{justification=centering}
\centering
\begin{subfigure}{0.17\textwidth}
\centering
   \includegraphics[width=\linewidth]{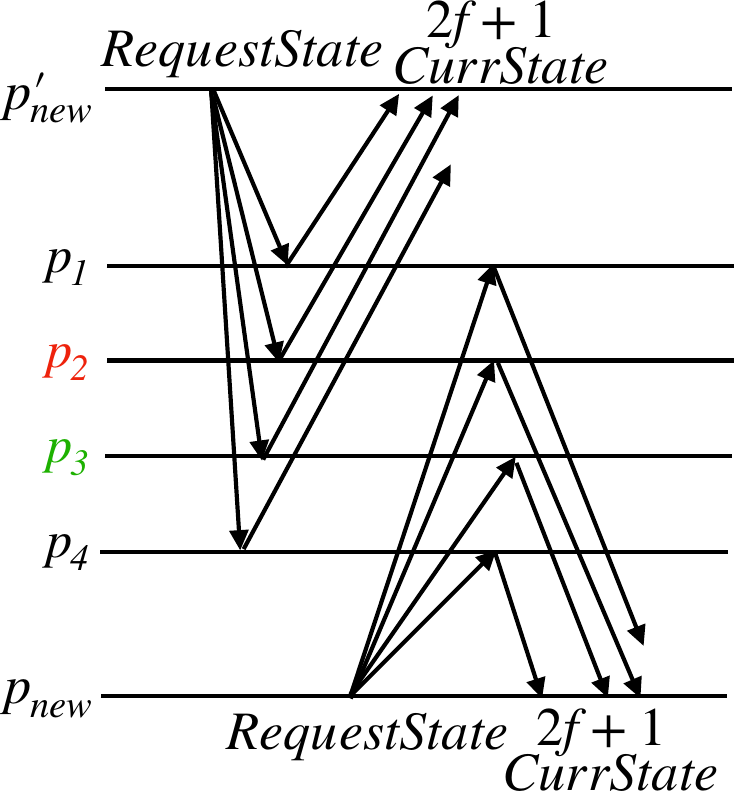}
   \caption{Collection}
   \label{fig:reconfiguration-a}
\end{subfigure}
\hfil
\begin{subfigure}{0.55\textwidth}
\centering
      \includegraphics[width=\linewidth]{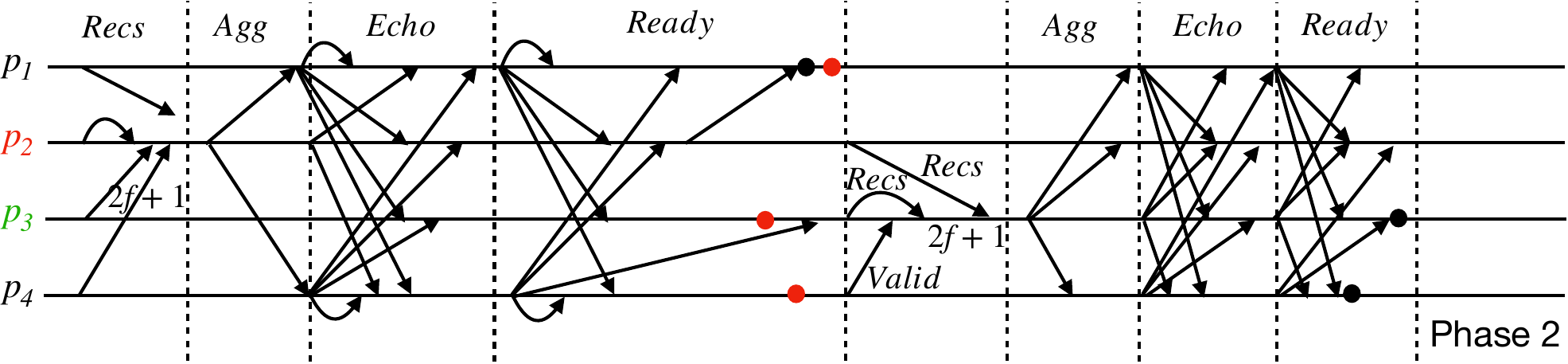}
   \caption{Dissemination}
   \label{fig:reconfiguration-b}
\end{subfigure}
\vspace{-2mm}
\caption{Reconfiguration: Collection and Dissemination}
\vspace{-5mm}
\label{fig:reconfiguration}
\end{figure*}

%



\textit{Attacks. \ }
The reconfiguration requests should be uniformly propagated across clusters, \ie, 
the configurations that every pair of correct replicas
(possibly from different clusters)
execute
in a round should be the same.
When they are not,
the following Byzantine attacks may arise.
Consider two clusters $C_{1}$ and $C_{2}$ with $4$ and $7$ replicas, and 
the failure thresholds $f_1 = 1$ and $f_2  = 2$ respectively.
Assume that $3$ new replicas join $C_1$ and one of them is Byzantine.
The updated $C_1$ now has $7$ replicas, 
and the failure threshold is $f_1' = 2$.
However, assume that the correct replicas in $C_2$ are unaware of the newly joined replicas in $C_1$; 
they keep the stale failure threshold $f_{1} = 1$, and will accept any operations with $2  \times  f_1 + 1 =3$ signatures.
If $C_1$ has a Byzantine leader,
it can forge a certificate for a set of operations $\ops_1$:
it can get a signature from only one correct replica for $\ops_1$.
Then, it can also have signatures from itself and the other Byzantine replica,
to have a total of $3$ signatures.
It can then make the replicas in $C_2$ accept $\ops_1$ with the forged certificate.
However, it can lead the correct replicas in $C_1$ to eventually replicate a different set of operations.
Thus, the correct replicas in $C_1$ and $C_2$ diverge.

Let us now consider another attack in the same setting.
Since the correct leader of $C_2$ has a stale failure threshold $f_1 = 1$,
it sends $f_{1} + 1 = 2$ inter-cluster broadcast messages to $C_{1}$.
The receiver replicas in $C_1$ can be both Byzantine, and may drop the message.
Then, the timers of the correct replicas in $C_1$ will eventually trigger, 
and they complain about the leader of $C_2$.
The remote leader change eventually replaces the correct leader in $C_2$.
Unfortunately, the Byzantine replicas in $C_{1}$ can repeat changing the leader until 
a Byzantine replica is in control in $C_{2}$.

Let's consider the reconfiguration protocol.
Replicas can request join and leave reconfigurations in stage 1 (Intra-cluster replication).
Clusters communicate only in stage 2 (Inter-cluster communication).
Clusters communicate only in stage 2 (Inter-cluster communication).
Thus, if the reconfigurations requested in a cluster in stage 1 are processed as they are requested,
remote clusters will have an inconsistent view of membership for the local cluster.
We explained above that these inconsistencies are unsafe.
Therefore, 
the reconfigurations requested in a round
are locally collected and disseminated in stage 1,
are remotely communicated in stage 2,
and applied in stage 3 to
uniformly update membership for the next round.
Thus, in each round, they can be collected as a set,
and
the order that they are processed in is immaterial.
Therefore, collecting them can be taken off the critical path
that 
orders transactions.
Thus, as \autoref{fig:modules-overview} shows,
reconfigurations are collected and disseminated as a separate workflow in parallel to transaction processing.
\autoref{fig:reconfiguration} shows example executions for both parts of the reconfiguration protocol, collection and dissemination, which we will describe next.

\textit{Collection. \ }
In \autoref{fig:reconfiguration-a},
two replicas $p_{new}$ and $p_{new}'$ request to join the cluster.
Each broadcasts a $\RequestJoin$ message.
When a correct replica delivers a $\RequestJoin$ message,
it adds the join request to its set of reconfiguration requests,
and
responds back by a $\mathit{Ack}$ message.
A joining replica periodically keeps sending $\RequestJoin$ messages
until it receives the $\mathit{Ack}$ message with the same configuration from a quorum of $2  \times  f + 1 = 3$ replicas.
It stops then as it learns that 
Byzantine replicas cannot censor the request.

\textit{Dissemination. \ }
The same set of reconfigurations should be uniformly disseminated to all correct replicas in the cluster.
Otherwise, as we discussed in the introduction,
an inconsistent view of members can 
lead to accepting fake, or discarding genuine messages. 
We describe an execution where the leader is Byzantine and is changed;
nonetheless, the same set of reconfigurations are uniformly delivered to all correct replicas.

In \autoref{fig:reconfiguration-b},
each correct replica sends
the set of reconfiguration requests that it has collected
as $\Recs$ messages
to the leader replica $p_{2}$.
When the leader $p_{2}$ receives messages from a quorum,
it aggregates the received sets of reconfigurations,
and the accompanying signatures,
and then starts disseminating them.
Since there is a correct replica in the intersection of every pair of quorums,
the leader does not miss the requests.
In \autoref{fig:reconfiguration},
the quorum $\{p_1, p_2, p_3\}$ that $p_{new}'$ receives the state from, and the quorum $\{p_2, p_3, p_4\}$ that the leader $p_{2}$ receives requests from intersect in the correct replica $p_{3}$.

The leader broadcasts the aggregation of the reconfiguration requests that it collected.
Upon delivery from the leader,
a correct replica checks whether the received reconfigurations are valid:
they should be accompanied by at least a quorum of signatures for $\Recs$ messages.
As we saw in the collection part,
a requesting replica makes a quorum of replicas store the reconfiguration request.
Therefore,
the leader cannot drop requested reconfigurations:
if the leader drops a request, and hence, any signature from the quorum of replicas that stored it,
then the remaining replicas will be smaller than a quorum,
and
the leader cannot collect a quorum of signatures.
In \autoref{fig:reconfiguration-b}, 
although the leader $p_{2}$ is Byzantine,
it has to send the complete aggregated set.
However, it only sends it to a subset of replicas $\{ p_{1}, p_{4} \}$.
The correct replicas $p_{1}$ and $p_{4}$ that receive a message from the leader echo it.
The Byzantine replica $p_{2}$ echos to them but not $p_{3}$.
Thus, $p_{1}$ and $p_{4}$ receive a quorum of $3$ $\echo$ messages, 
and broadcast a $\ready$ message.
The Byzantine replica $p_{2}$ sends a $\ready$ message to only $p_{1}$.
Thus, only $p_{1}$ receives a quorum of $3$ $\ready$ messages,
and delivers the reconfigurations (at the black circle).

The correct replicas $p_{3}$ and $p_{4}$ don't receive enough $\ready$ messages, 
eventually complain about the leader $p_{2}$, and 
change the leader to the correct replica $p_{3}$ (at the red circles).
To preserve uniformity, 
the new leader $p_{3}$ should retrieve the set of reconfigurations that $p_{1}$ previously delivered.
We will describe later in \autoref{sec:reconfiguration},
how the leader
retrieves that set, and
makes the remaining correct replicas
$p_{3}$ and $p_{4}$ eventually deliver the same set (at black circles).

We note that the classical Byzantine reliable broadcast (BRB)
and Byzantine consensus
would be inadequate for reconfiguration dissemination.
Firstly,
in contrast to BRB that guarantees termination only when the sender is correct,
the reconfigurations are expected to be eventually delivered in each round
even if the initial leader is Byzantine.
Thus, to ensure termination, 
the leader might be changed during dissemination.
The challenge is to 
keep uniformity across leaders.
Further, 
in contrast to BRB 
where a message from one designated sender is broadcast,
and in contrast to consensus 
where a proposal from one replica is decided,
this protocol should 
aggregate and
broadcast a collection of reconfigurations from a quorum of replicas.

In this section, we saw an overview of the stages (\autoref{fig:modules-overview} and the accompanying description).
Next, 
we first consider 
the two more important sub-protocols:
inter-cluster communication (\autoref{sec:inter-cluster-comm}),
and reconfiguration (\autoref{sec:reconfiguration}).
(We detail the
the stages
in the extended report~\cite{reconfig}.

\vspace{-1mm}

\section{Inter-cluster Communication}
\label{sec:inter-cluster-comm}

We will now present 
the inter-cluster broadcast protocol that propagates operations between clusters,
and
the heterogeneous remote leader change protocol 
that detects and changes Byzantine leaders 
for remote clusters.

\textit{State. \ }
Each replica keeps the set of replicas $C_j$ for each cluster.
(We use the index $\i$ only for the current cluster, and the index $j$ for clusters in general.)
The set $C_\i$ keeps track of membership within the current cluster $\i$,
and is used for intra-cluster communication.
The sets $C_j$ that keep track of the members of remote clusters $j$
are used for inter-cluster broadcast.
Accordingly, a replica
has the failure threshold $f_j$ for each cluster $C_j$
as one-third of the size of $C_j$.
Each replica also keeps the current round $r$.
Further, it stores the operations $\operations_{j}$ that it receives from each cluster $C_j$.
Each replica keeps a set of certificates $\certs$ for its local operations $\operations_\i$.
A certificate for an operation 
contains at least $2 \times f_\i + 1$ signatures,
and is sent to other clusters together with the operation.
The protocol uses authenticated perfect links $\apl$,
and authenticated best-effort broadcast $\abeb$ 
(that were described in \autoref{sec:overview}).
Each message $m^\sigma$ delivered from an authenticated link comes with a signature $\sigma$ of the sender. (We elide the signature when it is not needed in a context.)

\textbf{Inter-cluster Broadcast. \ }
At the end of stage 1, the local ordering stage,
the leader calls the function 
$\interBroadcast$
(\autoref{alg:inter-cluster})
to start
the second stage.
Each cluster broadcasts 
its locally ordered operations
to remote clusters.
As \autoref{fig:inter-comm-a} shows,
this function sends out 
the batch of operations $\ops$ of the local cluster together with their certificates $\certs$
as $\Inter$ messages
to other clusters 
(at \autoref{algI:sharing_handler}-\ref{algI:inter_send}).
For each remote cluster $j$, 
the $\Inter$ messages are sent to $f_{j} + 1$ distinct replicas.
Therefore, at least one correct replica at cluster $C_{j}$ eventually receives the $\Inter$ message 
(at \autoref{algI:inter_received}).
It checks that the certificates are valid:
a certificate for an operation from cluster $C_{j'}$ is valid if it contains at least $2 \times f_{j'} + 1$ signatures from the cluster $C_{j'}$.
The receiving replica then broadcasts the operations as $\Local$ messages to other replicas in its own cluster
(at \autoref{algI:local_send}).
Upon receiving a $\Local$ message 
containing operations $\ops$ from a remote cluster $j'$
with valid certificates
(at \autoref{algI:local_handler}), 
the replica 
maps $j'$ to $\ops$ 
in its $\operations$ map.
It also stops a $\timer$ that watches the leader of cluster $j$.
(We will consider remote leader change in the next paragraph).
When operations from all clusters are received,
the replica calls 
the function $\execute$
to enter stage 3, the ordering and execution stage 
(at \autoref{algI:execution_request}).

\begin{algorithm}[t]
	\algsize
	\caption{Inter-cluster Broadcast}
	\label{alg:inter-cluster}
	\DontPrintSemicolon
	\SetKwBlock{When}{when received}{end}
	\SetKwBlock{Upon}{upon}{end}
	\SetKwBlock{Function}{function}{end}
	\SetKwBlock{Foreach}{foreach}{end}

	\Vars : \;
	\ \ \ $\C_{j} : \Set[P]$ \AComment{Replicas of each cluster $C_{j}$}
	\ \ \ $\i$ \AComment{The number of the current cluster}
	\ \ \ $f_{j} : N$ \AComment{Failure threshold for $C_{j}$}
	\ \ \ $r$ \AComment{The current round}   
	\ \ \ $\operations_{j} \leftarrow \emptyset$  \AComment{Operations from each cluster $C_j$}
	\ \ \ $\certs$
	\AComment{Certificates for $\operations_\i$ of $C_\i$}
	
	\Uses $:$ \;
	\ \ \ $\apl: \mathsf{Authenticated Point 2 Point Link}$ \;
	\ \ \ $\abeb : \mathsf{Authenticated Best Effort Broadcast}$ in $C_\i$ \;

	\Function($\interBroadcast ❪ r, \ops, \certs ❫$\label{algI:sharing_handler}){
		\Foreach($C_{j}, j \neq \i$) {
			\Foreach($p \in P$ where $P \subseteq C_{j}  \land |P| = f_{j} + 1$) {
				$\apl \ \request \ \send(p, \Inter(r, \i, \ops, \certs))$ \label{algI:inter_send} \;
			}
		}
	}
	
	\vspace{0.7ex}    
	\Upon($\apl \ \response \ \deliver❪p, \Inter ❪ r', j, \ops, \Sigma ❫❫$ where 
	$r' = r$
	$\land$
	$\Sigma$ is valid ❪\ie, $\Sigma$ has at least $2 \times f_{j} + 1$ signatures from $C_j$ for each $\mathit{op} \in \ops$❫ \label{algI:inter_received}){
		$\abeb \ \request \ \broadcast(\Local(r, j, \ops, \Sigma))$ \label{algI:local_send}\;
	}

	\vspace{0.7ex} 
	\Upon($\abeb \ \response \ \deliver❪p, \Local ❪ r', j, \ops, \Sigma ❫❫$ where 
	$r' = r$
	$\land$ 
	$\Sigma$ $\text{is}$ $\text{valid}$
	\label{algI:local_handler}){
		$\operations_{j} \leftarrow \ops$ \label{algI:update_received}\;
		stop $\timer_{j}$ \label{algI:stop_timer} \;
		\If{$|\mathsf{dom}(\operations)| = N$ \ \ $\triangleright \, N$ \textnormal{is \# of clusters}\label{algI:receive_all_clusters}}{
			$\call \ \execute(\operations)$ \label{algI:execution_request}\;
		}
	}

\end{algorithm}

\textbf{Heterogeneous Remote Leader Change. \ }
Each replica 
waits until it receives operations from other clusters.
Thus, if the leader of a cluster is Byzantine,
and avoids sending operations to other clusters, 
it can stall progress.
Consider a system where cluster $C_j$ has a Byzantine leader $l$. 
For example in \autoref{fig:inter-comm-b}, the leader $p_2$ of $C_1$ is Byzantine.
It acts as a correct leader internally in $C_j$ for the local ordering stage.
The correct replicas in $C_j$ cannot identify $l$ as a Byzantine leader to replace it.
But $l$ does not follow the protocol to send its operations to a remote cluster $C_{j'}$.
Thus, replicas of the cluster $C_{j'}$  cannot proceed to the ordering and execution stage.

\textit{Intuition. \ }
We describe how the local cluster can trigger leader change in a remote cluster.
Each replica keeps a timer $\timer_j$ for 
the leader of each cluster $C_j$.
It resets the timers for all clusters at the beginning of each round.
When a local replica does not receive the operations of a remote cluster, and the timer expires,
it broadcasts a complaint in its local cluster.
When enough local replicas complain,
the complaint is eventually accepted locally.
A subset of local replicas that accept a complaint
send complaints to remote replicas 
which in turn broadcast it in the remote cluster.
Once remote replicas receive the remote complaint,
they change the remote leader.

A remote replica accepts a remote complaint
only if it comes with a quorum of signatures from the complaining cluster.
This prevents any coalition of Byzantine replicas in the complaining cluster
to force a remote leader change.
But a Byzantine replica in the remote cluster can keep a valid complaint and its accompanying signatures, 
and launch a replay attack:
it can resend the valid complaint to repeatedly change the leader.
To prevent this attack, 
the complaining cluster maintains a complaint number $\cn_j$ for each remote cluster $C_j$,
which is incremented on every remote complaint sent to $C_j$.
A remote replica 
maintains the number of complaints received $\rcn_{j'}$ from each other cluster $C_{j'}$,
and only accepts a complaint with the next expected number, 
then increments it.
Thus, a remote replica accepts each remote complaint only once.

\begin{algorithm}[t]
	\algsize
	\caption{Heterogeneous Remote Leader Change}
	\label{alg:het-remote-leader-change}
	\DontPrintSemicolon
	\SetKwBlock{When}{when received}{end}
	\SetKwBlock{Upon}{upon}{end}
	\SetKwBlock{Function}{function}{end}
	\SetKwBlock{Foreach}{foreach}{end}

	\Vars $:$   \;
	\ \ \ $\self$ \AComment{The current replica}   
	\ \ \ $\timer_{j} \leftarrow \Delta$
	\AComment{A timer for each cluster $C_{j}$} 
	\ \ \ $\cn_{j} \leftarrow rcn_{j} \leftarrow 0$
	\AComment{\# of complaints sent to \& received from $C_{j}$}
	\ \ \ $\cs_{j} \leftarrow \emptyset$ \AComment{Complaint signatures for each cluster $C_{j}$}
	\ \ \ $\complained_{j} \leftarrow \false$ \AComment{If complained about each cluster $C_{j}$}

	\Upon($\timer_{j}$ for remote $C_{j}$ expires \label{algV:timer_trigger}){
		$\abeb \ \request \ \broadcast(\LComplaint( j, \cn_{j}, r ))$ \label{algV:Drvc_send} \;
		$\complained_{j} \leftarrow \true$ \;
	}
	
	\vspace{0.7ex} 
	\Upon($\abeb \ \response \ \deliver ❪ p,$ $\LComplaint ❪ j,$ $c,$ $r' ❫^\sigma❫$ where 
	$r' = r$ $\land$ 
	$c = \cn_{j} \land \operations_{j} = \perp$\label{algV:Drvc_deliver}){
		$\cs_{j} \leftarrow \cs_{j} \cup \{ \sigma \}$ \label{algV:add_complain}\;
		
		\If{$|\cs_{j}| \geq f_\i + 1 \land \neg\complained_{j}$\label{algV:f_complain}}{
			$\complained_{j} \leftarrow \true$ \;
			$\abeb \ \request \ \broadcast(\LComplaint(j, c, r))$ \label{algV:amplify_Drvc} \;
		}
		
		\If{$|\cs_{j}| \geq 2 \times f_\i + 1$\label{algV:2f_complain}}{
			$\llet \ S ≔ \mbox{first } f_\i + 1 \mbox{ replicas of } C_\i \ \lin$ \label{algV:let-sender-set}\;
			\If{$\self \in S$\label{algV:in-sender-set}}{
				$\apl \ \request \ \send(p, \RComplaint( \cn_{j},$ $i,$ $\cs_{j},$ $r ))$, 
				for each $p \in S'$ in a set $S'$ such that $S'\subseteq C_{j}$ $\land$
				$|S'| = f_{j} + 1$ \label{algV:Rvc_send} \;
			}
			$\cn_{j} \leftarrow \cn_{j} + 1$ \label{algV:reset1} \;
			$\cs_{j} \leftarrow \emptyset$; \ \ \ 
			$\complained_{j} \leftarrow \false$; \label{algV:reset3} \ \ \ 
			reset $\timer_{j}$
		}
	}
	
	\vspace{0.7ex} 
	\Upon($\apl \ \response \ \deliver ❪p, \RComplaint ❪ c,$ $j',$ $\Sigma, r❫ ❫$
	where
	$r = r'$ $\land$ 
	$c = \rcn_{j'}$ $\land$ 
	$\Sigma$ contains $2 \times f_{j'} + 1$ signatures from $C_{j'}$\label{algV:Rvc_deliver}){
		$\abeb \ \request \ \broadcast(\Complaint(c, j', \Sigma))$ \label{algV:rb_Rvc} \;
	}
	
	\vspace{0.7ex} 
	\Upon($\abeb \ \response \ \deliver ❪p, \Complaint ❪ c,$ $j',$ $\Sigma ❫❫$ 
	where 
	$c = \rcn_{j'}$ $\land$   
	$\Sigma$ contains $2 \times f_{j'} + 1$ signatures from $C_{j'}$\label{algV:Rvc_rb_deliver}){
		$\rcn_{j'} \leftarrow \rcn_{j'} + 1$ \label{algV:rcn-inc} \;
		\If{$\Delta - timer_\i > \epsilon$}{
			$\lem \ \request \ \nextLeader$ \label{algV:local_complain} \;
		}
	}
	
\end{algorithm}

\textit{Protocol. \ }
As \autoref{alg:het-remote-leader-change} presents,
if a local replica finds that the timer $\timer_j$ for a remote cluster $C_j$ is expired
(at \autoref{algV:timer_trigger}),
it broadcasts a local complaint $\LComplaint$ message about $C_j$
to replicas in its own local cluster
(at \autoref{algV:Drvc_send}).
In \autoref{fig:inter-comm-b}, the replicas $\{p_3, p_4, p_5\}$ in $C_2$ send $\LComplaint$ messages.
The message includes the current complaint number $\cn_j$.
Once a local replica receives a local complaint for a remote cluster $C_j$
with the expected complaint number $\cn_j$,
and it has not received operations from that cluster
(at \autoref{algV:Drvc_deliver}),
it records the accompanying signature $\sigma$ in 
the set of complaint signatures $\cs_j$
(at \autoref{algV:add_complain}).
If the replica receives $f_\i + 1$ complaint signatures,
since at least one is from a correct replica,
the replica amplifies the complaint locally
if it has not already complained
(at \autoref{algV:f_complain}-\ref{algV:amplify_Drvc}).
In \autoref{fig:inter-comm-b}, the replicas $\{p_{6}, p_{7}\}$ in $C_{2}$ amplify the $\LComplaint$ message.

Once a replica receives $2 \times f_\i + 1$ complaint signatures
(at \autoref{algV:2f_complain}),
it accepts the local complaint.
Since there is at least one correct replica in the senders,
Byzantine replicas cannot force a leader change.
Further, 
since the complaint is received from $2 \times f_\i + 1$ replicas,
it can be shown that all correct replicas in the local cluster eventually deliver the complaint.
The complaint should reach at least one correct replica in the remote cluster $C_{j}$.
Therefore, 
the remote complaint message $\RComplaint$ should be sent to at least $f_j + 1$ remote replicas.
Further, 
at least one \emph{correct} replica should send these messages.
Therefore, at least $f_\i + 1$ replicas should send it.
The first $f_\i + 1$ replicas of the local cluster (by a predefined order)
send the complaint
(at \autoref{algV:let-sender-set});
we call them the sender set.
In \autoref{fig:inter-comm-b}, the sender set is $\{ p_{1}, p_{2}, p_{3} \}$.
The two replicas $p_{1}$ and $p_{2}$ are Byzantine but $p_{3}$ is correct and sends the message.
If the current replica is in the sender set,
it sends a remote complaint $\RComplaint$ message
to a subset of $C_{j}$ of size $f_{j} + 1$
(at \autoref{algV:in-sender-set}-\ref{algV:Rvc_send}).
The remote complaint message 
includes the complaint number $\cn_{j}$ 
and
the collected signatures $\cs_{j}$.
Finally,
the local replica increments the complaint number,
and resets the state for the next complaint
(at \autoref{algV:reset1}-\ref{algV:reset3}).

Once a replica receives the remote complaint message 
(at \autoref{algV:Rvc_deliver}),
if the message has the next expected complaint number $\rcn_{j'}$,
and it carries $2 \times f_{j'} + 1$ signatures from the complaining cluster $C_{j'}$,
it broadcasts a $\Complaint$ message in its own cluster
(at \autoref{algV:rb_Rvc}).
When a replica receives the complaint message from its local cluster
(at \autoref{algV:Rvc_rb_deliver}), 
it performs similar checks
to accept it.
It then increments the received complaint number $\rcn_{j'}$ for the complaining cluster $C_{j'}$,
and 
unless the leader is recently changed,
it requests the local leader election module $\lem$ to move to the next leader
(at \autoref{algV:rcn-inc}-\ref{algV:local_complain}).
(We will consider the local leader election module $\lem$ in \autoref{sec:protocol-phases-app}.)
If the leader is changed recently 
(\ie, only a small amount of time $\epsilon$ is passed since the $\timer_\i$ is reset to $\Delta$),
the protocol avoids requesting to change the leader again 
so that the new leader is not disrupted.
In particular, this happens when 
multiple remote clusters complain about 
the same leader at almost the same time.

\vspace{-1mm}

\section{Reconfiguration}
\label{sec:reconfiguration}

A replica $p$ can issue a $\join$ or $\leave$ request to join or leave.
Later, it receives a $\joined$ or $\left$ response 
(when the reconfiguration is executed in stage 3).
As we showed in \autoref{fig:modules-overview} and
briefly described in the overview \autoref{sec:overview},
reconfiguration requests are collected, and then disseminated locally in stage 1.
We now consider these two steps.

\begin{algorithm}[t]
   \algsize
    \caption{Collection}
    \label{alg:reconf1} 
   \DontPrintSemicolon
   \SetKwBlock{When}{when received}{end}
   \SetKwBlock{Upon}{upon}{end}
   \SetKwBlock{Function}{function}{end}

   $\request : \join, \leave$ \;
   $\response : \joined$, $\left$ \;

   \Vars $:$ \;
	\ \ \ $\recs \leftarrow \emptyset$  \AComment{Set of reconfigurations}   
   \ \ \ $\clientTimer \leftarrow \Delta$ \;

	\Upon(\request \ $\join$ \label{algJ:join_handler}){
	    $\abeb \ \request \ \broadcast(\RequestJoin(r))$ \label{algJ:request_states}  \; 
   }

    \vspace{0.7ex} 	
    \Upon(\request \ $\leave$ \label{algJ:leave_handler}){
	    $\abeb \ \request \ \broadcast(\RequestLeave(r))$ \label{algJ:request_leave} \;
	}

   \Upon($\clientTimer$ expires \label{algJ:join_timer_expires}){
      \If{\textnormal{requested} $\join$}{         
         $\abeb \ \request \ \broadcast(\RequestJoin(r))$ \label{algJ:repeat_join}\;
      }
      \ElseIf{\textnormal{requested} $\leave$}{
         $\abeb \ \request \ \broadcast(\RequestLeave(r))$ \label{algJ:request_leave-repeat}
      }
      reset $\clientTimer$ to a longer period \;
   }

   \vspace{0.7ex} 
   \Upon($\abeb \ \response \ \deliver ❪p, \RequestJoin^\sigma❪r'❫ ❫$ where
   $r = r'$ \label{algJ:requestS_delivered}){	
      $\recs \leftarrow \recs \cup \{ \join(p)^{\sigma} \}$ \label{algJ:update_tentative} \;
      $\apl \ \request \ \send(p, \mathit{Ack}(C_\i, r))$ \label{algJ:ack} \;
   }

   \vspace{0.7ex} 
    \Upon($\abeb \ \response \ \deliver ❪p, \RequestLeave^\sigma ❪r'❫ ❫$ where
   $r = r'$\label{algJ:requestL_delivered}){
	    $\recs \leftarrow \recs  \cup  \{ \leave(p)^{\sigma} \}$ \label{algJ:update_tentative2}\;
	    $\apl \ \request \ \send(p, \mathit{Ack}(C_\i, r))$ \label{algJ:ack2} \;
	}

   \vspace{0.7ex}  
   \Upon($\apl \ \response \ \deliver ❪\overline{p}, \mathit{Ack}❪C',$ $r' ❫❫$ where $|\{\overline{p}\}|  \geq  2  \times  f_\i + 1$ where
   $r = r'$ \label{algJ:stop-client-timer}){
	    stop $\clientTimer$ \;
   }

   \alglinenoStore

\end{algorithm}
\begin{algorithm}[t]
   \alglinenoRestore
   \algsize
    \caption{Dissemination}
    \label{alg:reconf1-dissem}
   \DontPrintSemicolon
   \SetKwBlock{When}{when received}{end}
   \SetKwBlock{Upon}{upon}{end}
   \SetKwBlock{Function}{function}{end}

   \Uses $:$ \;
   \ \ \ $\brd : \mathsf{Byzantine Reliable Dissemination}$ in $C_\i$
   
    \vspace{0.7ex} 
    \Function($\sendRecs$\label{algJ:send_collect_handler}){
      \ACommentL{Called by each replica before the end of stage 1.}      
      $\brd \ \request \ \broadcast (\Recs(r, \recs))$ \label{algJ:send_collect}
         
   }
   
   \vspace{0.7ex}  
	\Upon($\brd \ \response \ \deliver ❪ \overline{\Recs❪r', \recs❫}, \Sigma ❫^{\Sigma'}$ where $r'$ $=$ $r$ $ \land $ $\Sigma$ \mbox{ and } $\Sigma'$ \mbox{ are valid.}\label{algJ:deliver-collection}){
         append $\Reconfig(  \cup  \, \overline{\recs} )$ to $\operations_\i$ \label{algJ:insert}\;
         add $\Sigma$, $\Sigma'$ to $\certs$ \label{algJ:insert2} \;
   }

   \vspace{0.7ex}     
   \Upon(\mbox{$\brd \ \response \ \complain(p)$} \label{algJ:brd-complain-received}) {   
        $\call \ \complain(p)$ \label{algTOB:brd-call-complain-fun}\;
   }

   \alglinenoStore

\end{algorithm}

\textbf{Collection. \ }
As \autoref{alg:reconf1} presents,
when a client process (or replica) $p$ receives a $\join$ request (at \autoref{algJ:join_handler}),
it broadcasts $\RequestJoin$ messages in the local cluster
(at \autoref{algJ:request_states}).
In \autoref{fig:reconfiguration-a},
two replicas $p_{new}$ and $p_{new}'$ request to join.
Similarly, when a correct replica $p$ receives a $\leave$ request, 
it sends out $\RequestLeave$ messages.
The client uses the $\clientTimer$ to track progress while it waits for a response.
If the timer expires
(at \autoref{algJ:join_timer_expires}), 
it resends the messages, and resets the timer to a larger period.
When a correct replica delivers the $\RequestJoin$ message from $p$
(at \autoref{algJ:requestS_delivered}), 
it adds the reconfiguration request $\join(p)$ 
to its set of collected reconfigurations $\recs$,
and sends back an $\mathit{Ack}$ message
(at \autoref{algJ:update_tentative}-\ref{algJ:ack}).
The steps are similar for the $\RequestLeave$.
When the requesting replica receives $\mathit{Ack}$ messages with 
the same 
cluster members, and round
from a quorum 
(at \autoref{algJ:stop-client-timer}),
it learns that the request cannot be censored
by Byzantine replicas;
therefore,
it stops the timer.
In \autoref{fig:reconfiguration-a}, 
the two joining replicas stop the timer
when they receive $\mathit{Ack}$ from $3$ replicas.

\textbf{Dissemination. \ }
Before completing the first stage,
a correct replica calls $\sendRecs$
(\autoref{alg:reconf1-dissem} at \autoref{algJ:send_collect_handler})
that sends a $\Recs$ message 
containing the set of reconfiguration requests $\recs$ that it has collected
to the Byzantine Reliable Dissemination (BRD) module
(at \autoref{algJ:send_collect}).

BRD collects messages and disseminates them.
It eventually issues a response with
a set of collected reconfigurations $\overline{\recs}$
(at \autoref{algJ:deliver-collection}).
The delivery is accompanied by two certificates.
The certificate $\Sigma$ attests that $\overline{\recs}$
are collected from at least a quorum of replicas.
In the collection part,
a reconfiguration request was stored in at least a quorum of replicas.
If $\Sigma$ is valid,
then BRD has collected reconfigurations from at least a quorum of replicas.
Since 
there is a correct replica in the intersection of two quorums,
a Byzantine leader cannot censor the reconfiguration request.
The certificate $\Sigma'$ attests that a quorum of replicas voted to deliver the set;
therefore, correct replicas will eventually deliver the same set.
If the certificates are valid,
the receiving replica
appends the union of $\overline{\recs}$
to $\operations_\i$,
and the certificates to $\certs$
(at \autoref{algJ:deliver-collection}-\ref{algJ:insert2}).
The BRD module may complain if the leader does not lead delivery in a timely manner
(at \autoref{algJ:brd-complain-received}-\ref{algTOB:brd-call-complain-fun}).
The complaint is forwarded to the local leader election module $\lem$.

\textbf{Byzantine Reliable Dissemination. \ }
\label{sec:brd}
In this section, 
we present the Byzantine Reliable Dissemination (BRD) protocol
that we just used.
We present it as a general reusable module,
that is of independent interest.

\begin{algorithm}[t]
   \algsize
    
   \caption{BRD (1/2)}
   \label{alg:reconf2} 
	\DontPrintSemicolon
	\SetKwBlock{When}{when received}{end}
	\SetKwBlock{Upon}{upon}{end}  
   \SetKwBlock{Function}{function}{end}

   $\request : \broadcast (m), \ \newLeader(p, \ts)$ \;
   $\response : \deliver(\{ \overline{m} \}, \Sigma ), \ \complain(p)$ \;

   \Uses: \ \;
   \ \ \ $\apl: \mathsf{Authenticated Point 2 Point Link}$ \;
   \ \ \ $\abeb : \mathsf{Authenticated Best Effort Broadcast}$ \;

   \Vars: \ \;
   \ \ \ $(\leader, \ts) \leftarrow (p_{0}, 0)$ \;     
   \ \ \ $\mym \leftarrow \bot$ \;
   \ \ \ $\echoed, \readied, \delivered \leftarrow \false$ 
   \AComment{Tracking reliable delivery}  
   \ \ \ $\valid, \highValid
   \leftarrow \bot$ 
   \AComment{Validated set of requests}

   \ \ \ $q, M, \Sigma \leftarrow  \emptyset $ \AComment{Collected senders, messages, and signatures}
   \ \ \ $\timer \leftarrow \Delta$ \; 

   \Upon($\request \ \broadcast ❪m❫$\label{algJ:broadcast_init}){
      $\mym \leftarrow m$ \label{algJ:save_m} \;
      $\apl \ \request \ \send (\leader, \langle m, \ts \rangle )$ \label{algJ:send_m_leader} \;
      reset $\timer$ \label{algLC:set-timer} \;      
	}
   
   \Upon($\apl \ \response \ \deliver ❪p, \langle m, t \rangle ^\sigma❫$ $\mbox{where}$ $\self$ $=$ $\leader$ $ \land $ $t = \ts$\label{algJ:leader_converge-single}){
        $q \leftarrow q  \cup  \{p\} $ \label{algJ:add-sender}\;
        $M \leftarrow M  \cup  \{m\}$ \label{algJ:merge} \;
        $\Sigma \leftarrow \Sigma  \cup  \{ \sigma \}$ \label{algJ:cert_for_proposal} \;        
   }

   \Upon($|q|  \geq  2  \times  f + 1  \land  \highValid = \bot$\label{algJ:leader_converge}){
      $\abeb \ \request \ \broadcast(\Agg(M, \Sigma, \ts))$ \label{algJ:leader_rb}     
   }
   
   \Upon($\abeb \ \response \ \deliver ❪p, \Agg ❪ M, \Sigma, t ❫❫$ where $p=\leader$ $ \land $ $ t=\ts$ 
   $ \land $ $ \neg  \echoed$
   $ \land $ $\Sigma$ attests $M$
   $❪$\ie , $\Sigma$ has either 
   at least $2  \times  f + 1$ signatures for $M$,   
   at least $2  \times  f + 1$ $\echo❪M❫$ messages,
   or $f + 1$ $\ready❪M❫$ messages$❫$
   \label{algJ:proposal_delivered}){  
        $\echoed \leftarrow \true$ \label{algJ:record-echoed}\;
        $\abeb \ \request \ \broadcast( \echo( M, \ts ))$ \label{algJ:send_echo}\;
    }

    \vspace{0.7ex}     
    \Upon($\abeb  \ \response \ \deliver ❪\overline{p}, \echo❪ M, t❫ ^{\overline{\sigma}}❫$ where $|\{\overline{p}\}|  \geq  2 \times f + 1$ $ \land $ $t=\ts$ $ \land $ $ \neg \readied$ \label{algJ:echo_delivered}){
        $\readied \leftarrow \true$ \label{algJ:update_readied} \;
        $\abeb \ \request \ \broadcast(\ready(M, \ts))$ \label{algJ:send_ready}\;        
        $\valid \leftarrow  \langle  M, \overline{\sigma}, \ts  \rangle $ \label{algJ:update_rconfigcandidate1} \;        
    }
    \alglinenoStore
\end{algorithm}


\textit{Module. \ }
BRD accepts 
a $\broadcast (m)$ request from each replica.
It then collects and disseminates messages $m$.
It issues a response $\deliver( M, \Sigma )^{\Sigma'}$
where
$M$ is a set of messages,
and
$\Sigma$ and $\Sigma'$ are two sets of signatures.
The certificate $\Sigma$ attests that $M$ is a set of messages from a quorum of replicas,
and 
the certificate $\Sigma'$ attests that $M$ is the only delivered set, 
and every correct replica will eventually deliver it.
In our reconfiguration protocol, these certificate are sent to other clusters
as a proof of these properties for the dissemination in the current cluster. 
Further,
the component may issue a $\complain(p)$ event to complain about the current leader $p$,
and
accepts a $\newLeader(p, \ts)$ request to set a new leader $p$ with a timestamp $\ts$.
Leaders are elected with monotonically increasing timestamps.
BRD guarantees 
the following properties.
Integrity: 
A correct replica may only deliver messages 
from at least a quorum of replicas.
No duplication: 
Every correct replica delivers at most one set of messages.
Uniformity: 
No two correct replicas deliver different set of messages.
Termination:
If all correct replicas broadcast messages, then
every correct replica eventually delivers a set of messages.
Totality:
If a correct replica delivers a set of messages,
then all correct replicas deliver a set of messages.
Validity: 
If a correct replica delivers a set of messages containing $m$ from a correct sender $p$, 
then $m$ was broadcast by $p$.

\textit{Protocol. \ }
As \autoref{alg:reconf2} presents,
when a replica broadcasts a message
(at \autoref{algJ:broadcast_init}),
it stores it and sends it to the leader
(at \autoref{algJ:save_m}-\ref{algJ:send_m_leader}).
It also resets the timer to watch the leader
(at \autoref{algLC:set-timer}).
The leader
adds messages and the accompanying signatures
that it receives (at \autoref{algJ:leader_converge-single})
to the set of messages $M$ and signatures $\Sigma$
(at \autoref{algJ:add-sender}-\ref{algJ:cert_for_proposal}).
Once it collects
messages from a quorum
(at \autoref{algJ:leader_converge}),
it broadcasts 
an aggregation message $\Agg$ containing $M$ and $\Sigma$
(at \autoref{algJ:leader_rb}).
Massages carry the timestamp $\ts$ of the current $\leader$ as well;
any message with a stale timestamp is ignored.
Upon delivery of the aggregation
(at \autoref{algJ:proposal_delivered}),
a correct replica accepts it if $M$ is attested by accompanying signatures $\Sigma$.
The signatures $\Sigma$ 
attest $M$
if
they include at least a quorum of signatures 
for the messages $M$.
The signatures serve as a proof that
the $\leader$ has genuinely collected
messages from at least a quorum.
Thus,
the leader cannot drop the reconfiguration request of a replica that has reached out to at least a quorum.
For example in \autoref{fig:reconfiguration},
the quorum that $p_{new}'$ stored the request at, and 
the quorum that the leader $p_{2}$ receives requests from 
intersect in the correct replica $p_{3}$.
Even if leader $p_{2}$ is Byzantine, and 
sends the aggregated set to only a subset of replicas $\{p_{1}, p_{4}\}$,
it cannot drop reconfigurations from the aggregated set.

If the accepting replica hasn't sent the $\echo$ message,
it records (in the variable $\echoed$) that it is sending it,
and
broadcasts the $\echo$ message
(at \autoref{algJ:record-echoed}-\ref{algJ:send_echo}).
In \autoref{fig:reconfiguration-b}, the correct replicas $p_{1}$ and $p_{4}$ that 
receive
an attested set of messages from the leader echo it.
Upon delivery of an $\echo$ message from a quorum,
if the receiving replica has not sent $\ready$ messages
(at \autoref{algJ:echo_delivered}),
it records (in the variable $\readied$) that it is sending it,
and
then broadcasts a $\ready$ message
(at \autoref{algJ:update_readied}-\ref{algJ:send_ready}).
In \autoref{fig:reconfiguration-b}, 
replicas $p_{1}$ and $p_{4}$ receive a quorum of $3$ $\echo$ messages, 
and broadcast $\ready$.

\begin{algorithm}[t]
   \algsize
   \alglinenoRestore

   \caption{BRD (2/2)}
   \label{alg:reconf3} 
	\DontPrintSemicolon
	\SetKwBlock{When}{when received}{end}
	\SetKwBlock{Upon}{upon}{end}  
   \SetKwBlock{Function}{function}{end}
    
    \Upon($\abeb \ \response \ \deliver ❪\overline{p}, \ready ❪ M, t ❫^{\overline{\sigma}}❫$ where $|\{ \overline{p} \}|  \geq  f + 1$ $ \land $ $t=\ts$ $ \land $ $ \neg \readied$\label{algJ:f_ready_delivered}){
        $\readied \leftarrow \true$ \label{algJ:ready_set_true} \;
        $\abeb \ \request \ \broadcast(\ready(M, \ts))$ \label{algJ:amplify_ready}\;
        $\valid \leftarrow  \langle  M, \overline{\sigma}, \ts  \rangle $ \label{algJ:update_rconfigcandidate2}\;
    }
    
    \vspace{0.7ex} 
    \Upon($\abeb \ \response \ \deliver❪\overline{p}, \ready ❪M, t❫^{\overline{\sigma}}❫$ where $|\{\overline{p}\}|  \geq  2 \times f + 1$ $ \land $ $t=\ts$ $ \land $ $ \neg \delivered$\label{algJ:2f_ready_delivered}){
            $\delivered \leftarrow \true$\label{algJ:delivered_true} \;
            $\response$ $\deliver(M, \Sigma)^{\overline{\sigma}}$\label{algJ:issue_deliver} \;
            stop $\timer$ \label{algJ:stoptimer}
    }

    \vspace{0.7ex} 
    \Upon($\timer$ expires \label{algLC:timout}) {
      $\response \ \complain(\leader)$ \label{algLC:brc-complain}\;
   }

    \vspace{0.7ex} 
    \Upon($\request \ \newLeader ❪ p, t ❫$\label{algJ:brc-new-leader}){
      $(\leader, \ts) \leftarrow (p, t)$ \label{algLC:recod-new-leader} \;
      $\echoed, \readied \leftarrow \false$ \;      
      $\valid, \highValid \leftarrow \bot$ \;
      $q, M, \Sigma \leftarrow  \emptyset $ \;
      reset $\timer$ \label{algLC:set-timer2} \;
      \If {$\valid  \neq  \bot$}{
         $\apl \ \request \ \send (\leader, \Valid(\valid))$\label{algTOB:newleadercollect_send}
      }
      \Else {\If {$\mym  \neq  \bot$}{$\apl \ \request \ \send (\leader,  \langle \mym, \ts \rangle )$\label{algJ:send_collect-p}} }
   }

   \vspace{0.7ex} 
    \Upon($\apl \ \response \, \deliver❪p, \Valid❪ M, \Sigma, t ❫❫$ where $\self$ $=$ $\leader$ $ \land $ $\Sigma$ attests $M$ $❪$\ie, $\Sigma$ has at least $2  \times  f + 1$ $\echo❪M❫$ messages or $f + 1$ $\ready❪M❫$ messages$❫$\label{algTOB:newleadercollect_received}){
         $\llet \  \langle \_, \_, ht \rangle ≔ \highValid \ \lin$ \label{algTOB:high-valid1} \;
         \lIf{$t > ht$}{$\highValid \leftarrow  \langle M, \Sigma, t \rangle $ \label{algTOB:high-valid2}}
        $q ←q  \cup  \{p\} $ \label{algJ:add-sender2}\;         
    }

   \Upon($|q|  \geq  2  \times  f + 1  \land  \highValid  \neq  \bot$\label{algJ:leader_converge2}){
      $\llet \  \langle M, \Sigma, \_ \rangle ≔ \highValid \ \lin$ \;
      $\abeb \ \request \ \broadcast(\Agg(M, \Sigma, \ts))$ \label{algTOB:adopt_candidate}\;
   }    
    \alglinenoStore
\end{algorithm}

If the leader changes during the broadcast,
some correct replica might have delivered the aggregated messages
while others may have not.
Thus, 
to preserve the uniformity of delivered messages across replicas,
the new leader should retrieve the previously delivered messages,
and 
rebroadcast them.
Thus, when a replica accepts a sufficiently echoed set,
it stores it together with
its accompanying signatures,
as $\valid$
(at \autoref{algJ:update_rconfigcandidate1}),
and later forwards it to a new leader.
In \autoref{fig:reconfiguration-b}, 
$p_{1}$ and $p_{4}$ record a $\valid$ set at the end of the $\echo$ step.

When a replica receives at least $f + 1$ $\ready$ messages
(at \autoref{algJ:f_ready_delivered}),
at least one of them is correct and has received at least a quorum of $\echo$ messages.
Therefore, the replica trusts the $\ready$ message and amplified it:
it records that it is sending it,
and
broadcasts a $\ready$ message
(at \autoref{algJ:ready_set_true}-\ref{algJ:amplify_ready}).
It also records the received messages $M$ and 
signatures of the received $\ready$ messages
as $\valid$
(at \autoref{algJ:update_rconfigcandidate2}),
and later forwards it to a new leader.

Finally, when a replica receives a quorum of $\ready$ messages,
and it has not delivered the aggregated messages yet
(at \autoref{algJ:2f_ready_delivered}), 
it records (in the variable $\delivered$) that it is delivering,
delivers the aggregated messages $M$,
and stops the timer
(at \autoref{algJ:delivered_true}-\ref{algJ:stoptimer}).
If a replica does not deliver the aggregated messages before the timer times out,
it complains about the current leader
(at \autoref{algLC:timout}-\ref{algLC:brc-complain}).
In \autoref{fig:reconfiguration-b}, the correct replica $p_{1}$
receives a quorum of $3$ $\ready$ messages, and delivers the reconfigurations (at the black circle).
However, the other correct replicas don't receive enough $\ready$ messages, complain about the leader, and 
eventually change the leader to the correct replica $p_{3}$ (at the red circles).

To preserve uniformity, 
the new leader should retrieve the set of reconfigurations 
that have been previously delivered.
When a replica is informed of a new leader
(at \autoref{algJ:brc-new-leader}),
it records the new leader and timestamp,
resets the state and 
the timer
(at \autoref{algLC:recod-new-leader}-\ref{algLC:set-timer2}),
and then sends a message to the new leader to inform him about the current state of dissemination.
If a $\valid$ set of messages is recorded during the execution with the previous leaders, 
the replica sends it to the new leader
(at \autoref{algTOB:newleadercollect_send}).
Otherwise, 
it sends the message that it originally broadcast (\autoref{algJ:broadcast_init}-\ref{algJ:send_m_leader}) to the current leader
(at \autoref{algJ:send_collect-p}).
In \autoref{fig:reconfiguration-b}, 
the two replicas $p_{2}$ and $p_{3}$ 
send to the new leader
the set of reconfigurations that they had collected and sent to the previous leader.
However, 
$p_{4}$ 
has a $\valid$ set of reconfigurations and
sends them
to the leader.

Let $l$ be the latest leader with the timestamp $\ts$ that has guided the system to delivery of a set $M$ at a correct replica.
Consider the next leader $l'$ with the timestamp $\ts'$.
To preserve uniformity, 
$l'$ should adopt $M$.
In order to find $M$,
$l'$ waits to receive messages from a quorum of replicas,
and then picks the $\valid$ set with the largest timestamp.
Let us explain why.
The set $M$ was delivered only after a quorum of $\ready$ messages was received.
At least $f+1$ of the senders are correct.
A correct replica sends a $\ready$ message 
only after receiving $2  \times  f + 1$ $\echo$ messages, or $f+1$ $\ready$ messages.
In both of those cases,
the receiving replica stores $M$ with $ts$ as $valid$.
Thus, 
at least $f+1$ correct
replicas $P$ have stored $M$ with $ts$ as $valid$.
Therefore,
if $l'$ 
receives messages from a quorum ($2  \times  f + 1$) of replicas,
and retrieves any $valid$ sets,
then 
$M$ with the largest timestamp $ts$
is retrieved from at least one replica in $P$.
The leader $l'$ adopts and broadcasts $M$.
Even if it does not lead to any new delivery of $M$,
any $\valid$ set that is stored under his leadership
will have the same set $M$ with now the larger timestamp $\ts'$.

When the leader receives a $\valid$ set
(at \autoref{algTOB:newleadercollect_received}),
it checks that the accompanying signatures attest its validity:
there are at least
$2  \times  f + 1$ signatures of $\echo$ messages, or
$f+1$ signatures of $\ready$ messages.
The leader keeps the $\valid$ set with the highest timestamp
as $\highValid$
(at \autoref{algTOB:high-valid1}-\ref{algTOB:high-valid2}).
Finally, when the leader has collected messages from a quorum,
if it has received a $\valid$ set
(at \autoref{algJ:leader_converge2}),
it broadcasts $\highValid$
(at \autoref{algTOB:adopt_candidate}).
Otherwise, similar to the first leader (at \autoref{algJ:leader_converge}),
it broadcasts the aggregated messages.
In \autoref{fig:reconfiguration-b}, 
the new correct leader $p_{3}$ waits for $3$ messages,
adopts the $\valid$ set that $p_{4}$ sends,
goes through the $\echo$ and $\ready$ steps,
and makes the remaining correct replicas
$p_{3}$ and $p_{4}$ deliver the same set 
(at black circles).


\textbf{Correctness. \ }
We now state the correctness properties of the end-to-end protocol. All the proofs including the properties of sub-protocols are available in the extended report~\cite{reconfig} \autoref{sec:correctness-app}.

\label{sec:correctness}

%
\begin{theorem}[Validity]
\label{lem:general-validity-main}
\label{lem:general-validity}
Every operation that a correct process requests
is eventually executed by a correct process.

\end{theorem}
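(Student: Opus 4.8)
The plan is to reduce Validity to two obligations and discharge them separately. First, \emph{round progress}: after the global stabilization time (GST), every correct replica eventually completes all three stages of every round and hence enters the next one, so infinitely many rounds occur. Second, \emph{eventual inclusion}: a pending operation requested by a correct process is, in some round, part of the set of operations decided for that round. Granting both, the execution stage (stage 3) has every correct replica apply all operations received from all clusters in the predefined cluster order and issue responses; since each cluster has at least $2f+1>0$ correct replicas, the operation is executed by a correct process. So the real work is establishing the two obligations; I would treat reconfigurations and transactions uniformly for obligation one and split them for obligation two.

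For round progress I would walk the three stages after GST. Stage~1 has two parallel parts: local ordering, whose termination I take from the liveness of the underlying consensus once a correct cluster leader is installed; and reconfiguration dissemination, whose termination is precisely the \emph{Termination} and \emph{Totality} properties of the BRD module (\autoref{alg:reconf2}, \autoref{alg:reconf3}). Both rely on the local leader-election module $\lem$ eventually electing and keeping a correct leader: a Byzantine leader stalling either part triggers $\brd$'s $\complain$ (the $\timer$ in \autoref{alg:reconf3}) or the consensus' own complaint, which $\lem$ turns into a leader change; monotone leader timestamps plus the guard suppressing a fresh complaint within $\epsilon$ of a reset rule out infinite churn, so a correct leader eventually persists long enough to finish stage~1. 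For stage~2, if the leader of a remote cluster $C_j$ is correct it sends $\Inter$ to $f_j+1$ replicas of each cluster, so at least one correct replica relays $\Local$ and all correct replicas obtain $C_j$'s operations; if it withholds, the timers $\timer_j$ expire at correct replicas, $\LComplaint$ is broadcast, amplification at $f_\i+1$ and acceptance at $2f_\i+1$ signatures (\autoref{alg:het-remote-leader-change}) make all correct replicas of the complaining cluster accept the complaint while no Byzantine coalition alone can, so $f_\i+1$ replicas --- at least one correct --- send $\RComplaint$ to $f_j+1$ replicas of $C_j$, a correct replica there relays $\Complaint$, and $C_j$ advances its leader; iterating and using the no-churn argument, $C_j$ eventually has a correct leader that sends $\Inter$. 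Hence every correct replica eventually sees $|\mathsf{dom}(\operations)|=N$ and calls $\execute$; stage~3 is purely local and terminates, completing the round.

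For eventual inclusion I would split on the operation type. A transaction $t$ requested by a correct client is retransmitted on the client timer until executed; by round progress there are infinitely many rounds, and eventually one has a correct leader in the cluster of a correct replica that holds $t$; by liveness of the local consensus (which includes every pending request proposed under a correct leader) $t$ enters that round's decided operations, and is then executed as above. A reconfiguration $\join(p)$ requested by a correct $p$ (the $\leave$ case is symmetric): by the collection protocol (\autoref{alg:reconf1}) $p$ rebroadcasts $\RequestJoin$ until $2f_\i+1$ replicas reply $\mathit{Ack}$, so at least a quorum of replicas --- among them at least $f_\i+1$ correct ones --- hold $\join(p)^{\sigma}$ in $\recs$, and this persists until the request is applied. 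In each later round every correct replica disseminates its $\recs$ via $\brd$; by BRD \emph{Integrity} the delivered set is collected from a quorum, which intersects the above quorum in a correct replica, so it contains $\join(p)$, and by BRD \emph{Termination} and \emph{Uniformity} every correct replica delivers the same set, appends the corresponding $\Reconfig(\cdots)$ to $\operations_\i$, and stage~3 executes it, delivering the $\joined$ response.

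The main obstacle is stage~2's termination in the \emph{heterogeneous, reconfigurable} setting: the argument above uses the thresholds $f_j+1$ and $2f_j+1$ with the \emph{true} sizes $|C_j|$, but a stale view of a remote cluster's membership (the attacks sketched in \autoref{sec:overview}) would let a correct leader send $\Inter$ to too few replicas, all Byzantine, and be complained out forever, or let a Byzantine coalition forge complaints. So the crux is to establish that reconfigurations take effect uniformly at round boundaries --- every pair of correct replicas, possibly in different clusters, applies the same configuration in round $r$ --- which is exactly what the $\Reconfig$-via-BRD machinery and the certificates $\Sigma,\Sigma'$ carried on $\Inter$ messages provide. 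Once membership views are synchronized, the quorum arithmetic in \autoref{alg:inter-cluster} and \autoref{alg:het-remote-leader-change} is sound and the termination arguments close; formally this is the content of the agreement/uniformity lemmas for the sub-protocols, which I would cite here.
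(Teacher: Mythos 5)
Your proposal is correct, but it takes a noticeably different (and heavier) route than the paper. The paper's own proof of this theorem is three sentences: it treats the requested operation as one submitted to the local total-order broadcast, invokes the \emph{validity} of the underlying TOB to conclude that the operation is eventually delivered at correct processes of the local cluster and appended to $\operations_\i$, then invokes the Inter-cluster Broadcast Termination lemma (\autoref{lem:inter-termination-main}) to conclude that every correct process eventually receives $\Local$ messages from all clusters and calls $\execute$, at which point the operation is applied. All of the machinery you re-derive by hand --- remote leader change, BRD termination, eventual election of a correct local leader, the ``round progress'' argument --- is exactly the content the paper has already factored into \autoref{lem:validity1}, \autoref{lem:inter-termination-main}, and the BRD Termination lemma, so your first obligation amounts to reproving those lemmas inline rather than citing them. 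Your second obligation also splits off the reconfiguration case (join/leave), which the paper does not fold into this theorem's proof at all but handles separately as the Completeness lemma (\autoref{lem:reconfig-completeness-main}); your quorum-intersection argument for that case matches the paper's proof of that lemma. Finally, your closing worry about stale membership is legitimate and is resolved in the paper exactly as you suggest: the Uniformity lemma (\autoref{lem:reconfig-agreement-main}) is invoked once, before all the replication-system theorems, to fix a consistent configuration per round so that the quorum arithmetic is sound. In short, your argument buys self-containedness and explicit coverage of reconfiguration operations; the paper's buys brevity by delegating everything to the sub-protocol lemmas, and you would do well to cite those lemmas rather than re-derive them.
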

%
%
\begin{theorem}[Agreement]
\label{lem:general-termination-main}
\label{lem:general-termination}
If a correct process executes an operation in a round
then every correct process executes that operation in the same round.

\end{theorem}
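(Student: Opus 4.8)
\emph{Proof proposal.} The plan is to prove the statement by induction on the round number $r$, carrying along the stronger invariant that all correct processes \emph{enter} round $r$ with the same configuration $\langle C_1,\dots,C_N\rangle$ (hence the same membership sets $C_j$ and thresholds $f_j$); the base case is the common initial configuration. Granting the invariant for round $r$, I would reduce the claim to a single statement: \emph{every correct process computes the same map $\operations=\langle O_1,\dots,O_N\rangle$ before its round-$r$ call to $\execute$} (\autoref{alg:inter-cluster}, \autoref{algI:execution_request}). Indeed, the stage-3 ordering and execution is a deterministic function of this map (a predefined order on clusters composed with the local order inside each $O_j$), so once the map agrees, all correct processes execute exactly the same operations, in the same order, as part of round $r$; in particular the configuration they derive for round $r+1$ agrees, which closes the induction. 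Since the hypothesis gives a correct $p$ that executed $o$ in round $r$, we have $o\in O_j$ for some $j$ in $p$'s map, so agreement on $\operations$ yields that every correct process executes $o$, and it does so in round $r$ (all messages and handlers for $o$ carry the tag $r$, so $o$ can only be part of a round-$r$ execution).

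The first half is \emph{uniqueness of $O_j$ per cluster per round}. For the transaction part, a certified $\operations_j$ carries a certificate of $2f_j+1$ signatures from $C_j$; since $|C_j|=3f_j+1$ under the inductive invariant, any two such quorums intersect in at least $f_j+1$ replicas, hence in a correct one, which by agreement of the local consensus in $C_j$ certified at most one decided batch for round $r$ — so the certified batch is unique, even across intra-cluster leader changes. For the reconfiguration part, the appended set $\Reconfig(\bigcup\overline{\recs})$ (\autoref{alg:reconf1-dissem}, \autoref{algJ:deliver-collection}) comes with the BRD certificate $\Sigma'$ (a $(2f_j+1)$-quorum of $\ready$ signatures); by BRD's No-duplication and Uniformity a correct replica of $C_j$ delivers at most one set and no two deliver different sets, and the same quorum-intersection argument makes $\Sigma'$ unforgeable for a distinct set. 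Hence there is a well-defined $O_j$, and any correct process holding a validly certified round-$r$ $\operations_j$ holds $O_j$; in particular $p$'s map is exactly $\langle O_1,\dots,O_N\rangle$.

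The second half is \emph{totality}: every correct $q$ eventually obtains $O_j$ for all $j$ and reaches the round-$r$ $\execute$. For $q$'s own cluster this follows from termination of the local consensus there together with BRD's Termination and Totality, combined with the uniqueness above. For a remote cluster $j$, note that $p$ already holds a valid round-$r$ certificate for $O_j$, so $O_j$ and its certificate exist; it remains to show they reach every correct replica of every other cluster. If the current leader of $C_j$ is Byzantine and withholds $\Inter$ messages, the timers $\timer_j$ of the correct replicas that lack $\operations_j$ expire, and the heterogeneous remote-leader-change protocol (\autoref{alg:het-remote-leader-change}) kicks in: it cannot be triggered spuriously, because a $\RComplaint$ is accepted only with a $2f+1$-quorum of signatures from the complaining cluster and carries a monotone complaint number $\rcn$ that defeats replay, so after finitely many changes a correct leader is installed in $C_j$; a correct leader of $C_j$ sends $\Inter(r,j,O_j,\Sigma)$ to $f_\i+1$ replicas of the receiving cluster, at least one of which is correct and $\abeb$-broadcasts the matching $\Local$ message, whence by $\abeb$'s validity every correct replica of that cluster sets $\operations_j=O_j$. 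Thus every correct process eventually has $|\mathsf{dom}(\operations)|=N$ and invokes $\execute$ for round $r$. I expect the main obstacle to be precisely this last step under a coordinated adversary: one must argue that the heterogeneous thresholds $f_j$ used by every correct replica really are the agreed round-$r$ values (this is exactly the inductive invariant, and the reason reconfigurations are applied uniformly one round in advance), that only finitely many remote leader changes occur — possibly for several clusters at once, which is what the $\epsilon$-guard on re-complaining addresses — that such changes never make $C_j$ certify a fresh, different $O_j$ (relying on the local consensus's agreement across its own leader changes), and that BRD's Uniformity and Totality survive its internal leader changes (which is where the $\valid$/$\highValid$ retrieval mechanism of \autoref{alg:reconf3} is needed). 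The rest is bookkeeping, and the $\join$/$\leave$ case is identical once the $\Reconfig$ component of $O_j$ is pinned down as above.
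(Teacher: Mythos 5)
Your proposal is correct and follows essentially the same route as the paper: the paper also fixes the per-round configuration via the reconfiguration Uniformity lemma, and your ``uniqueness of $O_j$'' and ``totality'' halves are exactly its Inter-cluster Broadcast Agreement and Termination lemmas (the latter resting on remote leader change / Eventual Succession), combined with TOB agreement/termination and BRD Uniformity/Termination. The only difference is organizational — you inline these sub-lemmas into one induction on rounds, while the paper cites them separately and case-splits on whether $o$ is local or remote to the executing process.
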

%
\begin{theorem}[Total-order]
\label{lem:general-total-order-main}
\label{lem:general-total-order}
For every pair of operations $o$ and $o'$,
if a correct process executes only $o$, or executes $o$ before $o'$,
then every correct process executes $o'$ only after $o$.

\end{theorem}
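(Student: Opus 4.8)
The plan is to establish total order by reducing it to two ingredients: (i) within a single round $r$, all correct processes execute exactly the same sequence of operations, and (ii) operations from round $r$ are executed strictly before operations from round $r+1$. Granting these, if a correct process executes $o$ before $o'$, either the two lie in the same round — in which case agreement on the per-round sequence (ingredient (i)) forces every correct process to use that same sequence and hence also execute $o$ before $o'$ — or $o$ lies in an earlier round than $o'$, in which case ingredient (ii) plus the fact (from the Agreement theorem) that every correct process executes $o$ in its own round and $o'$ in its own round gives the claim. The case split also has to cover "executes only $o$": this is handled because round-by-round progression is lockstep in the logical-round sense, so if some correct process never reaches the round containing $o'$... actually this cannot happen for an operation that any correct process executed, by Agreement; so "executes only $o$" really means $o'$ was never requested-and-committed at all, and the statement is vacuous for it — I would phrase this carefully.

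The key steps, in order, would be: first, recall from the construction of Stage 3 (\autoref{algI:execution_request} and the $\execute$ routine) that each replica, upon assembling $\operations_j$ for all $N$ clusters, sorts the per-cluster operation lists by the fixed predefined cluster order $C_1, \dots, C_N$ and executes that concatenation; so the only thing to prove is that the per-cluster lists themselves agree across correct replicas. Second, for transactions this follows from the safety (agreement) property of the underlying local consensus inside each cluster together with the validity of the certificates $\certs$ checked at \autoref{algI:inter_received} and \autoref{algI:local_handler} — a certificate carrying $2 f_{j'} + 1$ signatures from $C_{j'}$ pins down a unique ordered batch, since any two such certificates for round $r$ share a correct signer who signed only one batch. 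Third, for the reconfiguration component, invoke the Uniformity property of BRD (stated in \autoref{sec:brd}): no two correct replicas deliver different sets, so $\Reconfig(\cup\,\overline{\recs})$ appended at \autoref{algJ:insert} is identical across correct replicas, and it is appended at a fixed position in $\operations_\i$. Fourth, combine: since each of the $N$ cluster-slices of $\operations$ is identical across correct replicas and the concatenation order is fixed, the executed sequence in round $r$ is identical. Fifth, argue the cross-round part: a correct replica only increments $r$ and begins Stage 1 of round $r+1$ after executing all of round $r$'s operations, so round-$r$ operations precede round-$(r+1)$ operations in every correct replica's execution history; chaining over rounds gives a global sequence consistent with per-round execution order.

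The main obstacle is the interaction between remote leader change and the uniqueness of the per-cluster batch: I must rule out the scenario where a Byzantine remote leader, or a sequence of leader changes in cluster $C_{j'}$, causes two different correct replicas in a consuming cluster to adopt different $\operations_{j'}$ for the same round $r$. The argument is that any accepted $\Local$ message must carry a valid certificate with $2 f_{j'}+1$ signatures from $C_{j'}$, and within round $r$ the local-consensus safety inside $C_{j'}$ guarantees at most one batch can accumulate a correct signer's signature; for the reconfiguration slice the analogous guarantee is exactly BRD Uniformity, which already accounts for mid-stream leader changes via the $\highValid$ mechanism. I would also need to confirm that the fixed cluster order $C_1,\dots,C_N$ is itself stable under reconfiguration — i.e., that joins and leaves change the membership sets $C_j$ but not the number or indexing of clusters within a round — which is built into the model (reconfiguration affects who is in a cluster, not the cluster topology). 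Once these uniqueness facts are in hand, the assembly of the total order is a routine concatenation argument, so I would keep that part brief and spend the bulk of the proof on the per-cluster uniqueness lemmas and their reliance on \autoref{lem:general-termination}.
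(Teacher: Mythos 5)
Your proposal is correct and follows essentially the same route as the paper: per-round identity of the executed sequence (per-cluster agreement via the Agreement theorem, certificates, and BRD uniformity, concatenated in the fixed cluster order with the underlying TOB order inside each cluster), plus sequential rounds for the cross-round case. The only cosmetic difference is the ``executes only $o$'' case, which the paper resolves by a contradiction with the same-order claim using Agreement, while you render it vacuous via Agreement's contrapositive --- both rest on \autoref{lem:general-termination-main}.
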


\vspace{-1mm}

\section{Experimental Results}

\textbf{Implementation. \ }
The clustered replication protocol is parametric with respect to the local replication protocol.
We instantiated it for
both HotStuff \cite{yin2019hotstuff} 
and BFTSmart \cite{bessani2014state}
as the local replication protocol to implement 
replicated systems that 
we call \textsc{Hamava} (and \textsc{Ava} for short).
We refer to the two as 
\textsc{Ava-Hotstuff} (A.H) and \textsc{Ava-Bftsmart} (A.B).
We will release all the code and workloads as open source software.

\textbf{Questions.}
We perform experiments to answer the following questions:
   (E0-E2):
   How does clustered replication impact performance?
    (E3) What is the impact of introducing heterogeneity in the clusters on the performance of clustered replication? 
    (E4) What is the impact of failures on performance?
    We are especially interested in leader failures.
   (E5) What is the impact of reconfiguration requests on performance?

\textbf{Platform. \ }
We used Google cloud compute to deploy instances acting as servers and clients in our system. 
Each instance runs Ubuntu Server 22.04 LTS, and has a 2 core processor with 16GB of main memory.
We deploy our framework globally on nodes across 3 Google compute regions, namely US (us-west1-b), Asia (asia-south1-c) and Europe (europe-west3-b). The inter-region network latency is
presented in table \ref{tab:LatencyRegions}. 
For both \textsc{Ava-Hotstuff} and \textsc{Ava-Bftsmart},
we choose the YCSB benchmark with a $85\%$ read and $15\%$ write ratio. 
We deployed one client per cluster with multiple threads that issued its requests 
with the Zipfian distribution one after the other without any delays.
We batched transactions (to batches of size 100) in each round.
We issued operations of size 1KB. 
All experiments were run for 3 minutes and the results were taken from the last minute.

\begin{table}[t!]
\vspace{-2mm}
\centering
\begin{tabular}{|llll|} 
\hline
\multicolumn{1}{|l|}{ms} & \multicolumn{1}{l|}{US} & \multicolumn{1}{l|}{EU} & \multicolumn{1}{l|}{Asia }  \\ \hline
US                                 & 0                       & 148                   & 214     \\
EU                                 & 148                   & 0                       & 134     \\
Asia                               & 214                   & 134                  & 0          \\            \hline                      
\end{tabular}
\caption{Inter-region round-trip latency for three regions: US (us-west1-b), EU (europe-west3-c), Asia (asia-south1-c).}
\vspace{-5mm}
\label{tab:LatencyRegions}
\end{table} 

\begin{figure*}
\vspace{-8mm}
\centering
\includegraphics[width=0.255\linewidth]{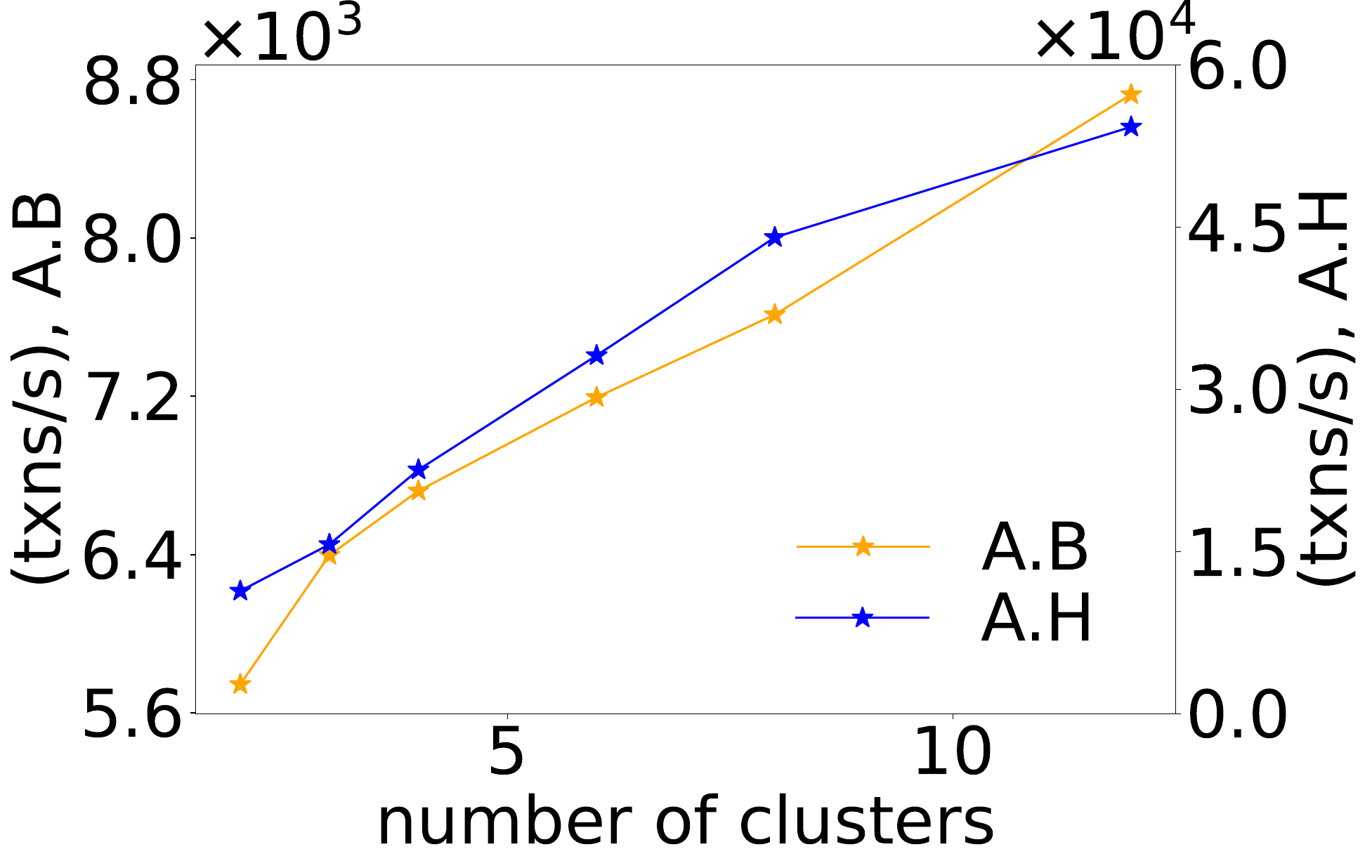} 
\includegraphics[width=0.225\linewidth]{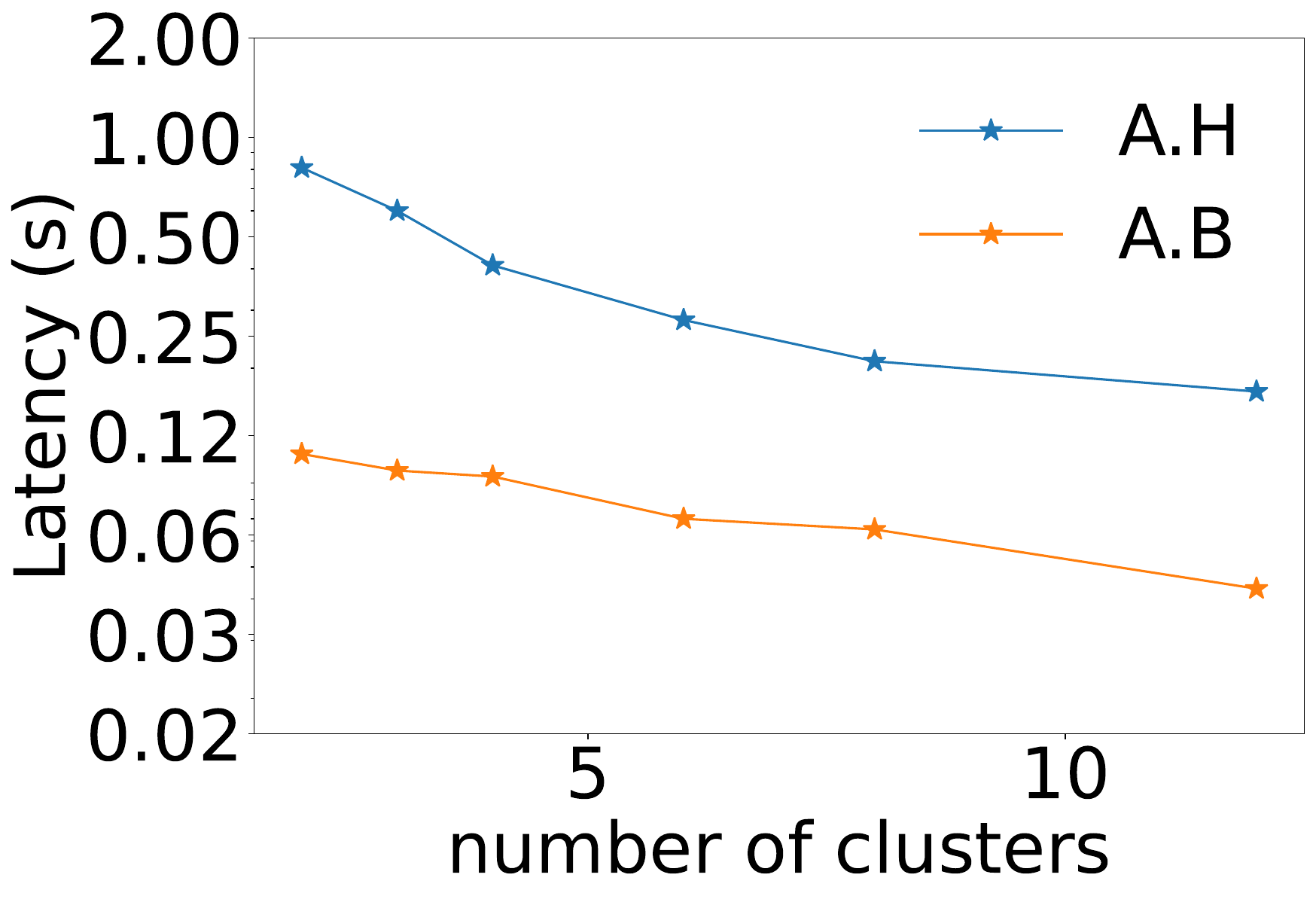} 
\includegraphics[width=0.255\linewidth]{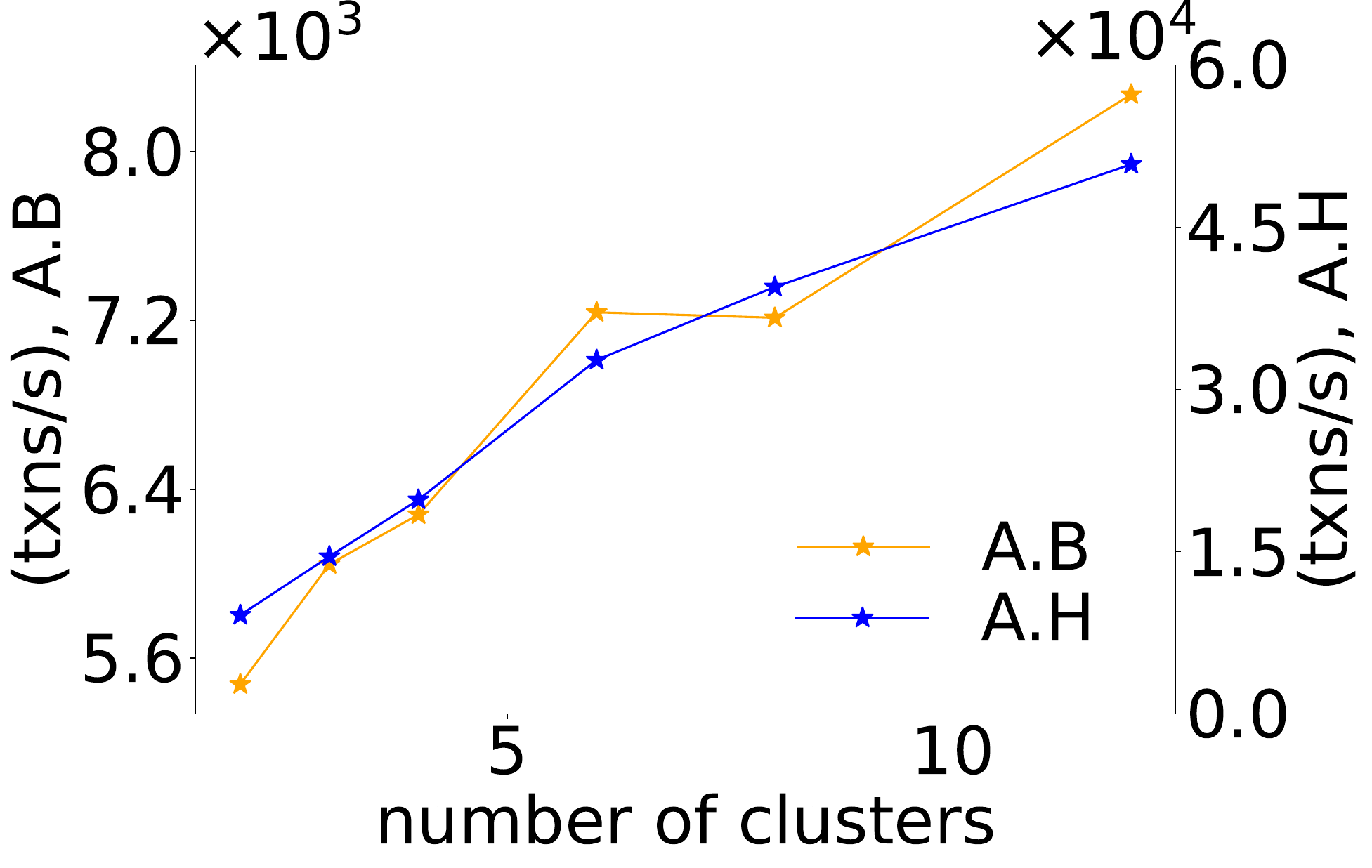}
\includegraphics[width=0.225\linewidth]{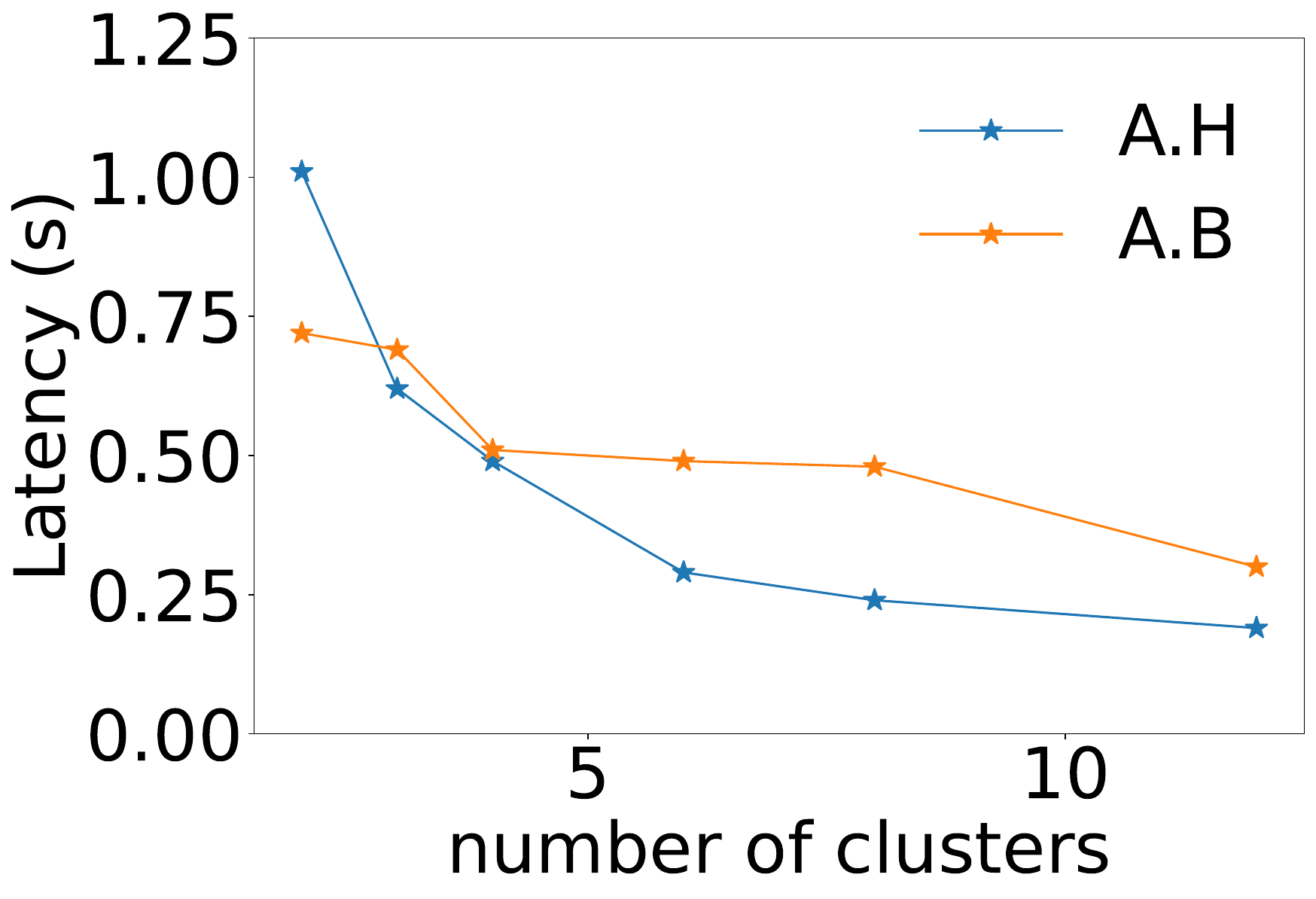}
\caption{Throughput and latency as a function of number of clusters (with 96 nodes) E0. in the same region (two left plots) and E1. across regions (two right plots). 
(In the throughput plot, the left y-axis is for \textsc{Ava-BftSmart} and the right y-axis is for \textsc{Ava-Hotstuff}).
\vspace{-7mm}
}
\label{fig:Throughput}
\label{fig:ThroughputClusterGeo}
\end{figure*}

\textbf{E0. Multi-cluster Single-region. \ }
We investigate the impact of multi-cluster deployment in one Google region on throughput and latency. 
We keep the total number of nodes constant (96),
and divide them to different number of clusters (2, 3, 4, 6, 8, 10, 12).
\autoref{fig:Throughput} reports the effect of the number of clusters on throughput and latency. (In the throughput plot, the left y-axis is for \textsc{Ava-BftSmart} and the right y-axis is for \textsc{Ava-Hotstuff}).
\textit{Assessment. \ }
We observe that
as the number of clusters increases, the throughput of both \textsc{Ava-Hotstuff} and \textsc{Ava-BftSmart} increases.
\textsc{Ava-Hotstuff} exhibits higher throughput than \textsc{Ava-BftSmart}.
We observe that as the number of clusters increase, the latency decreases for both \textsc{Ava-Hotstuff}  and \textsc{Ava-BftSmart}.
\textsc{Ava-BftSmart} exhibits lower latency than \textsc{Ava-Hotstuff}.
As the number of clusters increase,
each cluster has fewer nodes, and local replication is more efficient,
and further, clusters execute the divided workload in parallel.
Thus, the throughput and latency of local replication is improved
that in turn improves the end-to-end throughput and latency.
The two \textsc{Hamava} implementations 
outperform non-clustered replication 
for both throughput and latency.

\textbf{E1. Multi-cluster Multi-region. \ }
We study the impact of deploying clusters in multiple Google regions on the throughput and latency
in \autoref{fig:ThroughputClusterGeo}.
We equally split 96 nodes into different number of clusters (2, 3, 4, 6, 8, 12),
and host them on 3 regions.
A cluster is completely hosted on a single region.
For example, for the 4 clusters setup,
we divide the 96 nodes into 4 clusters of 24 nodes where the first region hosts two clusters, and the second and third regions host one cluster each.
\textit{Assessment. \ }
Similar to the previous experiment,
as the number of clusters increase,
the throughput increases and the latency decreases for both systems.
Similarly, since 
the number of nodes per cluster decreases, and
the workload is divided between clusters,
the throughput and latency are improved.
However, overall performance is lower than the previous experiment
since inter-cluster communication across regions is slower than within one region.
We observe that with multiple regions,
\textsc{Ava-Hotstuff} and \textsc{Ava-Bftsmart} clustered replication still outperform non-clustered replication 
for both throughput and latency.

\textbf{E2. Latency Breakdown. \ }
In \autoref{fig:LatencyBreakdown},
we show the latency breakdown for processing  transactions.
We report the average latency for read and write transactions.
Read transaction have lower latency than writer transactions
since the former can be immediately processed but the latter go through three stages.
We experiment with
3 clusters each containing 4 nodes
in three setups where
clusters span one (Asia), two (EU and Asia), and three (EU, Asia, US) regions.
With one region, the bottleneck is the local ordering as it involves 4 rounds of messages.
On the other hand, the inter-cluster broadcast that involves one round of messages is relatively a smaller part.
With two regions, the latency is dominated by the inter-cluster broadcast when messages have to travel across regions.
With three regions, the inter-cluster broadcast is still the dominating part, and is further increased. 
As shown in \autoref{tab:LatencyRegions},
the round-trip time for EU and Asia is about 134, but when US is added, it is about 214.
Thus, it is crucial to minimize cross-region messaging as our clustered protocol does.

\textbf{E3. Heterogeneity in Clusters. \ }
We investigate the impact of heterogeneity on throughput and latency
for \textsc{Ava-Hotstuff}
in \autoref{fig:optClusters} and \ref{fig:optClusters2},
and 
for \textsc{Ava-BftSmart}
in \ref{fig:optClusters_bft} and \ref{fig:optClusters_bft2}.
Consider 9 nodes in Asia (ap-south-1) and 5 nodes in EU (eu-central-1) regions.
We consider a scale factor $s$ of these numbers varying from $1$ to $5$.
For example, with scale factor $2$, we have $2 \times 9 = 18$ nodes in Asia, and $2 \times 5 = 10$ nodes in EU.
For each scale, we consider 3 setups:
(1) Equal sized clusters. $C_1$: $7$ in Asia. $C_2$: $2$ in Asia and $5$ in EU.
(2) Partition based on region. $C_1$: $9$ in Asia. $C_2$: $5$ in EU.
(3) Partition based on region, and within region. $C_1$: $5$ in Asia. $C_2$: $4$ in Asia. $C_3$: $5$ in EU.
In contrast to previous works, \textsc{Hamava} supports the heterogeneous setups 2 and 3.
\textit{Assessment. \ }
For both \textsc{Ava-Hotstuff} and \textsc{Ava-BftSmart},
setup 2 
exhibits higher throughput and lower latency than setup 1
especially at higher scales.
The setup 2 exploits heterogeneity to
host all members of each cluster in the same region.
Therefore, it deceases the the cost of local replication.
Similarly, 
setup 3
exhibits higher throughput and lower latency than setup 2 at higher scales.
The setup 3 splits a cluster into two smaller clusters in the same region.
Therefore, it further deceases the the cost of local replication.
Further, 
the general trend is that 
throughput and latency 
are better
at lower scale factors,
since 
they have lower cost of local replication.

\textbf{E4. Failures. \ }
We investigate the impact of failures by measuring the performance for 2 clusters with 10 nodes per cluster ($f_1 = f_2 =3$) under three failure scenarios:

\textit{(1) Up to $f$ non-leader failures.}
In \autoref{fig:MultipleNonLeaderFailure},
we test the resiliency of both \textsc{Hamava} systems 
by failing up to $f$ non-leader nodes in each cluster.
The vertical lines show the failure time.
The system 
tolerates the failures, and remains functional.
After the recovery,
the throughput can slightly increase
since local replication is more efficient with fewer number of nodes.

\textit{(2) Leader Failure.}
In \autoref{fig:SingleLeaderFailure},
we fail the leader of a cluster.
After a short window, 
the leader is properly changed, and
the throughput is recovered to the same level.
The timeout for leader change can be adjusted according to the local network latency.
This experiment set it to 20 second; 
thus, the window to complete the leader change
is slightly more than 20 seconds.

\textit{(3) Byzantine Leader and Remote Leader Change.}
We inject Byzantine behavior into leaders to trigger remote leader change.
We make the leader replica complete the first stage within its cluster as a correct leader, but avoid sending inter-cluster broadcast messages. 
As we can see in \autoref{fig:SingleLeaderFailureRVC},
after a short period,
the leader is properly changed, and the throughput comes back up.
The 20 seconds period is the adjustable timeout for multi-cluster message-passing.

\captionsetup[sub]{font=footnotesize,labelfont={bf,sf}}
\begin{figure*}
\vspace{-7mm}

\captionsetup[subfigure]{justification=centering}
\begin{subfigure}{.27\textwidth}
\includegraphics[width=\linewidth]{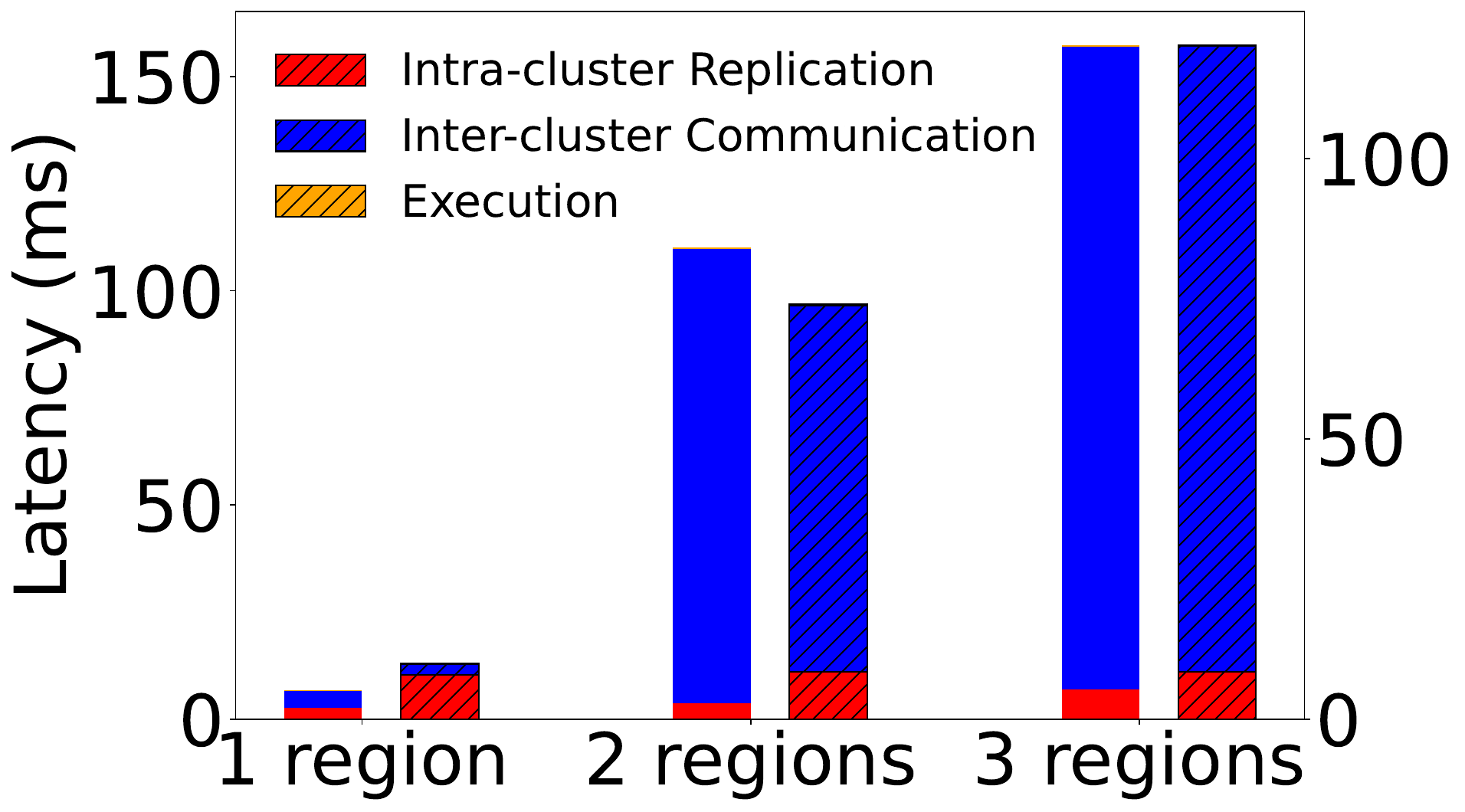} 
\caption{E2}
\label{fig:LatencyBreakdown}
\end{subfigure}
\begin{subfigure}{.225\textwidth}
\includegraphics[width=\linewidth]{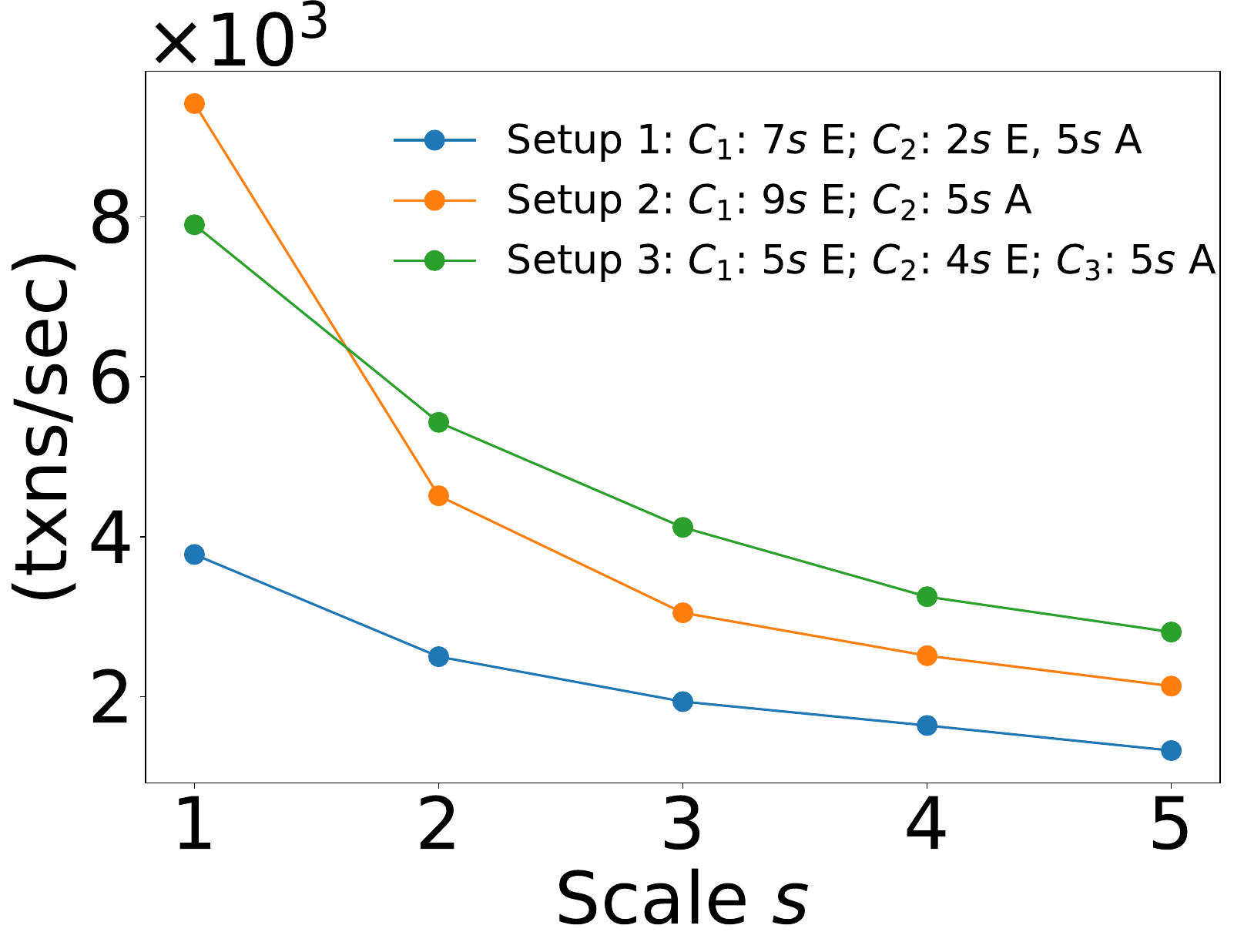} 
\caption{E3. \textsc{Ava-Hotstuff}}
\label{fig:optClusters}
\end{subfigure}
\begin{subfigure}{.245\textwidth}
\includegraphics[width=\linewidth]{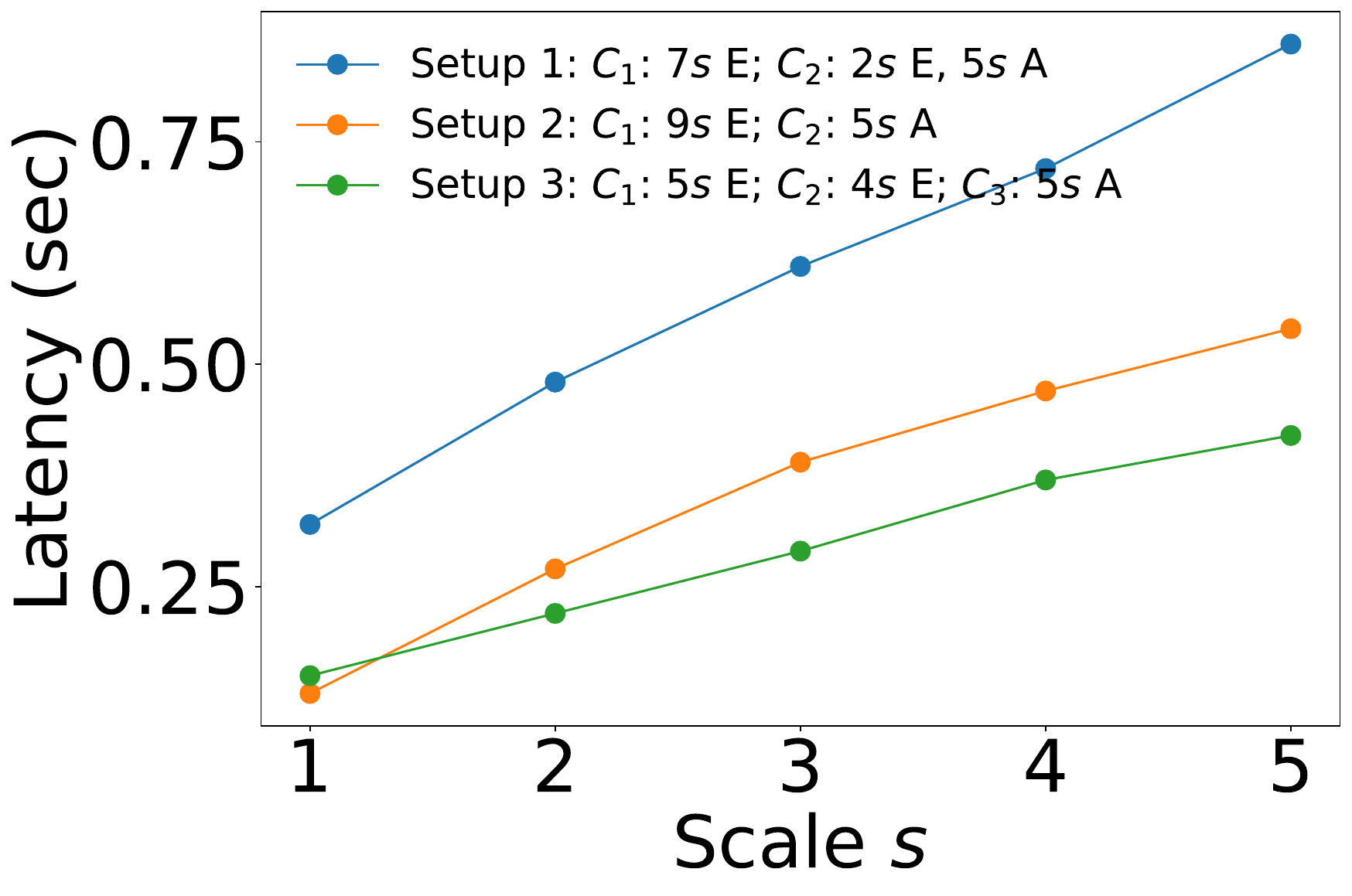} 
\caption{E3. \textsc{Ava-Hotstuff}}
\label{fig:optClusters2}
\end{subfigure}
\begin{subfigure}{.225\textwidth}
\includegraphics[width=\linewidth]{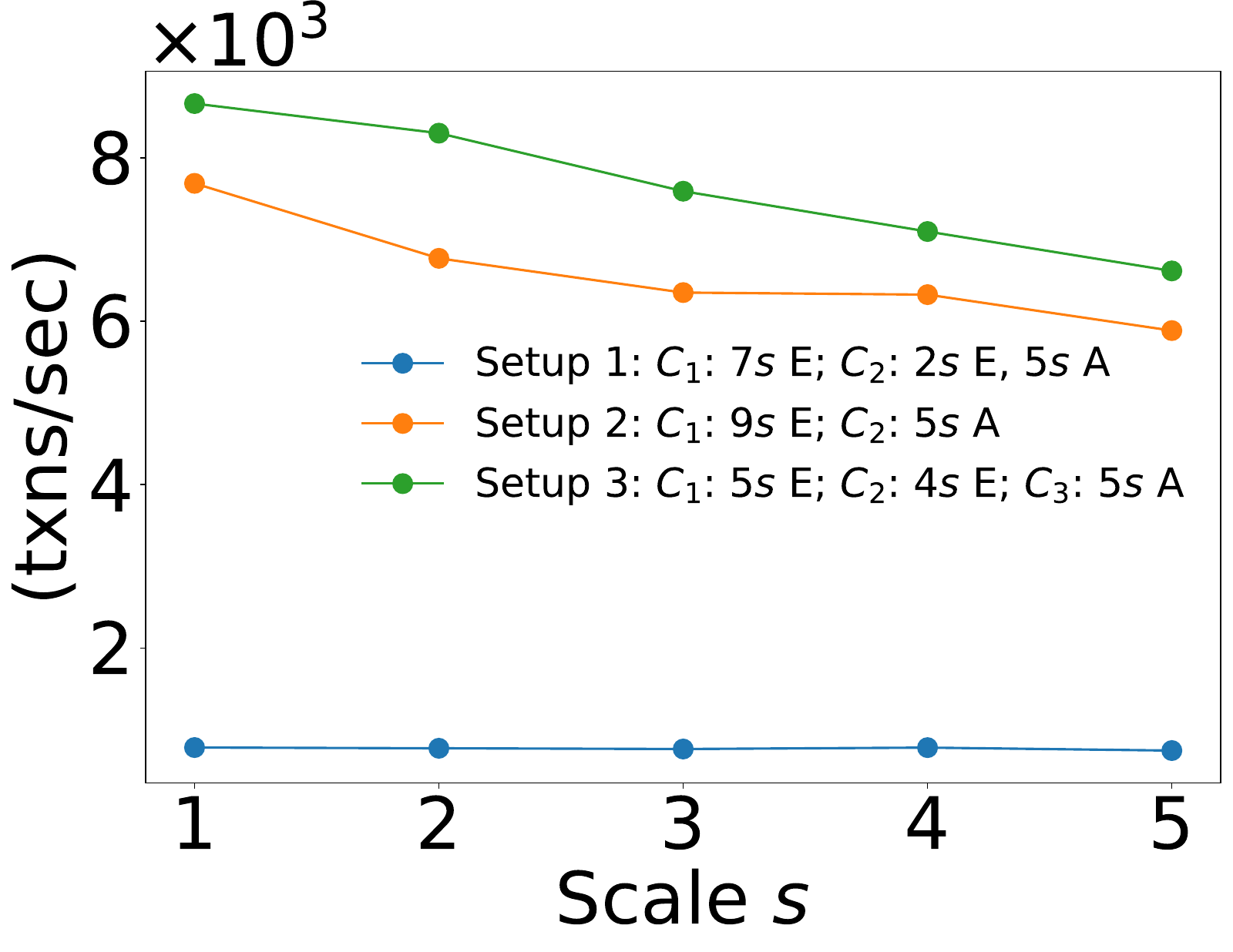} 
\caption{E3. \textsc{Ava-BftSmart}}
\label{fig:optClusters_bft}
\end{subfigure}

\begin{subfigure}{.24\textwidth}
\includegraphics[width=\linewidth]{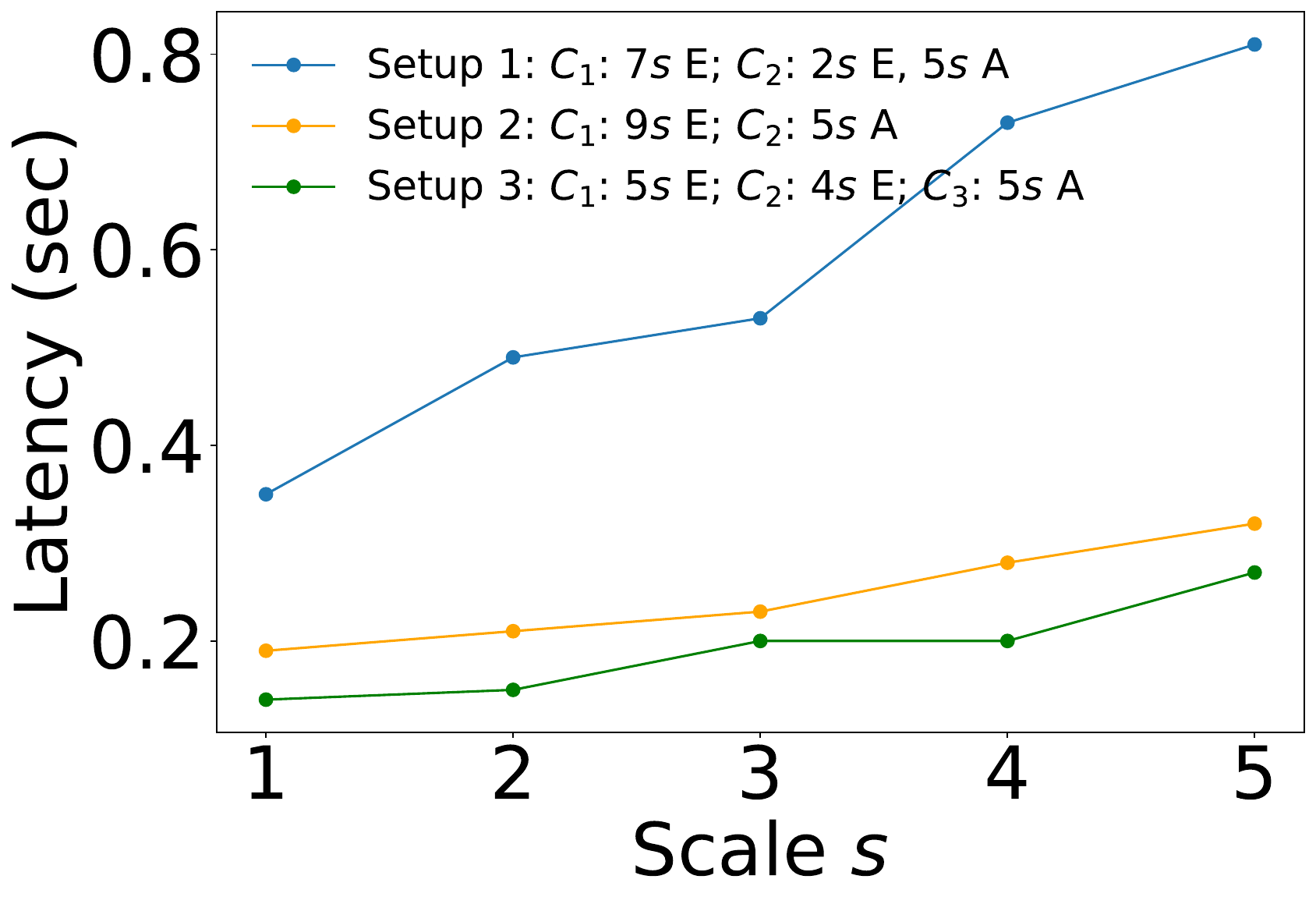} 
\caption{E3. \textsc{Ava-BftSmart}}
\label{fig:optClusters_bft2}
\end{subfigure}
\begin{subfigure}{.24\textwidth}
\includegraphics[width=\linewidth]{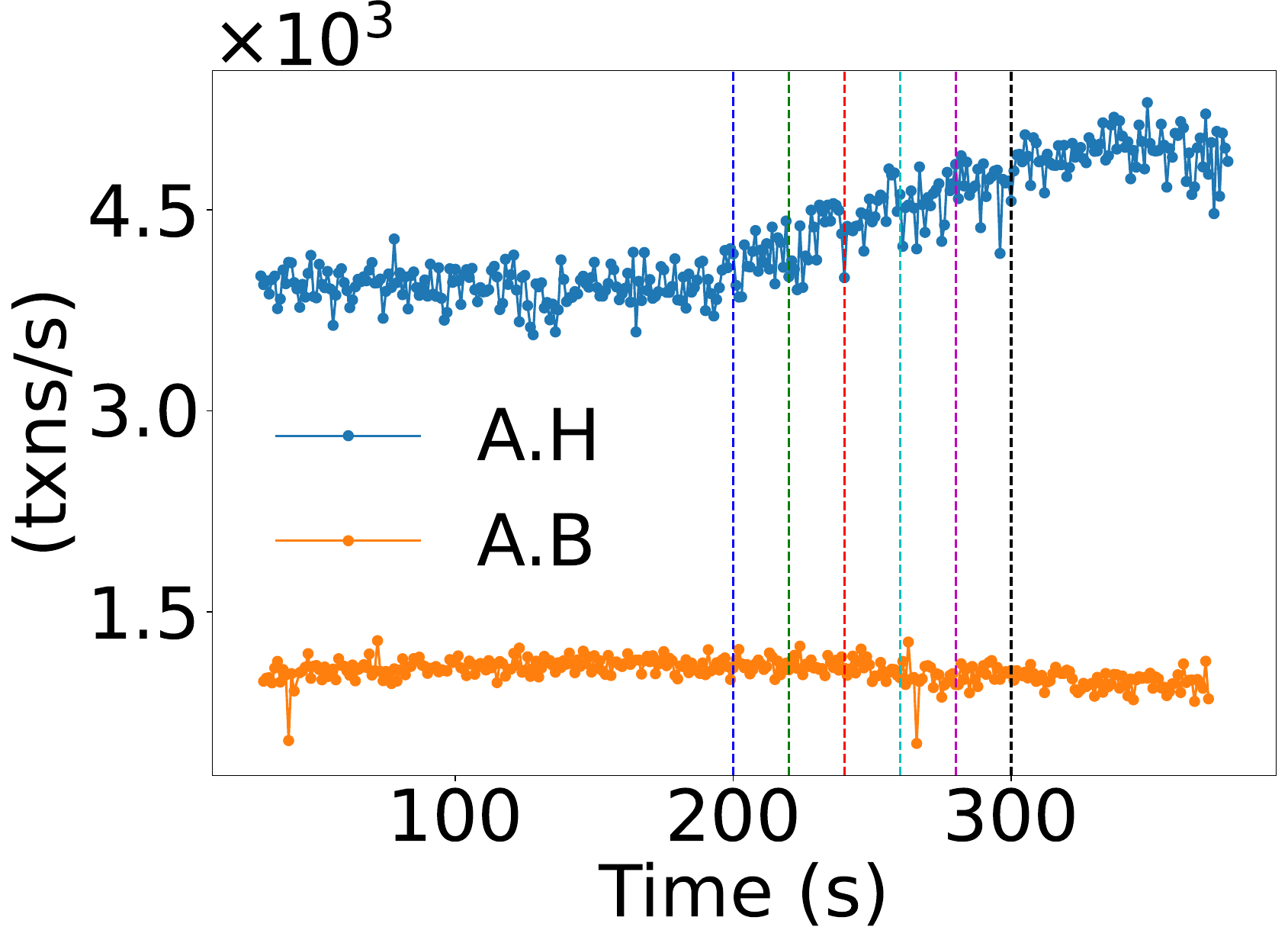}
\caption{E4.1}
\label{fig:MultipleNonLeaderFailure}
\end{subfigure}
\begin{subfigure}{.24\textwidth}
\includegraphics[width=\linewidth]{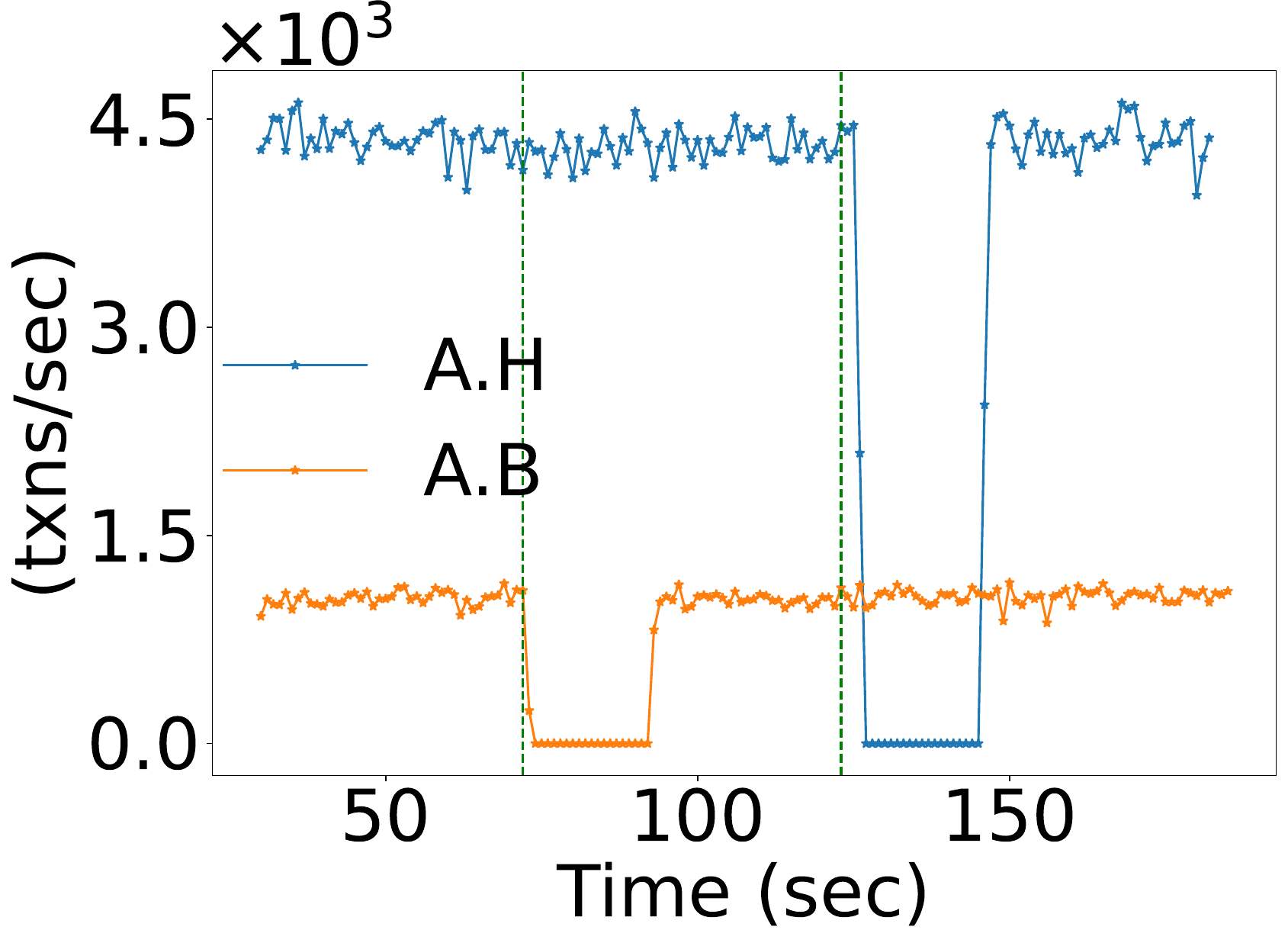} 
\caption{E4.2}
\label{fig:SingleLeaderFailure}
\end{subfigure}
\begin{subfigure}{.24\textwidth}
\includegraphics[width=\linewidth]{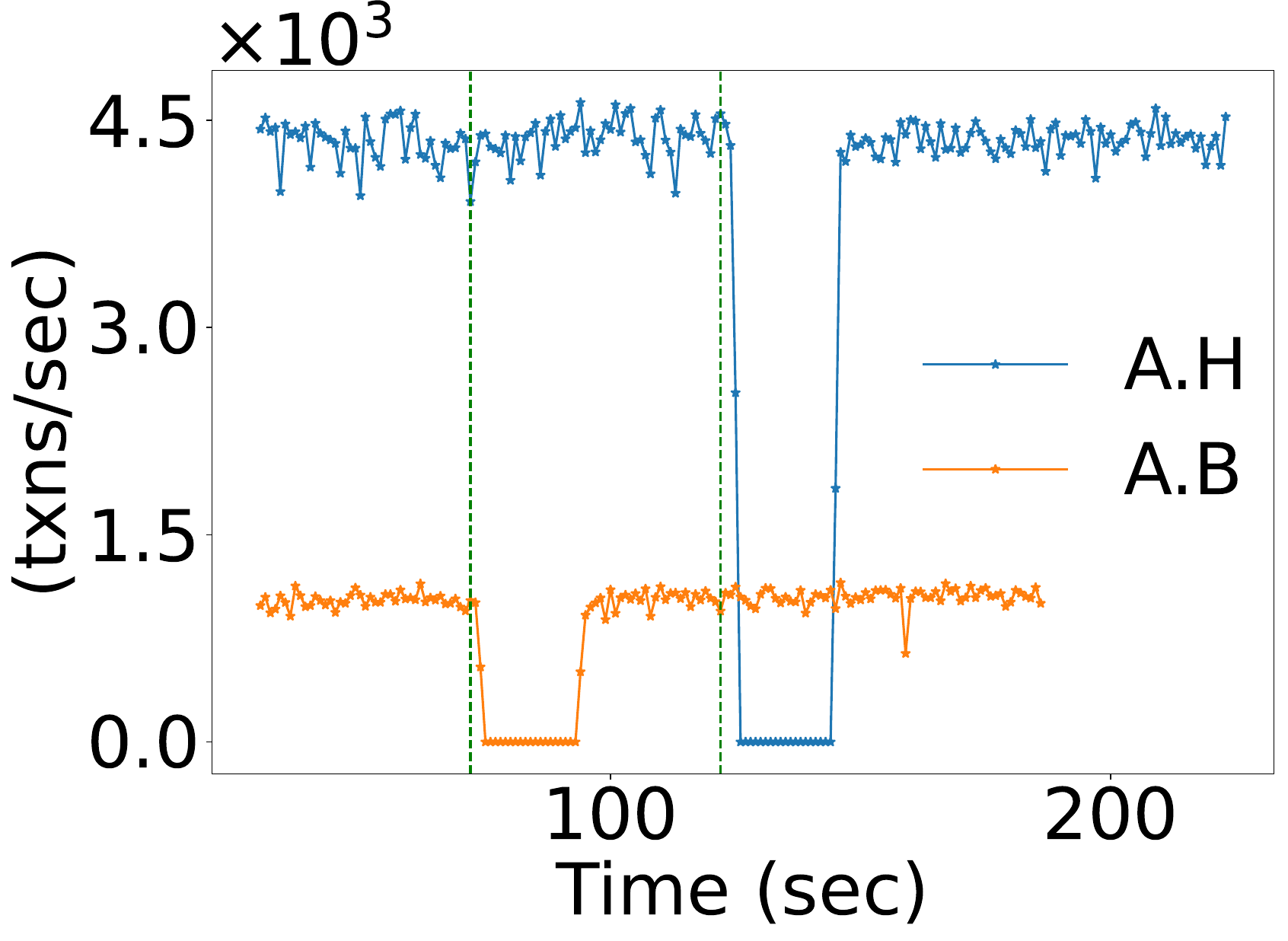} 
\caption{E4.3}
\label{fig:SingleLeaderFailureRVC}
\end{subfigure}

\caption{
E2. Latency breakdown for 
\textsc{Ava-BftSmart} (left)
and
\textsc{Ava-Hotstuff} (right and shaded) in (a).
1 region: Asia, 2 regions: EU and Asia, and 3 regions: EU, Asia, US.
E3. Impact of heterogeneity on throughput and latency for \textsc{Ava-Hotstuff} in (b) and (c) and for \textsc{Ava-BftSmart} in (d) and (e).
E4. 
1. The Impact of multiple non-leader failure on throughput in (f).
2. The Impact of leader failure on throughput in (g).
3. The impact of remote leader change on throughput in (h).
}
\end{figure*}

\begin{figure}
\vspace{-4mm}  
\begin{subfigure}{.235\textwidth}
\includegraphics[width=\linewidth]{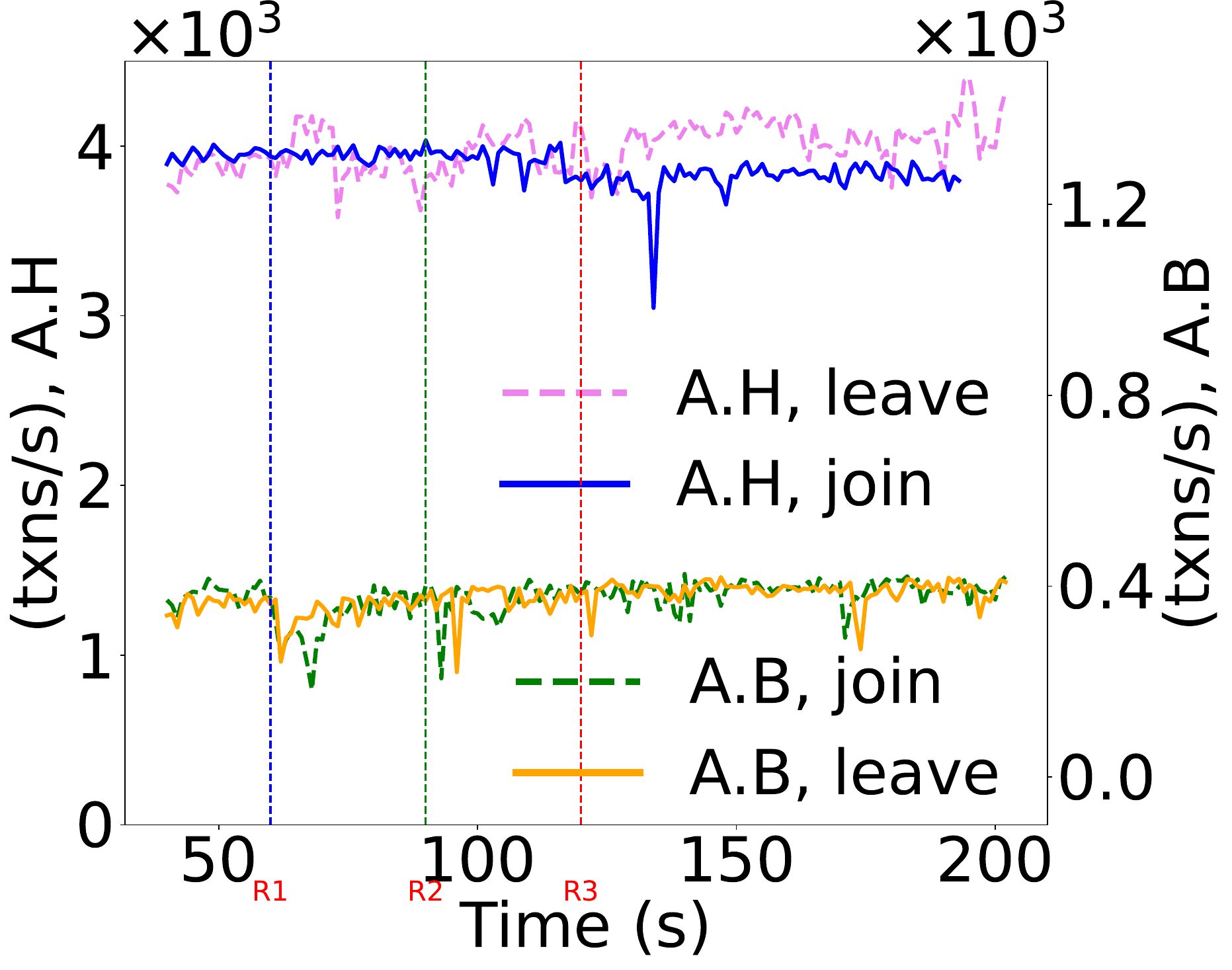} 
\caption{E5.1}
\label{fig:FrequentReconfigurationImpact_j1}
\end{subfigure}
\begin{subfigure}{.245\textwidth}
\includegraphics[width=\linewidth]{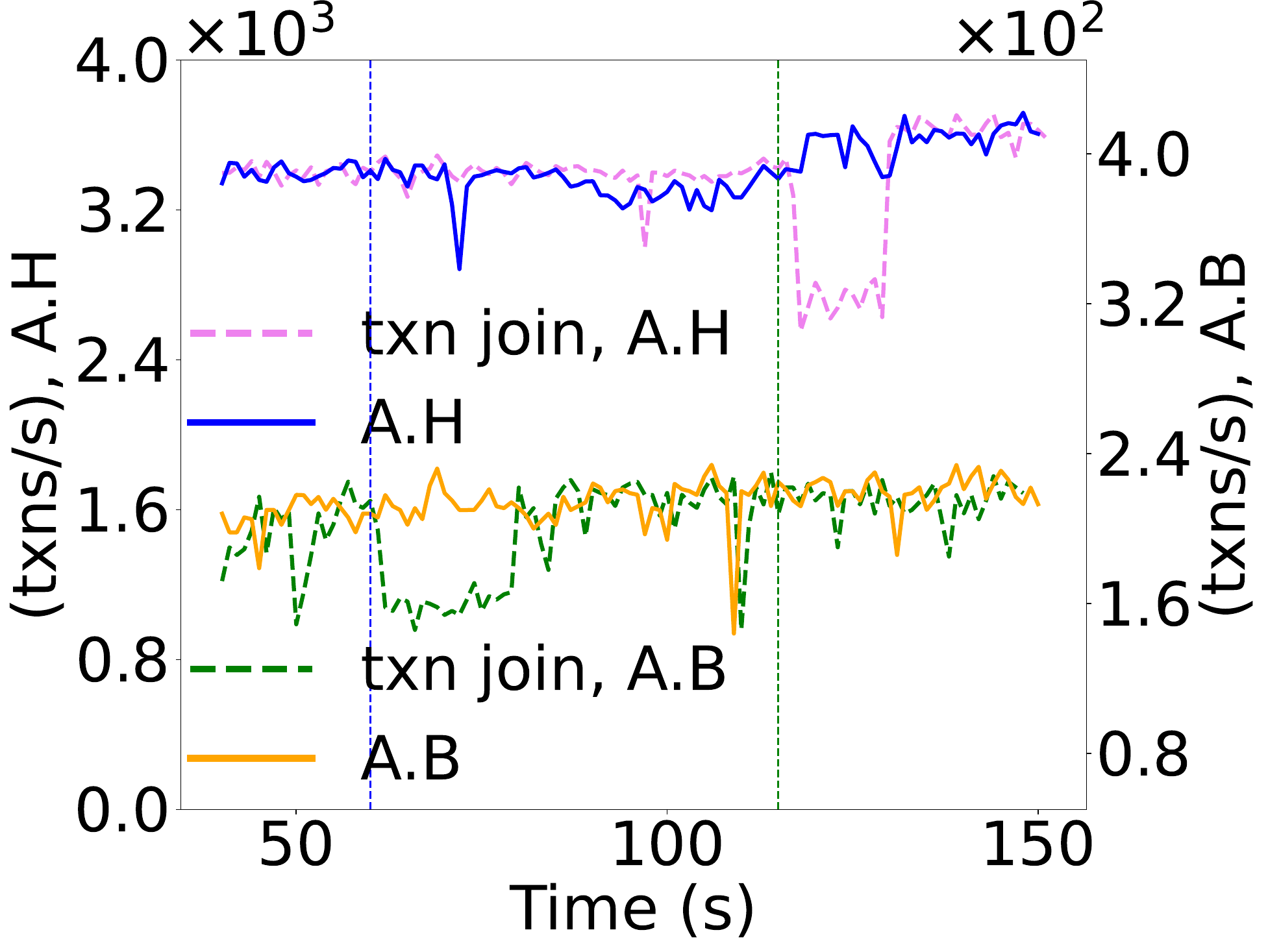} 
\caption{E5.2}
\label{fig:reconfig_stress}
\end{subfigure}
\caption{
E5. 
1.
The impact of multiple reconfigurations on throughput in (a).
2.
The impact of the parallel workflows on on throughput in (b).
(The left y-axis is for \textsc{Ava-Hotstuff} and the right y-axis is for \textsc{Ava-BftSmart}.)
}  
\vspace{-6mm}  
\end{figure}

\textbf{E5. Reconfiguration Requests. \ }
We investigate the impact of reconfiguration on the performance in two experiments. 
\textit{(1)} 
In a system of two clusters with 7 nodes each,
we issue three join and three leave requests to each cluster at the vertical lines.
The requesting nodes can properly join and leave the clusters.
\autoref{fig:FrequentReconfigurationImpact_j1} 
presents the throughput of the whole system during the reconfigurations.
We observe that the throughput slightly decreases as nodes have to communicate with the requesting node in addition to processing transactions.
Further, for joins, the throughput has a slightly decreasing trend
since the local ordering stage is less efficient in larger clusters. 
However, for leaves, the throughput stays steady.
\textit{(2)}
As we described in \autoref{sec:overview},
the protocol takes reconfigurations off the critical path that orders transactions,
and processes them in a parallel workflow.
In this experiment, we compare the parallel workflows with a single workflow that processes reconfigurations in the same sequence as transactions.
We setup connections between replicas of two clusters with 10 and 8 nodes, respectively, and 3 clients:
one client for each cluster that issues write-only transactions, one dedicated client that issues join and leave requests.
A node is repeatedly made to join and leave the system.
\autoref{fig:reconfig_stress} shows the throughput of both \textsc{Ava-Hotstuff} and \textsc{Ava-BftSmart} and their single workflow versions.
The configuration starts for BftSmart and Hotstuff at 60 sec and 115 sec respectively.
We find that the parallel workflows outperform the single workflow in both systems.
In a single workflow, the reconfigurations take slots from transactions and are processed in sequence.
In contrast, \textsc{Hamava} processes them in parallel with transactions, and collects them as a set rather than ordering them individually.

\textbf{E6. Comparison with GeoBFT. \ }
\textcolor{black}{
\autoref{fig:comparegeobft} compares the throughput and latency of \textsc{Ava-Hotstuff} with GeoBFT \cite{geobftcode}.
We experiment with both 
(1) clusters in the same region (\autoref{fig:comparegeobft1}), 
and 
(2) clusters in multiple regions (\autoref{fig:comparegeobft2}).
The number of replicas (48) and clients (24) are fixed across all data points. 
In this experiment, similar to \textsc{Ava-Hotstuff}, we run GeoBFT on 2-core machines.
Compared to \textsc{Ava-Hotstuff},
GeoBFT has lower latency in fewer clusters and similar latency in more clusters. Its throughput surpasses \textsc{Ava-Hotstuff} in fewer clusters, while \textsc{Ava-Hotstuff} slightly outperforms when increasing the number of clusters. Similar results are observed for the case where the clusters are located in multiple regions, as seen in figures \ref{fig:comparegeobft2}. But what is important to highlight is that, unlike GeoBFT, \textsc{Ava-Hotstuff} is able to support cluster reconfiguration. 
}

 

\begin{figure}
\vspace{-6mm}
\begin{subfigure}{.22\textwidth}
\includegraphics[width=\linewidth]{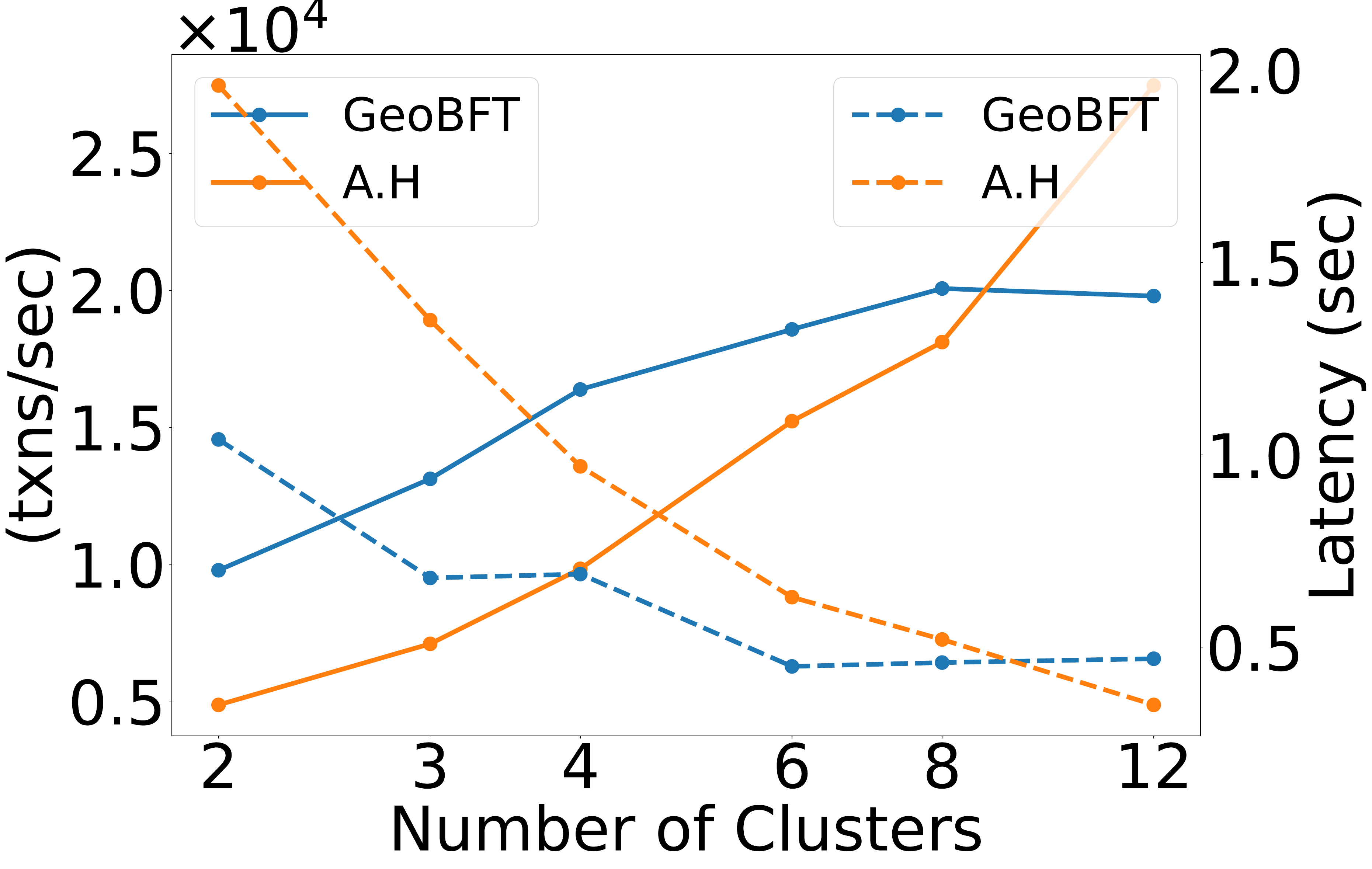} 
\caption{E6.1. Throughput and Latency}
\label{fig:comparegeobft1}
\end{subfigure}%
\begin{subfigure}{.22\textwidth}
\includegraphics[width=\linewidth]{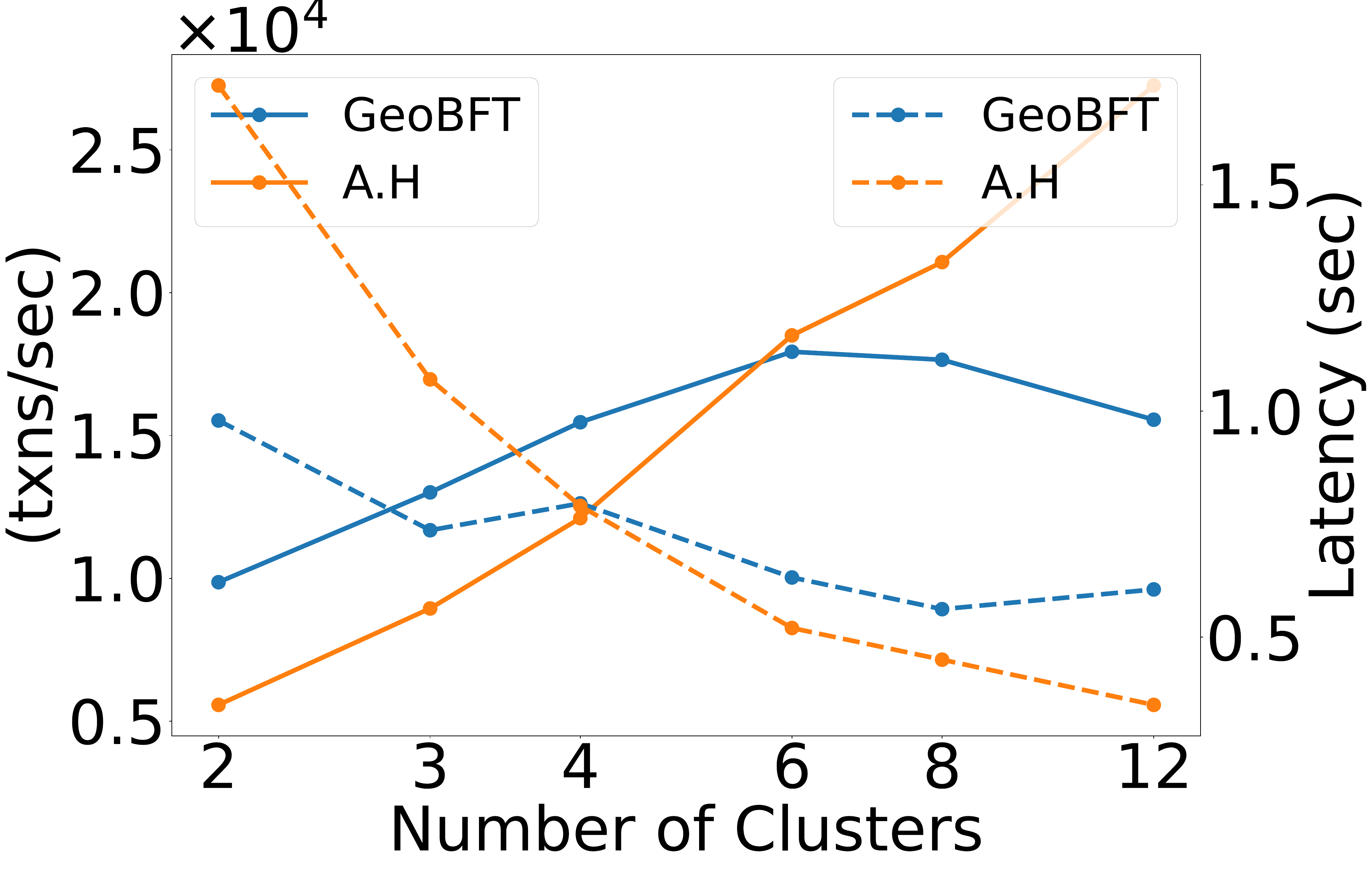} 
\caption{E6.2. Throughput and Latency}
\label{fig:comparegeobft2}
\end{subfigure}%
\caption{
\textcolor{black}{
E7. \textsc{Ava-Hotstuff} vs GeoBFT.
1. Clusters located in the same region:
(a) throughput and latency.
2. Clusters located in multiple regions:
(b) throughput and  latency.
}  
}
\label{fig:comparegeobft}
\vspace{-6mm}
\end{figure}

\textbf{E7. Impact of Reconfiguration Frequency on System Performance. \ }
\textcolor{black}{
\autoref{fig:reconfimp}, we illustrate the effect of the frequency of reconfigurations on 
the throughput and latency of \textsc{Ava-Hotstuff} and \textsc{Ava-Bftsmart}.
We experiment on a system with two clusters, where each has 10 replicas, and each has a client that issues transaction requests and another client that issues an increased number of reconfiguration requests.
Reconfigurations begin at 
80 seconds. 
We experiment with two frequencies:
(1) once every 20 seconds, and
(2) continuously, without any delay between reconfiguration requests.
With \textsc{Ava-Hotstuff}, as expected, increasing the reconfiguration frequency does impact the performance, but in the worst case, the system stabilizes with a throughput drop of at most 12\% and a latency increase of up to 15\%. While with \textsc{ava-bftsmart}, we observe a lower impact of reconfiguration frequency, resulting in a throughput drop of less than 10 percent and a latency increase of approximately 12 percent.}

\textbf{E8. Impact of network latency on performance during reconfiguration. \ }
\textcolor{black}{
\autoref{fig:ReconfigLatImpact} shows the impact of network latency on 
the throughput and latency of \textsc{Ava-Hotstuff} and \textsc{Ava-Bftsmart} during reconfiguration. 
We experiment on a system with two clusters where each has 10 replicas, and each has a client that issues reconfiguration requests.
We fixed one cluster at us-west1-b, and experimented with several locations for the second cluster: us-east5-c, asia-northeast1-b, europe-west3-c, and asia-south1-c, with latencies of 52ms, 91ms, 142ms, and 219ms to the first cluster, respectively. 
As the network latency increases, the inter-cluster communication (in phase 2) dominates the performance, and the impact of reconfigurations (in phase 1) diminishes. For \textsc{ava-hotstuff}, in the steady state with continuous reconfigurations, we observe that when we increase the network latency by more than $4$ times from 52 to 219 ms, the throughput decreases by up to $30\%$, and the latency increases by up to $40\%$. While with \textsc{ava-bftsmart}, we observe a throughput decrease of around $65\%$ and a latency increase of over $200\%$ as network latency increases from 52 to 219ms.
}

\vspace{-1mm}

\begin{figure}
\vspace{-8mm}
\begin{subfigure}{.235\textwidth}
\includegraphics[width=\linewidth]{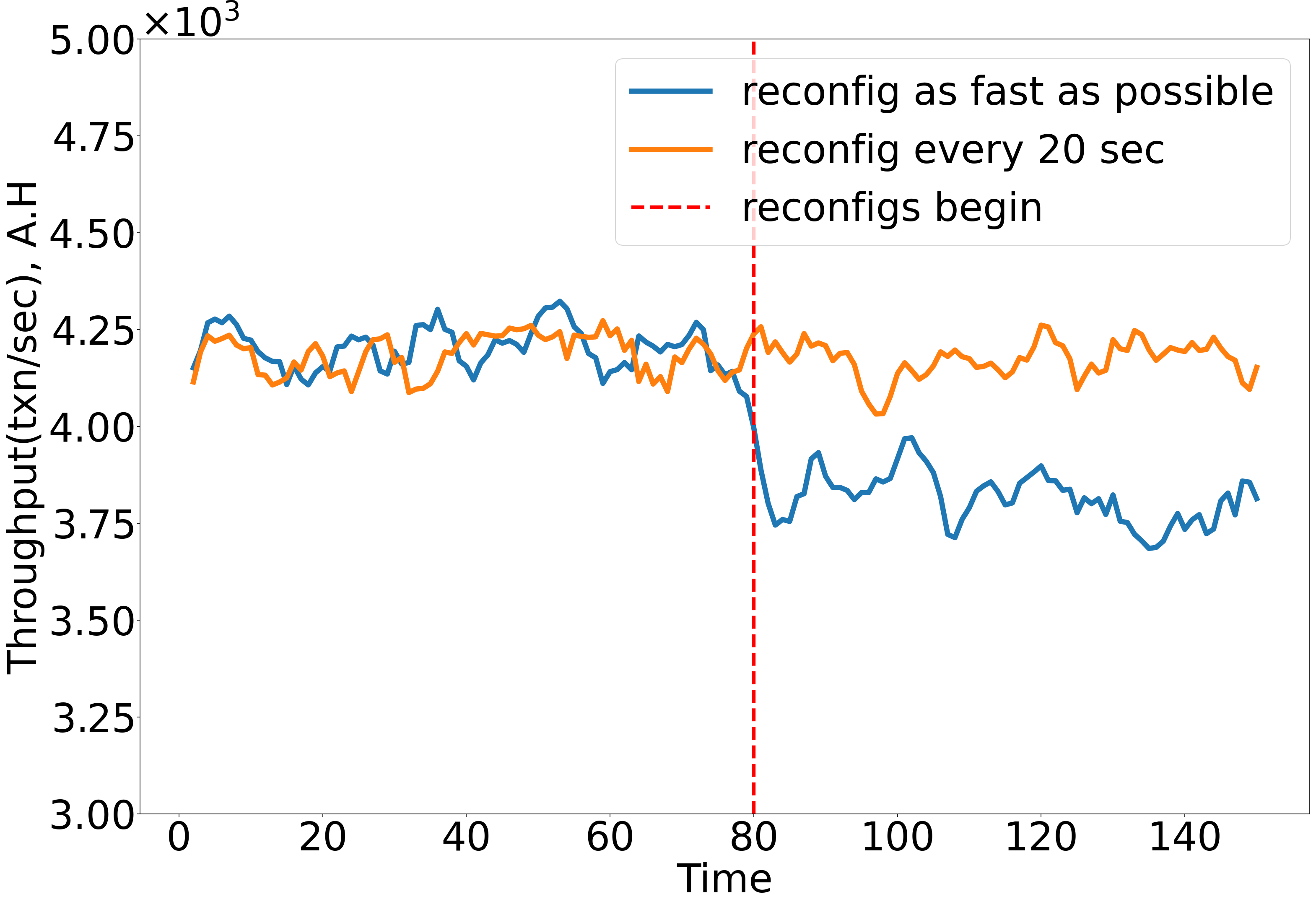} 
\caption{E7.1 Throughput}
\label{fig:ReconfigFreqImpactThput}
\end{subfigure}%
\begin{subfigure}{.235\textwidth}
\includegraphics[width=\linewidth]{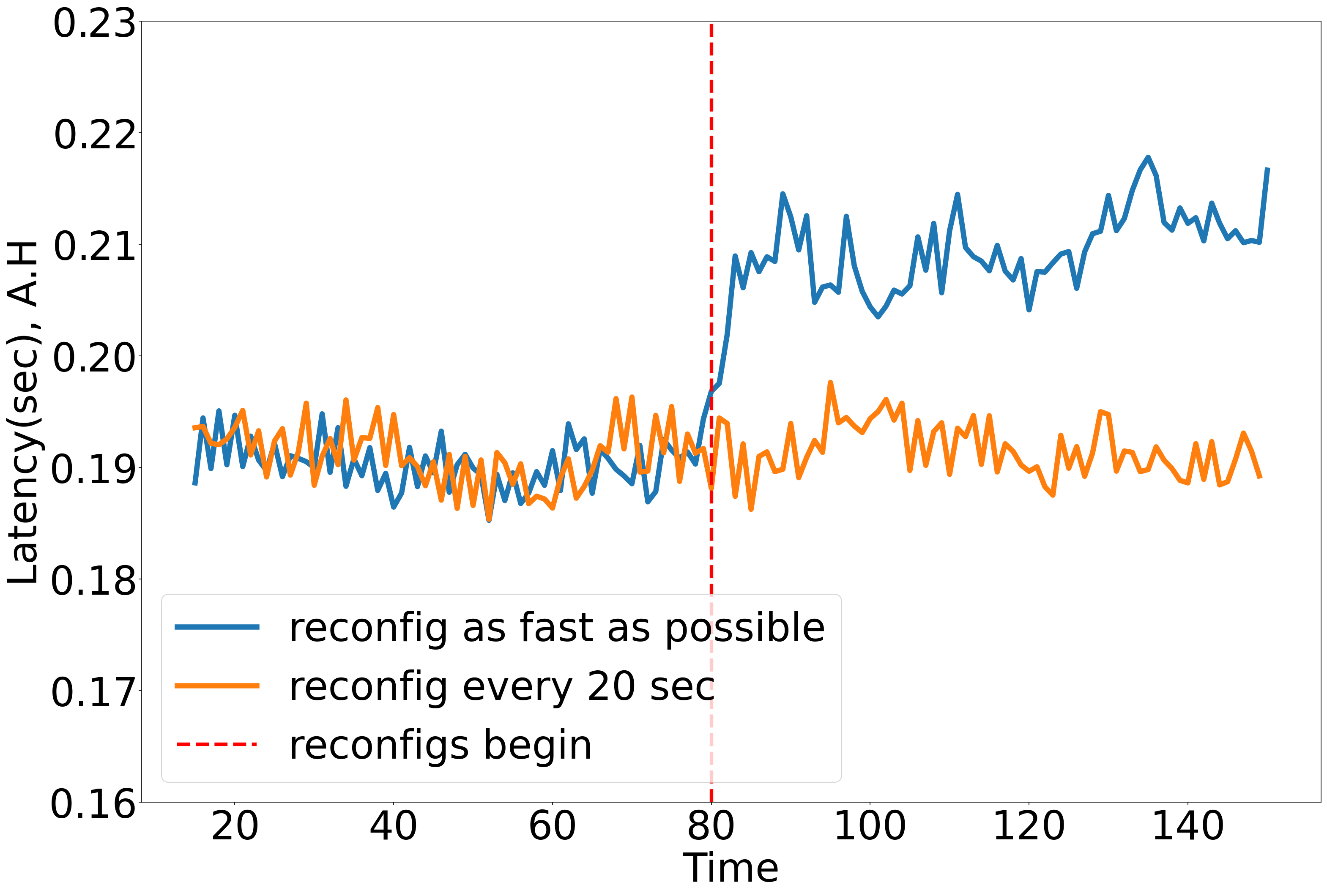} 
\caption{E7.2 Latency}
\label{fig:ReconfigFreqImpactLat_bft}
\end{subfigure}%

\begin{subfigure}{.235\textwidth}
\includegraphics[width=\linewidth]{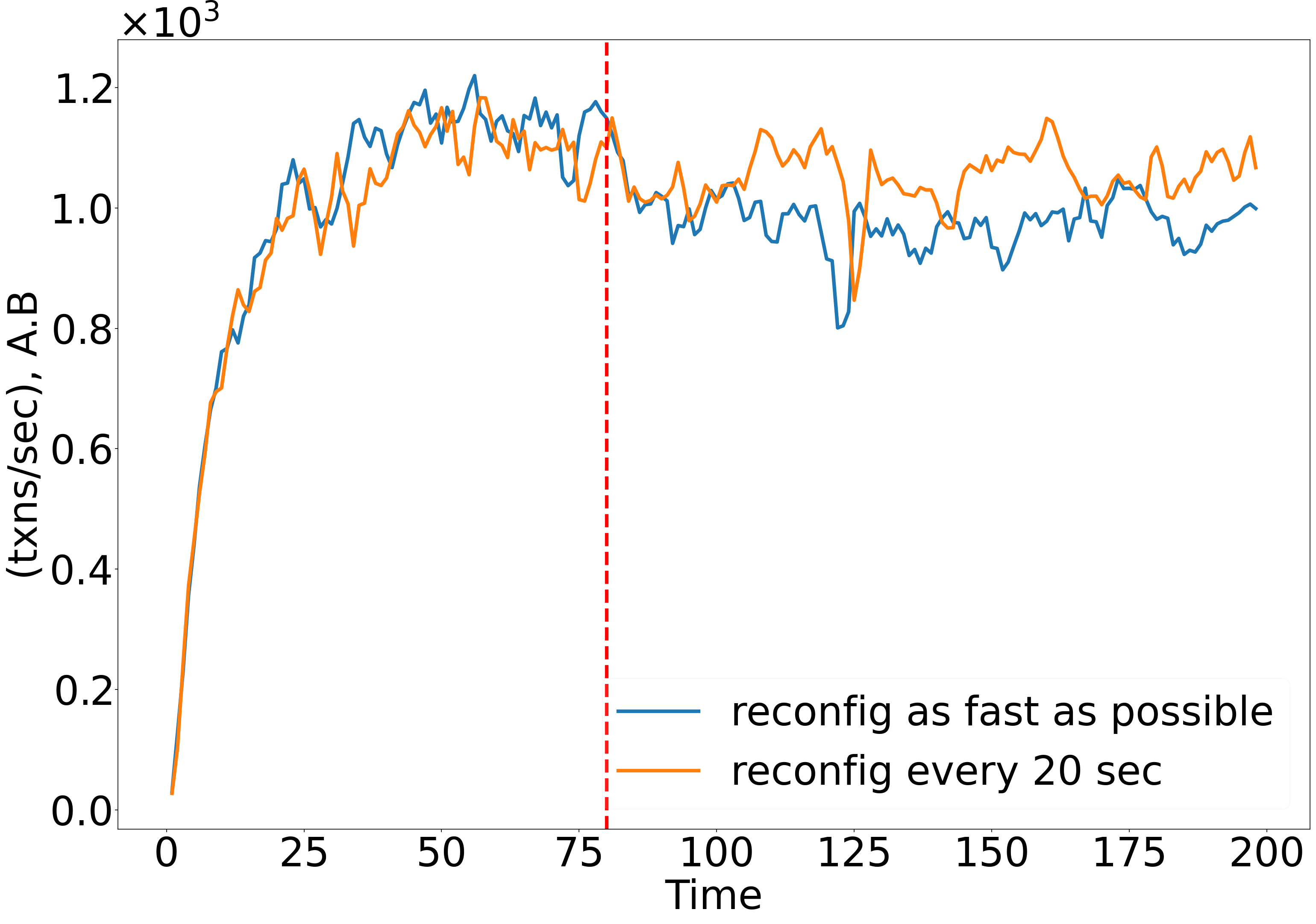} 
\caption{E7.3 Throughput}
\label{fig:ReconfigFreqImpactThput2}
\end{subfigure}%
\begin{subfigure}{.235\textwidth}
\includegraphics[width=\linewidth]{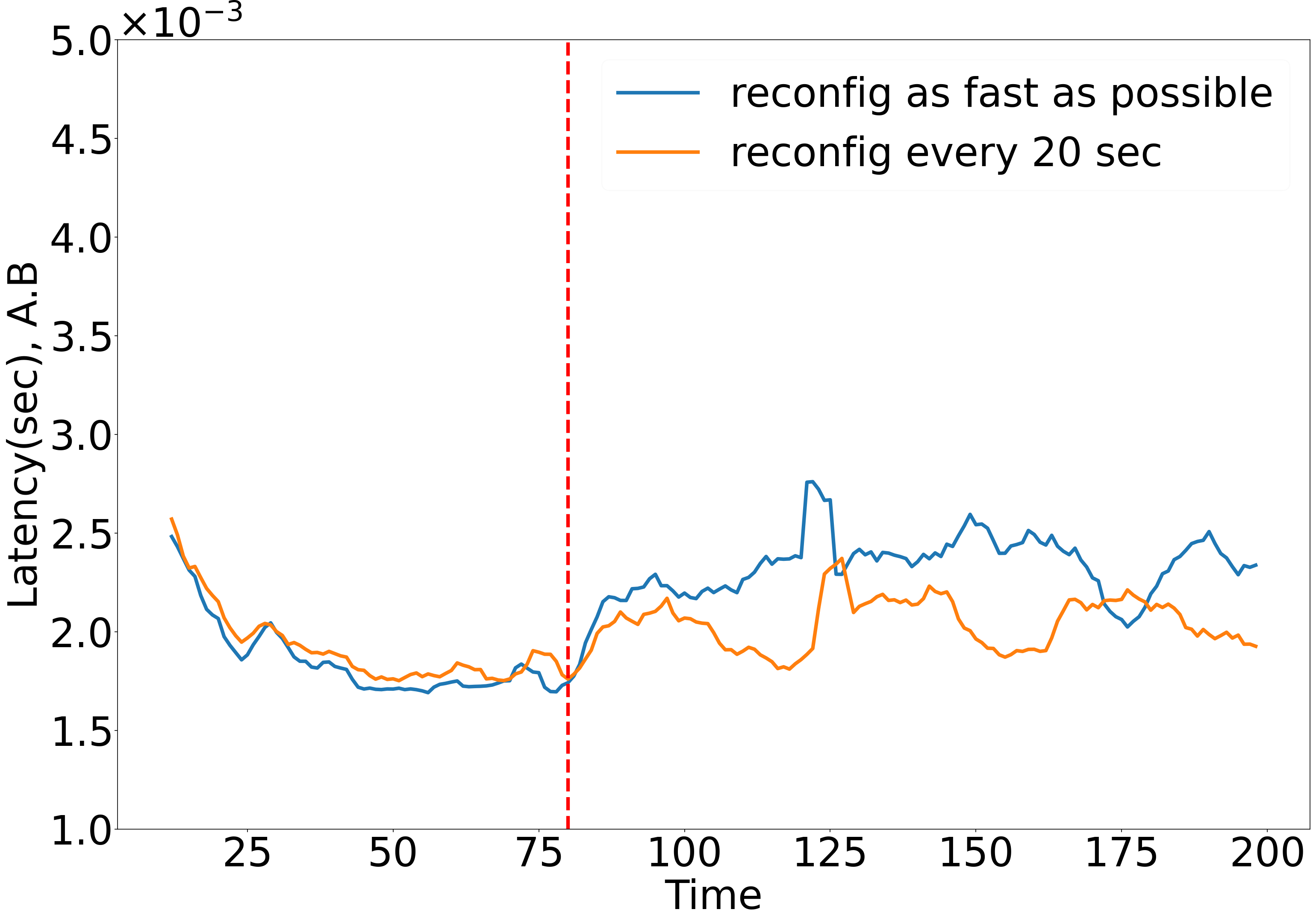} 
\caption{E7.4 Latency}
\label{fig:ReconfigFreqImpactLat_bft2} 
\end{subfigure}%
\caption{
\textcolor{black}{
E7. 
Impact of reconfiguration frequency on 
(a,c) throughput,
and
(b,d) latency.
}  }
\label{fig:reconfimp}
\vspace{-6mm}  
\end{figure}

\section{Related Work}

\textbf{Clustered Replication. \ }
%
%
Compared to non-clustered replication, 
clustered replication has fewer number of replicas in each cluster;
therefore, it exhibits improved performance and scalability.
Steward \cite{amir2008steward} implements a replication protocol where replicas are partitioned into multiple sites.
A leader site is responsible for driving an inter-site coordination protocol similar to Paxos \cite{lamport2001paxos}, which may become the bottleneck.
The inspiring work
GeoBFT \cite{gupta13resilientdb} alleviates the need for a leader site, and enables higher throughput by letting clusters process their own transactions, and then propagate them.
However, it does not support reconfiguration.
Our clustered replication protocol supports heterogeneity and reconfiguration across clusters which allows more flexible and efficient setups.

Another line of work is sharding-based consensus \cite{amiri2021sharper,hellings2021byshard,dang2019towards}.
Elastico \cite{luu2016secure} presents a sharding-based consensus protocol for permissionless blockchains.
OmniLedger \cite{kokoris2018omniledger} and 
RapidChain \cite{zamani2018rapidchain} 
support reconfiguration for sharding-based consensus.
OmniLedger and RapidChain are linearly scalable; but, they suffer from replay attacks in cross-shard commit protocols \cite{sonnino2020replay}.
In contrast, our protocol provides full replication and avoids complications 
of cross-shard synchronization.

\begin{figure}
\vspace{-8mm}
\begin{subfigure}{.235\textwidth}
\includegraphics[width=\linewidth]{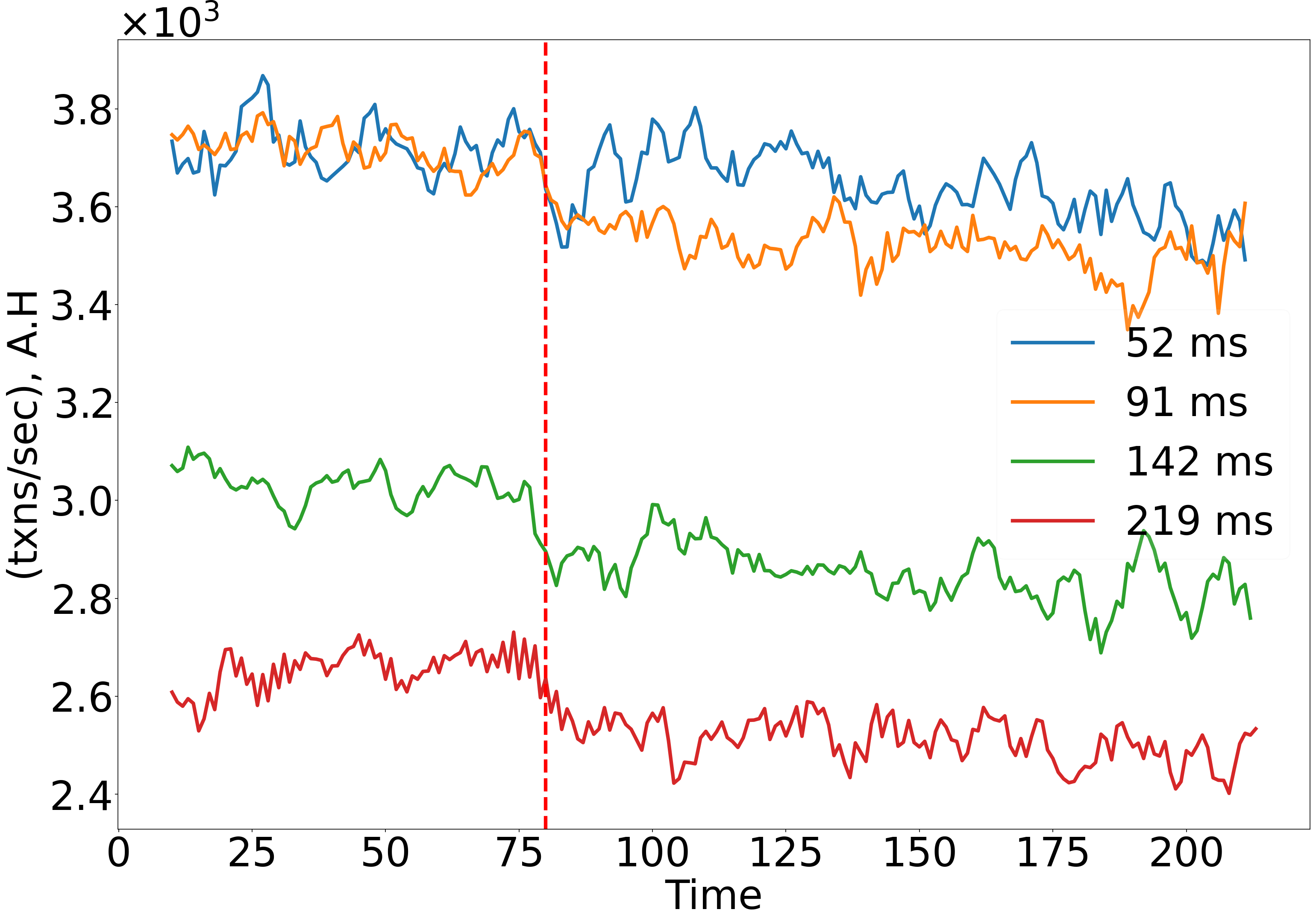} 

\caption{E8.1 Throughput}
\label{fig:ReconfigLatImpactThput}
\end{subfigure}%
\begin{subfigure}{.235\textwidth}
\includegraphics[width=\linewidth]{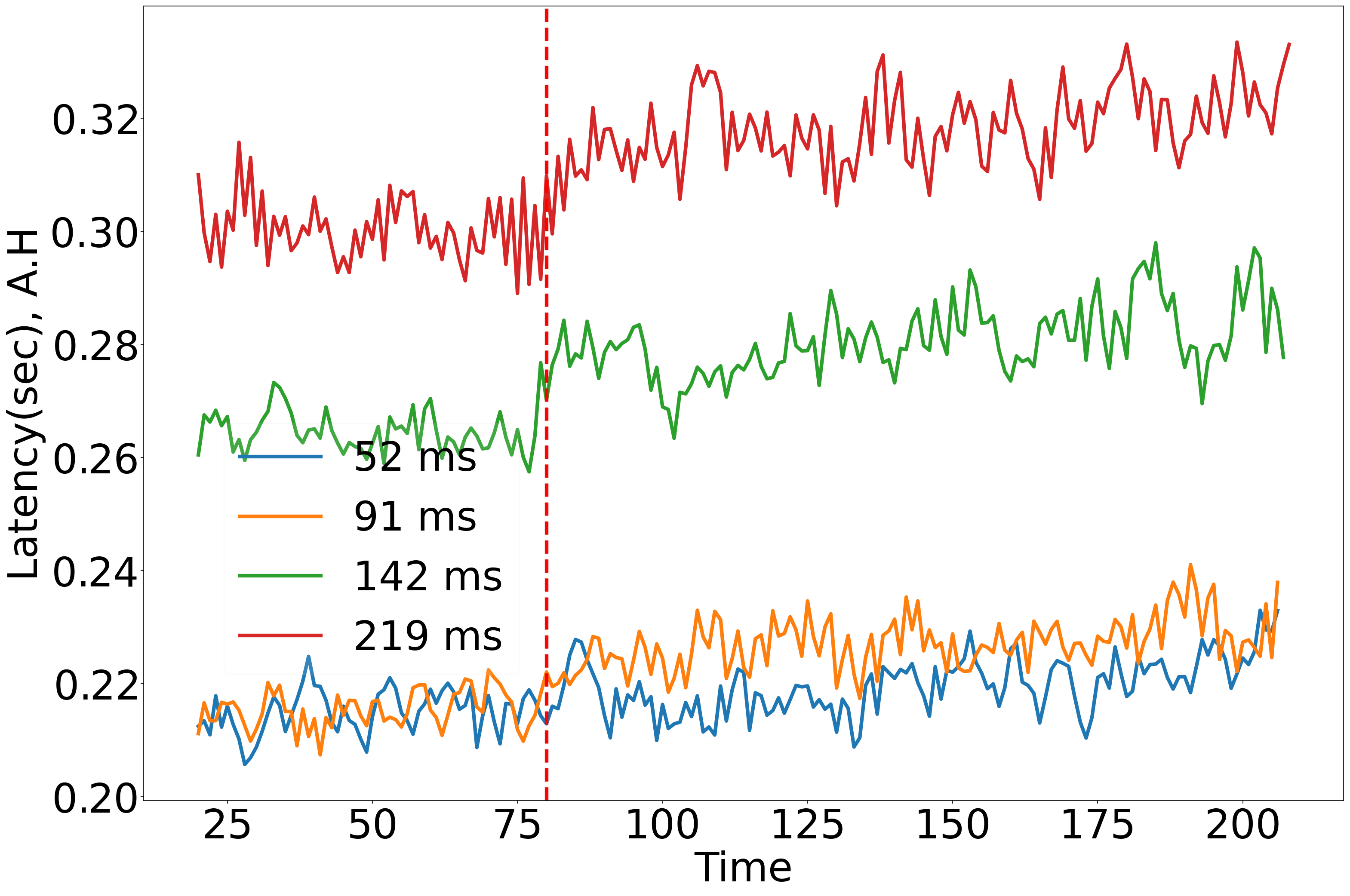} 
\caption{E8.2 Latency}
\label{fig:ReconfigLatImpactLat}
\end{subfigure}%

\begin{subfigure}{.235\textwidth}
\includegraphics[width=\linewidth]{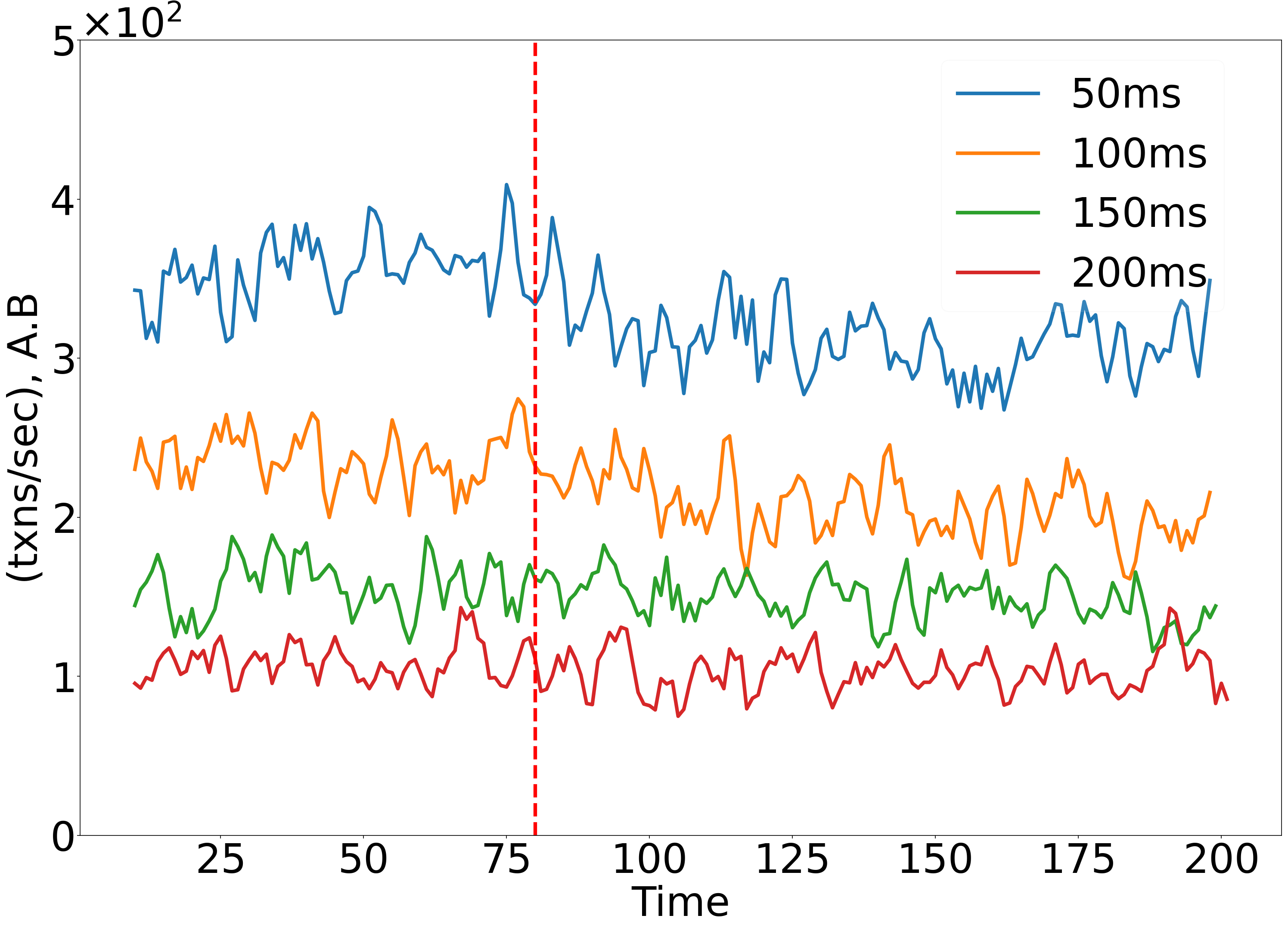} 

\caption{E8.3 Throughput}
\label{fig:ReconfigLatImpactThput_bft}
\end{subfigure}%
\begin{subfigure}{.235\textwidth}
\includegraphics[width=\linewidth]{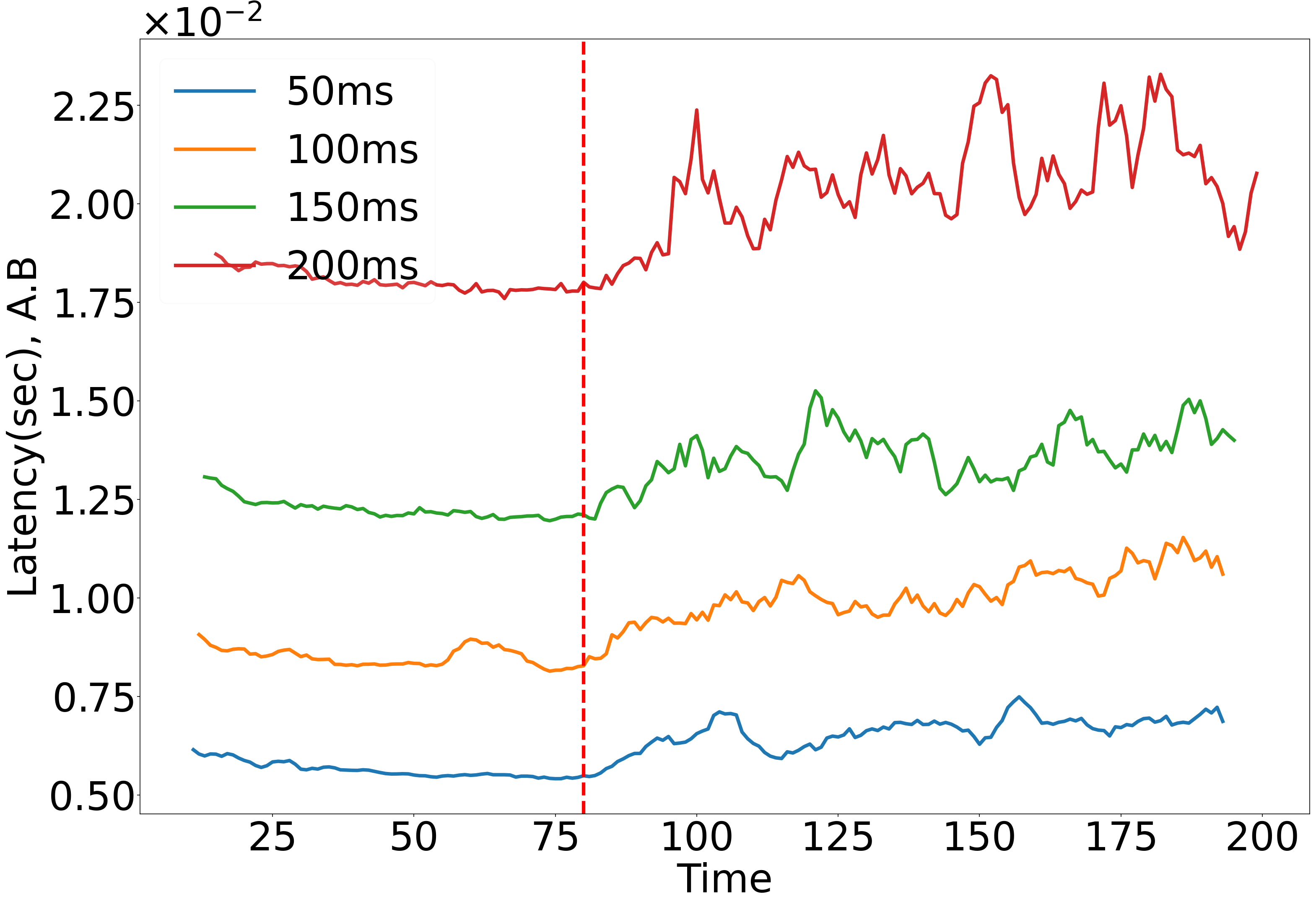} 
\caption{E8.4 Latency}
\label{fig:ReconfigLatImpactLat_bft}
\end{subfigure}%
\caption{
\textcolor{black}{
E8. 
The impact of network latency on 
(a,c) throughput 
and
(b,d) latency
during reconfiguration.
}  }
\label{fig:ReconfigLatImpact}
\vspace{-7mm}  
\end{figure}

\textbf{Group and Open Membership. \ }
A group membership service maintains the set of active replicas by installing new views.
Since accurate membership is as strong as consensus \cite{chandra1996impossibility,chockler2001group},
classical 
\cite{reiter1996secure, lamport2010reconfiguring, lamport1998part, guerraoui2010next, rodrigues2010automatic,bessani2020byzantine,garcia2019lazarus,van2012byzantine}
group membership and reconfiguration protocols
use consensus to reach an agreement on membership and adjust quorums accordingly.
SmartMerge \cite{jehl2015smartmerge}
provides replication, and uses a commutative, associative and 
idempotent merge function on reconfiguration requests to avoid consensus.
It ensures that all replicas eventually perform the merge of all the reconfiguration requests.
Dyno \cite{duan2022foundations} provides 
replication and
group membership in the primary partition model.
Similar to \cite{lorch2006smart}, it uses an instance of consensus to order reconfiguration requests.
In contrast to SmartMerge and Dyno, 
\textsc{Hamava} presents reconfiguration for clustered replication systems without relying on a single 
instance of consensus to safely apply reconfiguration request, a bottleneck that is further amplified 
in a wide area network. Furthermore, SmartMerge reliance on weaker eventual consistency is insufficient.

\vspace{-1mm}

\section{Conclusion}
We presented heterogeneous and reconfigurable clustered replication that 
adapts to different cluster sizes and supports dynamic membership efficiently.
We formalized and proved the safety and liveness properties of our 
reconfiguration protocol, while empirically demonstrating its effectiveness.

\bibliography{Refs}

\clearpage
\section*{Appendix}

\section{Protocol Stages}
\label{sec:protocol-phases-app}

In the overview \autoref{sec:overview} and \autoref{fig:modules-overview},
we explained the structure and the three stages of the protocol.
We presented two sub-protocols in \autoref{sec:inter-cluster-comm} and \autoref{sec:reconfiguration}.
In this section, we elaborate the other sub-protocols.

\textbf{Stage 1:} Stage 1 has two parallel parts.
We saw the reconfiguration part in \autoref{sec:reconfiguration}. 
We now consider local ordering and leader change.

\textit{Local ordering. \ }
We consider local replication for transactions.

In order to process a transaction $t$,
clients can issue a request $\process(t)$ 
at any process of any cluster
(at \autoref{algTOB:local_handler}),
and will later receive a $\return (t, v)$ response.
Each cluster uses 
a total-order broadcast instance $\tob$
to propagate transactions 
to its processes
in a uniform order.
In addition to $\broadcast$ requests and $\deliver$ responses,
the total-order broadcast abstraction can
accept $\newLeader(p, ts)$ requests to install the new leader $p$ with the timestamp $ts$,
and
can issue $\complain(p)$ responses to complain about the leader $p$.
The protocol is parametric for the total-order broadcast.
The total-order broadcast abstracts
the classical non-clustered
Byzantine  replication protocols.
If a complaint is received from $\tob$,
(at \autoref{algTOB:complain}),
it is forwarded to 
the leader election module $\lem$
(in \autoref{alg:leader-change}).

Upon receiving a $\process(t)$ request
(at \autoref{algTOB:local_handler}),
the process uses the $\tob$ 
to broadcast the transaction in its own cluster
(at \autoref{algTOB:issue_TOB}).
Each process stores the operations $\operations_{j}$ that it receives from each cluster $C_{j}$.
Upon delivery of a transaction $t$ from the $\tob$
(at \autoref{algTOB:TOB_received}), 
the process appends $t$ to $\operations_\i$ received from this cluster $C_\i$
(at \autoref{algTOB:append-trans}).
Each process keeps the number $\i$ of the current cluster $C_\i$ that it is a member of.
(We use the index $\i$ only for the current cluster, and the index $j$ for other clusters.)
The $\tob$ delivers a transaction $t$ with a commit certificate $ \sigma $
that is the set of signatures of the quorum that committed $t$.
Each process keeps the set of certificates $\certs$ for the transactions committed in its local cluster
(at \autoref{algTOB:add-cert}).
In the next stage, 
the leader sends the transactions together with their certificates to other clusters.
The certificates prevent Byzantine leaders from sending forged transactions.

In parallel to 
receiving $\process(t)$ requests and ordering transactions $t$,
processes can receive and propagate 
$\join$ and $\leave$
reconfiguration requests.
We will describe the reconfiguration protocol
in the next subsection.
In each round,
the processes of each cluster should agree on the reconfigurations 
before the end of the intra-cluster replication stage (phase 1).
The reconfigurations are then propagated to other clusters 
in the inter-cluster broadcast stage (phase 2).
In stage 1,
a process collects the set of reconfiguration requests $\recs$.
It then calls the function $\sendRecs$ 
(at \autoref{algTOB:send-tentative})
to send the set of reconfigurations it has collected to the leader
who aggregates and uniformly replicates them.
In \autoref{sec:reconfiguration}, 
we presented the Byzantine Reliable Dissemination component
that collects and sends reconfigurations to the leader.
A process calls this function
towards the end of stage 1,
\ie,
when a large fraction $ \alpha $ of the transaction batch is already ordered
(at \autoref{algTOB:send-tentative-cond}).
This leaves ample time in the beginning of stage 1 
to accept reconfiguration requests,
and also leaves enough time at the end of stage 1
to reach agreement for the reconfigurations.
Finally, at the end of stage 1,
when 
$\operations_\i$ contains both the batch of transactions and the reconfigurations
(\autoref{algTOB:batch-size}),
if the current process (denoted as $\self$) is the leader
(\autoref{algTOB:self-leader}),
it calls the function $\interBroadcast$
(at \autoref{algTOB:sharing_request})
to start the inter-cluster broadcast stage (phase 2).

\begin{algorithm}
   \small
    \caption{Local Ordering} 
    \label{alg:phase1}
    
	\DontPrintSemicolon
	\SetKwBlock{When}{when received}{end}
	\SetKwBlock{Upon}{upon}{end}
	$\request : \process (t)$ \;
	$\response : \return (t, v)$ \;

	\Uses : \;
	\ \ \ $\tob : \TOB$ \;
	\ \ \ \ \ \ $\request : \broadcast (t), \ \newLeader (p, ts)$ \;
	\ \ \ \ \ \ $\response : \deliver (p, t), \ \complain (p)$ \;

   \Vars : \;
   \ \ \ $r$ \AComment{The current round}   
   \ \ \ $\i$ \AComment{The number of the current cluster}
   \ \ \ $\self$ \AComment{The current process}   
   \ \ \ $\leader : P  \leftarrow  p_{0}^\i$ 
   \AComment{The leader of current cluster $C_\i$}
   \ \ \ $ts  \leftarrow  0$ \AComment{Timestamp for $\leader$}
   \ \ \ $\operations_{j}  \leftarrow   \emptyset $  \AComment{Operations from each cluster $C_{j}$}
   \ \ \ $\certs$
   \AComment{Certificates for $\operations_\i$ of $C_\i$}

   \vspace{0.7ex}    
   \Upon($\request \ \process \mbox{$(t)$}$ \label{algTOB:local_handler}) { 
      $\tob \ \request \ \broadcast(t)$ \label{algTOB:issue_TOB} \;
	}

   \vspace{0.7ex}	
	\Upon(\mbox{$\tob \ \response \ \deliver(p, t^ \sigma )$} \label{algTOB:TOB_received}) {
	    append $\Transaction(p, t)$ to $\operations_\i$\label{algTOB:append-trans} \;
	    add $ \sigma $ to $\certs$ \label{algTOB:add-cert}\;
        
      \If{$ |\operations_\i| = \batchSize  \times   \alpha $\label{algTOB:send-tentative-cond}}{
         $\call \ \sendRecs()$ \label{algTOB:send-tentative} \;
      }
      \ElseIf{$ |\operations_\i| = \batchSize + 1$\label{algTOB:batch-size}}{
         \ACommentL{$\batchSize$ transactions + $1$ reconfiguration set}
         \If{$\self = \leader$\label{algTOB:self-leader}}{
            $\interBroadcast(r, \operations_\i, \certs)$ \label{algTOB:sharing_request}\;
         }
      }
   }
   
   \Upon(\mbox{$\tob \ \response \ \complain(p)$} \label{algTOB:complain}) {   
        $\call \ \complain(p)$ \label{algTOB:complain-ld}\;
   }

\end{algorithm}

\textit{Leader Change. \ }
A leader orchestrates both 
the ordering of transactions in the total-order broadcast,
and the delivery of the reconfigurations.
However, a leader may be Byzantine, and may not properly lead the cluster.
Therefore, 
as presented in \autoref{alg:leader-change},
the protocol monitors and changes leaders.
As we described,
the total-order broadcast $\tob$ 
(\autoref{alg:phase1} at \autoref{algTOB:complain-ld})
and
the Byzantine reliable dissemination $\brd$
(\autoref{alg:reconf1} at \autoref{algTOB:brd-call-complain-fun})
complain
when the delivery of transactions or reconfigurations is not timely.
The complains 
are sent to the leader election module $\lem$
(at \autoref{algJ:fun_complain}-\ref{algLC:complain}).

The protocol uses the classical leader election module $\lem$.
The implementation of this module is presented in \autoref{alg:leader-election}.
Once a quorum of processes send complain requests to $\lem$,
it eventually issues a response $\newLeader(p, ts)$
at all correct processes 
to elect a new leader $p$ with the timestamp $ts$.
Further,
if the current process sends a $\nextLeader$ request to the module,
it issues a response $\newLeader$ at the current process.
This module guarantees that
the leader for each timestamp is uniform across processes,
the timestamps are monotonically increasing,
and
eventually a correct leader is elected.

When a process receives a $\newLeader(p, ts)$ response
(at \autoref{algLC:new_leader}),
it records the new leader and timestamp
(at \autoref{algLC:record-leader-ts}),
and 
forwards the new leader event to the total-order broadcast $\tob$ 
and Byzantine reliable dissemination $\brd$ modules
as well
(at \autoref{algLC:tob-issue-newleader}-\ref{algLC:brc-newleader}).
Further, 
the previous leader might have failed to communicate the operations of the previous round to other clusters.
As we will describe next, clusters wait for the operations of each other in each round;
therefore, a remote cluster can fall behind by at most one round.
Thus,
the new leader sends operations of the previous in addition to the current round
(at \autoref{algLC:self-leader}-\ref{algLC:broadcast-pre}).

\begin{algorithm}
   \small
   \caption{Leader Change}
   \label{alg:leader-change}

   \DontPrintSemicolon
   \SetKwBlock{When}{when received}{end}
   \SetKwBlock{Upon}{upon}{end}    
   \SetKwBlock{Function}{function}{end}
   \SetKwBlock{Foreach}{foreach}{end}
   \SetKwBlock{Match}{match}{end}
   \SetKwBlock{Case}{case}{end}

   \Uses: \ \;
   \ \ \ $\lem : \mathsf{LeaderElection}$ \;
   \ \ \ \ \ \ $\request : \complain (p), \ \nextLeader$ \;
   \ \ \ \ \ \ $\response : \newLeader (p, ts)$ \;

   \Vars: \ \;
   \ \ \ $\preOps$, $\preCerts$
   \AComment{ops and certs of the previous round}

   \Function($\complain$\mbox{$(p)$} \label{algJ:fun_complain}){
      $\lem \ \request \ \complain(p)$ \label{algLC:complain}\;      
   }

   \vspace{0.7ex}    
	\Upon(\mbox{$\lem \ \response \ \newLeader(p, \ts')$} \label{algLC:new_leader}){
	    $ \langle \leader, \ts  \rangle  \leftarrow   \langle p, \ts' \rangle $ \label{algLC:record-leader-ts} \;
	    $\tob \ \request \ \newLeader(\leader, \ts)$ \label{algLC:tob-issue-newleader} \;
	    $\brd \ \request \ \newLeader(\leader, \ts)$ \label{algLC:brc-newleader} \;
       reset $\timer_\i$ \;	    

      \If{$\leader = \self$\label{algLC:self-leader}}{
         \If{$|\operations_\i| = \batchSize$}{
            $\call \ \interBroadcast(r, \operations_\i, \certs)$
         }
         \If{$r > 1$}{
            $\call \ \interBroadcast(r-1,$ $\preOps,$ $\preCerts)$ \label{algLC:broadcast-pre}
         }
      }

   }
   
\end{algorithm}

\begin{algorithm}
    \small
    \caption{Leader Election}
    \label{alg:leader-election}
	\small
   \DontPrintSemicolon
   \SetKwBlock{When}{when received}{end}
   \SetKwBlock{Upon}{upon}{end}
   \SetKwBlock{Function}{function}{end}

   \Implements $:$ \ Leader Election \;
   \ \ \ $\request : \complain (p)$ \;
   \ \ \ $\response : \newLeader (p, ts)$ \;
   \ \ \ $\request : \nextLeader$ \;

   \Uses:	\ \;	
   \ \ \ $\abeb : \mathsf{Authenticated Best Effort Broadcast}$\;

   \Vars $:$   \;
   \ \ \ $\ts \leftarrow 1$ \;
   \ \ \ $C  \leftarrow   \emptyset $ \AComment{Set of complaining processes}
   \ \ \ $c  \leftarrow  \false$ \AComment{Complained} 

   \Upon($\request \ \complain❪ p ❫$\label{algV:l1}){
      \If{$ \neg  c$}{
         $\call \ \mathit{send\text{-}complain} ❪ ❫$
      }
   }
   
   \Function($\mathit{send\text{-}complain} ❪ ❫$ \label{algV:bb}){
      $c  \leftarrow  \true$ \;
      $\abeb \ \request \ \broadcast(\Complaint( \ts ))$ \label{algV:aa} \;
   }
   
   \Upon($\abeb \ \response \ \deliver ❪p, \Complaint ❪ \ts' ❫❫$ where $\ts = \ts'$\label{algV:l0}){    
      $C  \leftarrow  C ∪ \{ p \}$ \label{algV:l2}\;
    
      \If{$|C|  \geq  f + 1 ∧  \neg c$\label{algV:l3}}{
         $\call \ \mathit{send\text{-}complain} ❪ ❫$
      }

      \If{$|C|  \geq  2  \times  f(i) + 1$\label{algV:l5}}{
         $\call \ \mathit{change} ❪ ❫$
      }
   }
    
   \Function($\mathit{change} ❪ ❫$ \label{algV:l9}){
      $\ts  \leftarrow  \ts + 1$ \;
      $C  \leftarrow   \emptyset $ \;
      $c  \leftarrow  \false$ \;            
      $\response \ \newLeader(p_{\ts \ \mathsf{mod} \ N}, \ts)$ \;
      \AComment{Choose leaders in a round robin order.}
      \AComment{$N$ is the number of processes.}
   }

   \Upon($\request \ \nextLeader$\label{algV:l10}){
      $\call \ \mathit{change} ❪ ❫$
   }
   
\end{algorithm}

\textbf{Stage 2:}
We already considered this stage in \autoref{sec:inter-cluster-comm}.

\textbf{Stage 3: Execution. \ }
At the end of the inter-cluster communication stage,
a process receives the batches of operations from each other cluster.
It then calls the $\execute$ function
(\autoref{alg:inter-cluster} at \autoref{algI:execution_request})
that performs the last stage: execution
(at \autoref{alg:order-exec}).
Processes uniformly $\mathit{order}$ the batches of operations:
first, they process the transactions, and then the reconfigurations,
and further,
use a predefined order of clusters 
to order transactions
(at \autoref{algE:order}).
Then, they process each operation:
they apply each transaction and reconfiguration
(at \autoref{algE:match-op}-\ref{algE:newline}).
If a transaction has been issued by the current process,
a $\mathit{return}$ response 
is issued
(at \autoref{algE:retv}).
Finally, in order to prepare for the next round,
the timers and variables are reset
and the round number is incremented
(at \autoref{algE:save-pre}-\ref{algE:reset_vars}).

\textit{Application of Reconfigurations. \ }
The function $\reconfigure$ is called
for each set of reconfigurations $\rc$ from a cluster $j$
(at \autoref{algJ:install_handler}).
First, the process adds joining processes, and removes leaving processes from 
the set of processes $C_j$ of cluster $j$
(at \autoref{algJ:install_join} and \ref{algJ:install_leave}).
Then the function $\mathit{kickstart}$ is called on the reconfigurations of the local cluster $irc$
(at \autoref{algE:callkickstart}).
The function $\mathit{kickstart}$
(at \autoref{algJ:install_handler})
processes all the joins before the leave reconfigurations.
We keep this specific order since leaving processes may still need to send additional messages for the new processes. If they leave first, then the new processes will not be able to collect enough states to start the execution.
If the leave is for the current process,
it issues a $\left$ response
(at \autoref{algJ:install_leave-resp}).
To kick-start a new process $p$,
the members of its local cluster send
a $\CurrState$ message to $p$
(at \autoref{algJ:share_state}).
The message contains the local $\st$,
the current round number $r$, 
and the cluster members $C$.
Further, 
the process
resets its $\echoed$, $\readied$, $\delivered$, and $\valid$ variables.
When a correct process receives $\CurrState$ messages with 
the same state $s'$, 
cluster members $C'$,
and 
round $r'$
from a quorum 
(at \autoref{algJ:reply_received}),
the process sets 
its $\st$, cluster $C$, and round $r$ to the received values.
It then issues a $\joined$ response
(at \autoref{algJ:install_join-resp}).
After an addition or a removal,
the process further updates the failure threshold $f_{j}$ for the cluster $j$
to less than one-third of the new cluster size
(at \autoref{algJ:update_other_var}).

\begin{algorithm}
    \small
    \caption{Stage 3: Execution}
    \label{alg:order-exec}
    
	\DontPrintSemicolon
	\SetKwBlock{When}{when received}{end}
	\SetKwBlock{Upon}{upon}{end}
	\SetKwBlock{Foreach}{foreach}{end}
   \SetKwBlock{Function}{function}{end}
   \SetKwBlock{Match}{match}{end}
   \SetKwBlock{Case}{case}{end}

   \Vars : \;
   \ \ \ $\st$ \AComment{Process state}

   \Function($\execute ❪ \operations ❫$ \label{algE:execution_handler}){
      \Foreach($\operations_j  \in  \mathit{order} ❪\operations❫$\label{algE:order}){
         \Foreach($o  \in  \operations_j$){
            \Match($o$\label{algE:match-op}){
               \Case($\Transaction ❪ p, t ❫  \  \Rightarrow $){
                  $ \langle \st, v \rangle   \leftarrow  t(\st)$ \label{algE:apply-trans} \;
                  \lIf{$p = \self$}{$\response \ \mathit{return}(t, v)$ \label{algE:retv}}
               }
               
               \Case($\Reconfig ❪ \rc ❫  \  \Rightarrow $){
                  $\call \ \reconfigure(j, \rc)$\label{algE:apply-reconf}  \; 
                  \If{$j = \i$}{
                     $\mathit{irc}  \leftarrow  \rc$ \label{algE:newline}
                  }                  
               }
            }
         }
      }
      $\call \ \mathit{kickstart(irc)}$ \label{algE:callkickstart} \;
            
      $\preOps  \leftarrow \operations_j; \ \ \preCerts  \leftarrow  \certs$ \label{algE:save-pre} \;
      \Foreach(cluster $C_{j}$){
         reset $\timer_{j}$ \; 
         $\operations_{j}  \leftarrow  \emptyset ; \ \ \certs  \leftarrow  \emptyset $\;
         $\cn_{j}  \leftarrow  \rcn_{j}  \leftarrow  0$ \;
      }
      $r  \leftarrow  r + 1$ \label{algE:reset_vars} \;
   }

   \Function($\reconfigure$\mbox{$(j, \rc)$} \label{algJ:install_handler}){
   \ACommentL{Function $\reconfigure $ is called in Stage 3.}
      \Foreach($o  \in  \rc$){
         \Match($o$){
            \Case($\join❪p❫  \,  \Rightarrow $\label{algJ:case1}){
               $C_{j}  \leftarrow  C_{j} ∪ \{p\}$ \label{algJ:install_join} \;
            }
            \Case($\leave❪p❫  \,  \Rightarrow $\label{algJ:case2}){
               $C_{j}  \leftarrow  C_{j} ∖ \{p\}$ \label{algJ:install_leave} \; 
            }
            $f_{j} = \lfloor (|C_{j}| -1) / 3 \rfloor$ \label{algJ:update_other_var}\;
         }
      }
   }

   \Function($\mathit{kickstart}❪rc❫$ \label{algJ:install_handler2}){
      \Foreach($o  \in  \rc$ \mbox{\ \ \ \ $ \triangleright $ \ First joins and then leaves.}){
         \Match($o$){
            \Case($\join❪p❫  \,  \Rightarrow $\label{algJ:casep1}){
               $\apl \ \request \ \send(p, \CurrState(\st, C, r))$ \label{algJ:share_state} \;
            }
            \Case($\leave❪p❫  \,  \Rightarrow $\label{algJ:casep2}){
               \lIf{$p = \self$\label{algJ:install_leave-resp}}{$\response \ \left$}
            }
         }
      }
      
         $\recs  \leftarrow  \recs ∖ \{ rc \}$ \;
         $\echoed  \leftarrow \readied  \leftarrow  \delivered  \leftarrow  \false$ \;
         $\valid  \leftarrow  \bot$ \;
   }

   \vspace{0.7ex}  
	\Upon($\apl \ \response \ \deliver ❪\overline{p}, \CurrState(s',$ $C',$ $r' ❫❫$ where $|\{\overline{p}\}|  \geq  2  \times  f_\i + 1$ \label{algJ:reply_received}){
	    $\mathit{state}  \leftarrow  s';
	    \ \
	    r  \leftarrow  r'; 
	    \ \ 
	    C  \leftarrow  C'$
	    \;
	    $\response \ \joined$ \label{algJ:install_join-resp} \;
	}

\end{algorithm}

\section{Correctness}
\label{sec:correctness-app}

We now state the correctness properties of the sub-protocols and then the end-to-end protocol. The proofs are available in the extended report~\cite{reconfig}.

\noindent \textbf{Remote Leader Change.}
\vspace{-1mm}

\begin{lemma}[Eventual Succession]
\label{lem:validity1}
Let $ops$ be the locally replicated operations of a cluster $C$ in a round.
Either $ops$ are eventually delivered to all correct processes of every other cluster in that round,
or correct processes in $C$ eventually adopt a new leader.

\end{lemma}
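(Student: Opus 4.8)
The plan is to prove the disjunction by assuming its first disjunct fails and deriving the second. Fix the round $r$ of the statement and suppose $ops$ is \emph{not} eventually delivered to all correct processes of every other cluster; pick a cluster $C^\star\neq C$ together with a correct process of $C^\star$ that never records $\operations_C\gets ops$ in round $r$. The first step is to strengthen this to: \emph{every} correct process of $C^\star$ stays blocked in round $r$ with its $C$-entry of $\operations$ equal to $\bot$. I would obtain this from two facts. First, by the integrity/uniformity guarantees of the intra-cluster ordering and of BRD, the operation set that $C$ can certify in round $r$ (with a $2f_C+1$-signature certificate) is unique, so the only thing a correct process of $C^\star$ could ever record for $C$ in round $r$ is $ops$, carried by a valid $\Local(r,C,ops,\Sigma)$ message of \autoref{alg:inter-cluster}. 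Second, the within-cluster dissemination of that message is reliable, so if one correct process of $C^\star$ records $ops$ then all of them eventually do, contradicting our choice of $C^\star$. Hence no correct process of $C^\star$ records $ops$, none can reach stage~3 of round $r$, and each keeps waiting on the entry for $C$.

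Next I would drive the cascade of \autoref{alg:het-remote-leader-change} inside $C^\star$ to completion. After the timeout $\Delta$, every one of the $2f_{C^\star}+1$ correct processes of $C^\star$ fires \autoref{algV:timer_trigger} and broadcasts $\LComplaint(C,\cn_C,r)$. I would argue that the correct processes of $C^\star$ keep a common value $c$ of $\cn_C$: advancing $\cn_C$ requires $2f_{C^\star}+1$ matching complaint signatures, hence at least $f_{C^\star}+1$ from correct processes, which by the amplification rule (\autoref{algV:f_complain}) forces \emph{all} $2f_{C^\star}+1$ correct processes to broadcast $\LComplaint(C,c,r)$ before anyone advances the counter. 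Therefore every correct process of $C^\star$ eventually collects $2f_{C^\star}+1$ such signatures (\autoref{algV:2f_complain}), accepts the complaint, and the first $f_{C^\star}+1$ processes of $C^\star$ — at least one of which is correct — send $\RComplaint$ carrying $c$, the collected signatures and round $r$ to $f_C+1$ processes of $C$, at least one of which is correct.

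To finish, that correct recipient in $C$ checks the $2f_{C^\star}+1$ signatures and the expected counter $c=\rcn_{C^\star}$, and $\abeb$-broadcasts $\Complaint(c,C^\star,\Sigma)$ (\autoref{algV:rb_Rvc}); by $\abeb$ validity all correct processes of $C$ deliver it, each increments $\rcn_{C^\star}$ (\autoref{algV:rcn-inc}) and — unless the recent-leader-change guard of \autoref{algV:local_complain} is active, in which case $C$ has \emph{already} adopted a new leader and we are done — requests $\nextLeader$ from $\lem$. By the leader-election guarantee this yields a $\newLeader$ response at every correct process of $C$, which is exactly the second disjunct. For the counter match $c=\rcn_{C^\star}$ I would note that $\rcn_{C^\star}$ increases by one only per accepted $\Complaint$, so either it is still $c$ when the correct sender's message arrives, or some earlier increment already corresponds to an accepted complaint — and hence an earlier leader change, which again gives the second disjunct.

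The step I expect to be the main obstacle is the round/counter bookkeeping. A lagging $C^\star$ can trail $C$ by one round — $C^\star$ cannot complete round $r$ without $ops$, while $C$ may complete round $r$ from the other clusters' batches and move to $r+1$, yet $C$ cannot move past $r+1$ without $C^\star$'s batch — so the proof must confirm that a $\RComplaint$ tagged with round $r$ is still accepted by a $C$ that is in round $r$, or in round $r+1$, where its newly elected leader re-broadcasts the previous-round batch $\preOps$, $\preCerts$ via $\interBroadcast$. Symmetrically one must show the replay counters $\cn_{\bullet}/\rcn_{\bullet}$ stay synchronized across the two clusters despite Byzantine processes fabricating or replaying complaints with stale numbers; this is precisely where the $f+1$-amplification and $2f+1$-acceptance thresholds carry the argument, and it is the part that will need the most care.
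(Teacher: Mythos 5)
Your proof takes essentially the same route as the paper's: when delivery fails in some remote cluster, the correct processes there time out, local complaints are amplified up to the $2f+1$ acceptance threshold, the first $f+1$ processes (at least one correct) send an $\RComplaint$ to $f_C+1$ processes of $C$, a correct recipient rebroadcasts a $\Complaint$, and every correct process of $C$ then drives the leader-election module to a new leader. Your inline treatment of the complaint-counter synchronization, the $\epsilon$ guard, and the round bookkeeping covers details the paper defers to its separate complaint-synchronization lemmas or leaves implicit, so the proposal is consistent with the paper's argument.
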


\begin{lemma}[Eventual Agreement]
\label{lem:leader-agreement}
All correct processes in the same cluster eventually adopt the same leader.

\end{lemma}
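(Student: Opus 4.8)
The plan is to reduce Eventual Agreement to three facts about the leader-election module $\lem$ (\autoref{alg:leader-election}) and its drivers (\autoref{alg:leader-change} and \autoref{alg:het-remote-leader-change}):
(i) the leader a correct process installs for a timestamp $\ts$ is the deterministic value $p_{\ts \bmod N}$, so two correct processes holding the same timestamp hold the same leader;
(ii) the timestamps held by the correct processes of the cluster are \emph{synchronized}: they always form a prefix of $1,2,3,\dots$, and whenever one correct process advances to a timestamp, every correct process eventually advances to it; and
(iii) this timestamp sequence is \emph{finite}.
Granting these, I let $\ts^{\ast}$ be the last timestamp ever held by a correct process of the cluster; by (ii) every correct process eventually reaches $\ts^{\ast}$, by (i) it then holds $p_{\ts^{\ast} \bmod N}$, and by maximality of $\ts^{\ast}$ no correct process changes its leader afterwards, which is the claim.

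Fact (i) is immediate from inspecting $\mathit{change}()$. For (ii) I would argue by induction on $t$ that if some correct process executes $\mathit{change}()$ while holding timestamp $t$, then every correct process eventually does so while holding $t$; since all correct processes start at timestamp $1$ and $\mathit{change}()$ increments by one, their timestamp sets are then a common prefix of the naturals. A correct process leaves $t$ only through $\mathit{change}()$, reached either after $\abeb$-delivering $2f+1$ messages $\Complaint(t)$ for the current timestamp (\autoref{algV:l5}) or on a $\nextLeader$ request (\autoref{algV:l10}). In the first case at least $f+1$ of those complaints come from correct processes, each of which $\abeb$-broadcast $\Complaint(t)$; since $\abeb$ eventually delivers to every correct process, each correct process eventually gathers $f+1$ such messages, amplifies by broadcasting its own $\Complaint(t)$ (\autoref{algV:l3}), hence eventually gathers $2f+1$ of them and calls $\mathit{change}()$. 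A $\nextLeader$ request is issued only upon $\abeb$-delivery of a re-broadcast remote-complaint message (\autoref{algV:rb_Rvc}-\ref{algV:Rvc_rb_deliver}), which every correct process therefore also delivers; a small separate argument --- using \autoref{lem:validity1} and that consecutive remote complaints are $\Delta$-apart, so the $\epsilon$-guard at \autoref{algV:local_complain} is harmless after GST --- shows all correct processes eventually act on it too.

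Fact (iii) is the crux, and the step I expect to demand the most care. A timestamp increment is ever triggered only by a timeout in the local ordering instance $\tob$, a timeout in the reconfiguration-dissemination instance $\brd$, or the arrival of a valid remote complaint (\autoref{alg:het-remote-leader-change}) --- each of which can fire only when delivery of transactions, of the reconfiguration set, or of some other cluster's operations is untimely. After the global stabilization time (GST), the round-robin rule makes at least $2f+1$ of the $N$ candidates correct, so an infinite timestamp sequence would reach a timestamp $\ts^{\ast}$ whose leader is correct and is installed entirely after GST. Such a leader drives $\tob$ to deliver the batch within its (post-GST) timeout, drives $\brd$ to deliver the reconfiguration set within its timeout (by Termination and Totality of BRD under a correct leader), and sends $\Inter$ messages --- for the current and, if the cluster has advanced, the previous round (\autoref{algLC:broadcast-pre}) --- to $f_j + 1$ replicas of every other cluster $C_j$; because reconfigurations are applied uniformly across clusters, so that all correct processes agree on every $C_j$ and $f_j$ each round (from Agreement for reconfigurations, \autoref{lem:general-termination}), every remote cluster receives these operations before its $\timer_j$ expires and raises no remote complaint. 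Hence no correct process in the cluster complains about, or invokes $\nextLeader$ at, $\ts^{\ast}$, so none advances beyond it, contradicting infinitude. The difficulty is assembling exactly this argument: combining the liveness of the black-box local-ordering primitive, the Termination and Totality of BRD under a correct leader, and the cross-cluster consistency of membership, to certify that a correct leader installed after GST is never complained about, from within its cluster or from outside.
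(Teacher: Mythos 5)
Your proposal is correct and follows essentially the same route as the paper: timestamp synchronization via complaint amplification and the complaint-counter synchronization lemmas, eventual stabilization after GST, and deterministic leader choice from the timestamp together with the (uniform, by the reconfiguration Uniformity lemma) cluster membership. The only notable difference is that the paper's stabilization step explicitly leans on the exponentially increasing remote-leader-change timers to guarantee that a correct leader's messages arrive before timeouts fire, a point your fact (iii) assumes implicitly when asserting that the post-GST timeouts suffice.
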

 
\begin{lemma}[Overthrow Resistance]
\label{lem:remote-leader-integrity}
A correct process does not adopt a new leader unless 
at least one correct process complains about the previous leader.

\end{lemma}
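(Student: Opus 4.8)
The plan is to trace backwards through the leader-change pipeline in Algorithm~\ref{alg:het-remote-leader-change} and Algorithm~\ref{alg:leader-election}, showing that each step that could trigger a leader change requires evidence ultimately rooted in at least one correct complaint. A correct process $q$ in cluster $C_\i$ adopts a new leader only via the leader election module $\lem$ issuing $\newLeader$. Inspecting Algorithm~\ref{alg:leader-election}, this happens in $\mathit{change}()$, which is invoked either from a $\nextLeader$ request (at \autoref{algV:l10}) or when $|C| \geq 2\times f(i)+1$ complaints have been collected (at \autoref{algV:l5}). So the argument splits into: (a) the path through $\lem$'s complaint counting, and (b) the path through $\nextLeader$, which in our protocol is issued at \autoref{algV:local_complain} of Algorithm~\ref{alg:het-remote-leader-change}.

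For case (a): the set $C$ in $\lem$ collects at least $2f+1$ signed $\Complaint(\ts)$ messages; since at most $f$ processes in the cluster are Byzantine, at least $f+1$ of those came from correct processes, hence at least one correct process sent a $\mathit{send\text{-}complain}$. A correct process sends that only from \autoref{algV:l1} (on receiving a $\complain(p)$ request from $\tob$ or $\brd$ — here the leader being complained about is the local leader) or by amplification at \autoref{algV:l3} after seeing $f+1$ complaints, which again contains a correct originator; unrolling this finite amplification chain bottoms out at a correct process that received a genuine $\complain(p)$ from a sub-protocol, i.e. one that timed out waiting on the previous leader. That is precisely a correct complaint about the previous leader.

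For case (b): $\nextLeader$ at \autoref{algV:local_complain} fires only after delivering a $\Complaint(c, j', \Sigma)$ message (at \autoref{algV:Rvc_rb_deliver}) whose $\Sigma$ carries $2\times f_{j'}+1$ signatures from the \emph{remote} cluster $C_{j'}$. The subtlety — and the main obstacle — is that here the ``correct process complaining'' lives in a \emph{different} cluster than the one adopting the new leader, so I must be careful that the lemma statement is read as ``some correct process (anywhere) complained about the leader being replaced.'' I would unroll: $2\times f_{j'}+1$ remote signatures $\Rightarrow$ at least $f_{j'}+1$ correct remote signers $\Rightarrow$ at least one correct remote process accepted the local complaint at \autoref{algV:2f_complain}, which required $2\times f_\i+1$ signatures on $\LComplaint(j', \cn_{j'}, r)$, hence $f_\i+1$ correct local signers in the remote cluster, hence (peeling the $f_\i+1$-amplification at \autoref{algV:f_complain}) at least one correct remote process whose $\timer_{j'}$ expired at \autoref{algV:timer_trigger} because it had not received the operations of cluster $C_\i$ from that leader. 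That is a correct complaint about the previous (remote) leader. The replay-prevention via $\cn$/$\rcn$ matching (the freshness check $c = \rcn_{j'}$) is what guarantees this evidence is fresh and pertains to the \emph{current} leader rather than a stale one; I would invoke that check explicitly to close the argument, and note the $\Delta - timer_\i > \epsilon$ guard only suppresses redundant changes and never causes a change without evidence.
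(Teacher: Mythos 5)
Your core argument is the same as the paper's: the paper proves this lemma by tracing the $\nextLeader$ request at \autoref{algV:local_complain} back through the $\Complaint$/$\RComplaint$ checks --- the freshness check $c = \rcn_{j'}$ rules out replayed complaints about earlier leaders, the $2\times f_{j'}+1$ signatures from the complaining cluster yield at least $f_{j'}+1$ correct senders of $\LComplaint$, and an induction peels the amplification at \autoref{algV:f_complain} down to a base case where a correct process's timer expired at \autoref{algV:timer_trigger}. Your case (b) is exactly this argument, including the explicit appeal to the replay check, so on that path you match the paper. What you do differently is case (a): you also cover adoption of a new leader through the election module's own quorum-of-complaints path (complaints originating from $\tob$/$\brd$ and amplified inside \autoref{alg:leader-election}), which the paper's proof does not treat at all --- the paper implicitly scopes the lemma to the remote-leader-change path. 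Your broader decomposition is a genuine (and welcome) strengthening, and your argument for it (quorum of $\Complaint(\ts)$ messages $\Rightarrow$ at least one correct originator after unrolling the amplification at \autoref{algV:l3}) is sound. Two cosmetic issues, not gaps: your intermediate claim that some correct remote process ``accepted the local complaint at \autoref{algV:2f_complain}'' is neither needed nor strictly implied (a Byzantine relay could assemble $\Sigma$ itself; the signature check at the receiver already gives you the $f_{j'}+1$ correct $\LComplaint$ signers directly), and there is index slippage in ``$\timer_{j'}$ expired'' --- in the complaining cluster the expiring timer is the one watching the target cluster, not $j'$; the surrounding text makes the intent clear.
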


\noindent \textbf{Inter-cluster Broadcast.}
\vspace{-1mm}
 
\begin{lemma}[Termination]
	\label{lem:inter-termination-main}
	\label{lem:inter-termination}
In every round, every correct process eventually receives operations from each other cluster.

\end{lemma}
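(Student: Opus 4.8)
The plan is to argue, for each pair of distinct clusters $C_j$ and $C_{j'}$, that every correct process in $C_{j'}$ eventually receives the operations $\operations_j$ that $C_j$ replicated in the round; the claim then follows by a conjunction over all $N-1$ remote clusters. The skeleton is a case split on whether $C_j$ has a correct leader that successfully performs the inter-cluster broadcast, combined with the guarantee from \autoref{lem:validity1} (Eventual Succession) that if this does not happen, the correct processes of $C_j$ adopt a new leader. Since by \autoref{lem:leader-agreement} (Eventual Agreement) all correct processes in $C_j$ converge on the same leader, and by the leader election module a correct leader is eventually elected, there is eventually a correct leader $l$ of $C_j$ that runs $\interBroadcast(r, \operations_\i, \certs)$ in that round (either as the original round-$r$ leader, or via the catch-up broadcast of the previous round in \autoref{algLC:broadcast-pre} after a leader change).

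First I would establish that once a correct leader $l$ of $C_j$ executes $\interBroadcast$, it sends $\Inter(r, j, \ops, \certs)$ to $f_{j'}+1$ distinct processes of $C_{j'}$ (the inner $\foreach$ at \autoref{algI:sharing_handler}). Here I need that $l$'s local view of the membership $C_{j'}$ and threshold $f_{j'}$ is accurate enough that $f_{j'}+1$ processes contain at least one correct process; this is where the uniform-reconfiguration machinery is used, since the operations and thresholds for round $r$ were fixed uniformly in round $r-1$ (by the Agreement theorem / the reconfiguration dissemination properties). Given authenticated perfect links and the partial-synchrony assumption, after GST these messages are delivered, so at least one correct process $p \in C_{j'}$ delivers the $\Inter$ message; it validates the certificates (valid because $\certs$ carries $\geq 2f_j+1$ genuine signatures from $C_j$, which a correct $l$ supplies) and $\abeb$-broadcasts $\Local(r, j, \ops, \Sigma)$ within $C_{j'}$ at \autoref{algI:local_send}. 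By the properties of authenticated best-effort broadcast inside $C_{j'}$, every correct process of $C_{j'}$ eventually delivers this $\Local$ message, passes the validity check at \autoref{algI:local_handler}, and sets $\operations_{j} \leftarrow \ops$. Iterating over all remote clusters $j$, every correct process eventually has $|\mathsf{dom}(\operations)| = N$, which is exactly the statement.

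The main obstacle I expect is the circularity between rounds and the possibility that a remote cluster lags: \autoref{lem:validity1} only guarantees delivery or a leader change \emph{in that round}, but a correct process in $C_{j'}$ might be stuck waiting while $C_j$ churns through several Byzantine leaders, and meanwhile $C_{j'}$'s own timers for $C_j$ may fire repeatedly. I would handle this by appealing to \autoref{lem:remote-leader-integrity} (Overthrow Resistance) and the complaint-numbering/replay-prevention argument to show that each spurious remote complaint cycle is bounded and that leader changes in $C_j$ are well-founded (the timestamp is monotone and a correct leader is eventually reached), so the remote-leader-change subprotocol makes progress rather than oscillating forever; then the catch-up broadcast at \autoref{algLC:broadcast-pre} ensures that even a one-round-behind cluster receives the stale round's operations. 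The delicate point to get right is that a remote cluster falls behind by \emph{at most one round}, so the pair $(r-1, r)$ of broadcasts that each new correct leader issues suffices to unblock everyone; I would make this precise by an induction on the round number, using the round-$(r-1)$ instance of this very lemma as the induction hypothesis.
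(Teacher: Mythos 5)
Your proposal follows essentially the same route as the paper's proof: a case split via Eventual Succession (\autoref{lem:validity1}) on whether the $\Inter$ message is delivered or the correct processes of the sending cluster change their leader, then eventual adoption of a correct leader, delivery to at least one correct remote process by validity of $\apl$, propagation to all correct processes of that cluster by validity of $\abeb$, and generalization over all clusters. Your additional elaborations (leader agreement, the catch-up broadcast of the previous round, overthrow resistance, and accuracy of the thresholds via reconfiguration uniformity) make explicit details the paper's proof glosses over, but the argument is the same.
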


\begin{lemma}[Agreement]
	\label{lem:inter-agreement-main}
	\label{lem:inter-agreement}
In every round, the operations that every pair of correct processes receive from a cluster are the same.

\end{lemma}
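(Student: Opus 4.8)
The plan is to reduce the statement to a single uniqueness claim: for each cluster $C$ and each round $r$ there is at most one operation set that can be accompanied by a \emph{valid} certificate, and every value a correct process records (in its $\operations$ map) as ``$C$'s operations in round $r$'' is that set. (Existence of such a record, for a correct process, is the content of \autoref{lem:inter-termination}; here only uniqueness is at stake, so this is a pure safety argument.) The two pillars are (i) agreement \emph{inside} $C$ about $C$'s own operations, obtained from the agreement of the total-order broadcast $\tob$ and the uniformity/termination of the Byzantine Reliable Dissemination module $\brd$, and (ii) \emph{certificate soundness}: a valid certificate pins the operation set down, by quorum intersection in a cluster of size $3f+1$.

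First I would fix the canonical set. In stage~1 all correct processes of $C$ feed the same round-$r$ transaction requests into $\tob$ and, by its agreement, deliver the same transaction sequence, each transaction carrying a commit certificate; likewise all correct processes of $C$ call $\sendRecs$, and by Termination and Uniformity of $\brd$ they deliver one and the same reconfiguration set $M_C$ together with its collection- and delivery-certificates. Hence every correct process of $C$ assembles the same complete batch $\widehat{\ops}_C^{\,r}$ (the transactions followed by $\Reconfig(M_C)$), and it is accompanied by a valid certificate: a commit quorum per transaction and the two $\brd$ certificates for $M_C$.

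Next I would prove certificate soundness. If two operation sets for the same $(C,r)$ both carry valid certificates and disagree on some transaction position, then the two corresponding commit quorums (each of size $\ge 2f+1$ out of $|C|=3f+1$) intersect in $\ge f+1\ge 1$ correct processes of $C$; since the local total-order broadcast's commit certificates are unequivocal (no position carries commit certificates for two distinct transactions), this is impossible. If instead they disagree on the reconfiguration set, a valid delivery-certificate for a set $M$ consists of $2f+1$ $\ready(M)$ signatures from $C$, of which $\ge f+1$ come from correct processes; it therefore suffices that a correct process of $C$ never emits $\ready$ for two different sets in the same $\brd$ instance, which is the certificate-soundness counterpart of $\brd$'s Uniformity supplied by the $\brd$ analysis. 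Combining the two components, a valid certificate for round $r$ of cluster $C$ determines the whole set, hence forces it to be $\widehat{\ops}_C^{\,r}$.

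Finally I would close the argument on the two possible ways a correct process obtains such a record: a correct process in $C$ holds $\widehat{\ops}_C^{\,r}$ by the first step; a correct process outside $C$ records $\operations_C\leftarrow\ops$ only upon delivering a $\Local(r,C,\ops,\Sigma)$ (or, upstream, an $\Inter$) message whose round matches and whose certificate is valid, so by certificate soundness $\ops=\widehat{\ops}_C^{\,r}$. Thus any two correct processes that have recorded $C$'s round-$r$ operations hold the same set. The step I expect to be the main obstacle is the reconfiguration half of certificate soundness: because $\readied$ and $\valid$ are reset at every leader change inside $\brd$, one must show that the ``adopt the highest-timestamp $\valid$'' rule makes every new leader re-propose the already-pinned set $M_C$, so that all $\ready$ messages ever sent by correct processes in that instance carry the same $M_C$; this is exactly what the $\brd$ design guarantees and is where one leans on $\brd$'s uniformity/totality rather than on a routine quorum count.
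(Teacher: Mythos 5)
Your proposal is correct and follows essentially the same route as the paper's proof: a correct process only records a cluster's round-$r$ operations when the accompanying certificates are valid, and quorum intersection in a $3f{+}1$ cluster together with the agreement/total-order of the local $\tob$ (and the uniformity of $\brd$ for the reconfiguration entry) forces any two certified sets to coincide. Your write-up is in fact more careful than the paper's own two cases it glosses over — the reconfiguration set being certified by $\brd$ (ready/echo) signatures rather than $\tob$ commit certificates, and the distinction between processes inside the cluster (which obtain the batch locally) and outside it (which obtain it via $\Inter$/$\Local$) — so no gap to report.
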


\noindent \textbf{Byzantine Reliable Dissemination.}
\vspace{-1mm}
%

\begin{lemma}[Integrity]
\label{lem:brd_integrity}
Every delivered set contains at least a quorum of messages from distinct processes.
\end{lemma}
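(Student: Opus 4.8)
The plan is to reduce the lemma to a provenance argument about the aggregation that underlies any delivered set. A correct replica delivers a set $M$ only at \autoref{algJ:issue_deliver}, upon $\abeb$-delivering $2f+1$ $\ready(M,\cdot)$ messages; since at most $f$ replicas are Byzantine, at least $f+1$ of those signers are correct and therefore actually broadcast $\ready(M,\cdot)$ (at \autoref{algJ:send_ready} or \autoref{algJ:amplify_ready}). Hence it suffices to prove the invariant: whenever a correct replica broadcasts $\echo(M,\cdot)$ or $\ready(M,\cdot)$, the set $M$ is the union of contribution messages signed by at least $2f+1$ distinct replicas. In the reconfiguration instantiation these contributions are the $\Recs$ messages, so this is precisely ``a quorum of messages from distinct processes''.

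To prove the invariant I would consider the earliest real time $t^{*}$ at which any correct replica broadcasts $\echo(M,\cdot)$ or $\ready(M,\cdot)$ for the fixed set $M$. By minimality of $t^{*}$, before $t^{*}$ no correct replica has emitted any such message, so at most $f$ valid $\echo(M,\cdot)$ signatures and at most $f$ valid $\ready(M,\cdot)$ signatures exist in the system at that instant. Consequently the event at $t^{*}$ cannot be a $\ready$ broadcast, since both of its enabling conditions (\autoref{algJ:echo_delivered} and \autoref{algJ:f_ready_delivered}) require at least $f+1$ such messages from correct senders; and it cannot be an $\echo$ broadcast whose enabling $\Agg(M,\Sigma,\cdot)$ at \autoref{algJ:proposal_delivered} is attested by the ``$2f+1$ $\echo(M)$'' or ``$f+1$ $\ready(M)$'' forms --- including the case where $\Sigma$ is an old certificate re-broadcast by an adopting leader at \autoref{algTOB:adopt_candidate}, because those embedded messages were necessarily signed and $\abeb$-delivered before $t^{*}$. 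The only remaining possibility is that the $\Agg(M,\Sigma,\cdot)$ delivered at $t^{*}$ carries the first attestation form, i.e.\ $\Sigma$ is a set of $2f+1$ unforgeable signatures on contribution messages whose union is $M$ (the form an honest leader produces at \autoref{algJ:leader_rb} after $|q|\geq 2f+1$, and the form propagated with an aggregation certificate); thus $M$ aggregates messages from $2f+1$ distinct replicas. Every later $\echo(M,\cdot)$ or $\ready(M,\cdot)$ refers to the same $M$, so the invariant propagates, and combining with the first paragraph finishes the proof.

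I expect the main obstacle to be pinning down the verification semantics of ``$\Sigma$ attests $M$'' in the contribution-signature case: the argument needs the checking replica to verify not merely that $\Sigma$ contains $2f+1$ distinct valid signatures but that the signed messages genuinely union to $M$, so that a Byzantine leader cannot pad $M$ with fabricated entries while exhibiting only a small honest witness. The related delicate point is well-foundedness across leader changes --- confirming that an adopting leader's $\Agg$ at \autoref{algTOB:adopt_candidate} can only carry an echo/ready certificate assembled from messages that causally precede it, which is exactly what licenses ruling those forms out at $t^{*}$. I would make this precise using the causal order induced by $\abeb$ delivery rather than the leader timestamps, since the timestamp attached to a re-broadcast certificate does not decrease.
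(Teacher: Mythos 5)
Your proof is correct and takes essentially the same route as the paper's: delivery of $M$ forces a quorum of $\ready$ messages, hence a correct $\ready$, and unwinding the $\echo$/$\ready$/adopted-certificate chain bottoms out at an $\Agg$ whose certificate $\Sigma$ is a quorum of contribution signatures whose union is $M$. Your earliest-event argument is just a precise rendering of the induction the paper invokes informally (including well-foundedness across leader adoptions), and your caveat about the semantics of ``$\Sigma$ attests $M$'' matches the paper's reading that the checker verifies $M$ is the union of the signed contributions.
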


\begin{lemma}[Termination]
\label{lem:brd_termination}
If all correct processes broadcast messages then every correct process eventually delivers a set of messages.
\end{lemma}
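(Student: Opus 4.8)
The plan is to follow the standard pattern for leader-driven Byzantine dissemination: show that a correct leader is eventually installed, and that once it is installed after the global stabilization time (GST) it drives every correct process to delivery fast enough that no correct process times out, so it is never replaced. The hypothesis that \emph{all} correct processes broadcast is used precisely to guarantee that a fresh leader can collect state from a quorum; without it a correct process with $\mym = \bot$ could stay silent and (together with silent Byzantine processes) leave the leader stuck below $2f+1$.

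First I would invoke the guarantees of the leader election module: timestamps are monotonically increasing, the leader is uniform per timestamp, and eventually a correct leader is elected. So fix a time after GST by which some correct process $l$ with timestamp $\ts$ is installed as leader at every correct process. By hypothesis every correct process has $\mym \neq \bot$, so on processing $\newLeader(l,\ts)$ it sends $l$ either $\Valid(\valid)$ (if it recorded a valid set under an earlier leader) or $\langle \mym, \ts\rangle$; since there are at least $2f+1$ correct processes, $l$ eventually reaches $|q|\geq 2f+1$ and broadcasts $\Agg(M,\Sigma,\ts)$ — via the $\highValid \neq \bot$ branch if some correct process forwarded a valid set, and via the aggregation branch otherwise. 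Because $l$ is correct, $(M,\Sigma)$ is well-formed in either case ($\Sigma$ being a quorum of signed messages for $M$, a quorum of $\echo(M)$ signatures, or $f+1$ $\ready(M)$ signatures), so the ``$\Sigma$ attests $M$'' check passes at every correct recipient. I would also note that the $\newLeader$ handler resets $\echoed$ and $\readied$ but \emph{not} $\delivered$: a correct process that already delivered under a previous leader stays delivered (and re-echoing is harmless), which is all we need, since the lemma only asks that every correct process deliver \emph{some} set.

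Next I would push this through the echo/ready cascade, using that after GST messages between correct processes arrive within a bounded delay $\delta$: all $\geq 2f+1$ correct recipients broadcast $\echo(M,\ts)$; hence each correct process eventually collects $\geq 2f+1$ matching echoes and broadcasts $\ready(M,\ts)$ (recording a valid set); hence each correct process eventually collects $\geq 2f+1$ matching readies and delivers. This cascade takes a constant number of message delays, i.e. $O(\delta)$ time, so provided the dissemination timer $\Delta$ is large enough relative to this bound — which holds eventually, since the timer is reset on every leader installation and can be grown on each change — no correct process's timer expires while $l$ leads, hence no correct process complains about $l$. Since the leader election module needs $2f+1$ complaints to change the leader and at most $f$ of those come from Byzantine processes, the absence of correct complaints means $l$ is never replaced, which justifies the assumption that $l$ remains the leader. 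Therefore every correct process delivers a set of messages.

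The main obstacle is the circular dependency between ``a correct leader is never replaced'' and ``a correct leader completes dissemination before anyone times out,'' each of which is used to establish the other; I would break it exactly as above, by leaning on the leader election module's eventual-correct-leader guarantee (itself designed around this interplay) together with the bounded-step termination of the cascade under a correct leader after GST, so that an eventually-large-enough timeout suffices. A secondary point needing care is that the $\highValid$-adoption path must still yield a $\Sigma$ that passes the recipients' attestation check — the forwarded signatures carry an older timestamp than the current $\ts$, but the check inspects only the set $M$, not the timestamp of those signatures; this is immaterial for termination but should be confirmed so that correct recipients do not discard $l$'s $\Agg$ message.
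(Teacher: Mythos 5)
Your argument covers only half of what the paper needs, and the half it leans on rests on an over-claim. You assume, unconditionally, that ``eventually a correct leader $l$ is installed at every correct process'' and then drive the echo/ready cascade under $l$. But in this protocol the leader is only replaced when enough complaints accumulate, and complaints stop as soon as a process delivers (delivery stops the timer at \autoref{algJ:stoptimer}). So consider the scenario the paper's proof is organized around: a Byzantine leader brings \emph{some} correct processes to delivery (they stop complaining) while others do not; if fewer than $f+1$ correct processes ever complain, the complaint amplification in the leader-election module never fires, the $2f+1$ threshold is never reached, and a correct leader is \emph{never} installed. Your proof gives no reason why the non-delivered correct processes terminate in this case --- your remark that the $\delivered$ flag is not reset only covers the processes that already delivered, not the stragglers. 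The missing ingredient is exactly the paper's Case~1: if any correct process delivers, it saw $2f+1$ $\ready$ messages, so at least $f+1$ correct processes sent $\ready$; by validity of $\abeb$ every correct process eventually receives $f+1$ matching $\ready$ messages and \emph{amplifies} (\autoref{algJ:f_ready_delivered}--\ref{algJ:amplify_ready}), after which every correct process collects $2f+1$ $\ready$ messages and delivers (\autoref{algJ:2f_ready_delivered}). You never invoke this $f+1$-ready amplification rule, yet it is the mechanism that makes the lemma true when no further leader change occurs.

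Your reasoning for the other case (no correct process has delivered, so every correct process's timer expires and complains, a correct leader is eventually installed, every correct process forwards $\Valid(\valid)$ or $\langle\mym,\ts\rangle$ --- here your use of the hypothesis that all correct processes broadcast is exactly right --- the leader's $\Agg$ passes the attestation check whether it aggregates fresh messages or adopts $\highValid$, and the cascade completes) matches the paper's Case~2 and is essentially sound, including the observation that the attestation check inspects only $M$ and not the timestamps of the forwarded signatures. So the repair is structural rather than deep: split on whether some correct process delivers, handle the affirmative case by ready-amplification, and invoke the eventual-correct-leader property only in the negative case, where you have first established that \emph{all} correct processes complain --- which is precisely how the paper avoids asserting that guarantee unconditionally.
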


\begin{lemma}[Uniformity]
\label{lem:brd_uniformity}
No correct pair of processes deliver different sets of messages.
\end{lemma}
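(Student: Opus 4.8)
The plan is to split the proof into a within-timestamp argument and an across-timestamp argument, following the intuition sketched after \autoref{alg:reconf3}. Throughout I would use that a cluster has $n = 3f+1$ processes, that a quorum is $2f+1$, that any two sets of $2f{+}1$ processes meet in at least $f+1$ processes and hence in a correct one, and that the leader-election module hands out strictly increasing timestamps with a unique leader per timestamp. Since the flags $\echoed,\readied,\delivered$ and the variables $\valid,\highValid$ are cleared only on a $\newLeader$ event — which always carries a strictly larger timestamp — a correct process echoes, readies, and delivers at most once per timestamp, and any $\valid$ it currently holds was recorded at its current timestamp.

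First I would show that \emph{for each timestamp $\ts$ there is at most one set $M_\ts$ that is ``valid'' at $\ts$}, meaning: $M_\ts$ is the set of the unique $2f{+}1$-sized $\echo(\cdot,\ts)$ quorum (if one forms), every correct process that sends a $\ready$ at $\ts$ sends $\ready(M_\ts,\ts)$, every correct process that records a $\valid$ at $\ts$ records $\langle M_\ts,\cdot,\ts\rangle$, every well-formed $\Valid(\cdot,\cdot,\ts)$ certificate is for $M_\ts$, and any correct process that delivers at $\ts$ delivers $M_\ts$. This is the classical Bracha quorum-intersection argument: the earliest correct process to $\ready$ at $\ts$ must have used the $2f{+}1$-$\echo$ branch (the amplification branch needs an even earlier correct $\ready$ at $\ts$), and any two $2f{+}1$-sized $\echo(\cdot,\ts)$ quorums share a correct process, which would then have echoed two different sets at $\ts$ — impossible. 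The remaining cases ($\Valid$ certificates backed by $f{+}1$ $\ready$ messages, a $\valid$ recorded jointly with a $\ready$, delivery from $2f{+}1$ $\ready$ messages) all reduce to the existence of a correct $\ready(\cdot,\ts)$, hence to $M_\ts$.

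Then I would handle the across-timestamp step. Assume some correct process delivers (otherwise the lemma is vacuous) and let $\ts^\star$ be the least timestamp at which a correct process delivers; its set is $M_{\ts^\star}$ by the above. That delivery required $2f{+}1$ messages $\ready(M_{\ts^\star},\ts^\star)$, so a set $P$ of at least $f{+}1$ correct processes sent that message and, by \autoref{alg:reconf2}, simultaneously recorded $\valid = \langle M_{\ts^\star},\cdot,\ts^\star\rangle$. I would then prove by induction on $\ts \ge \ts^\star$ that the unique valid set at every timestamp $\ts \ge \ts^\star$ is $M_{\ts^\star}$. For the step, when the leader of $\ts{+}1$ is installed it must collect a quorum of state messages before it may broadcast an $\Agg$; that quorum meets $P$ in a correct process whose most recently forwarded $\Valid$ — by the induction hypothesis together with the no-forgery part of the first step — carries set $M_{\ts^\star}$ with a timestamp in $[\ts^\star,\ts{+}1)$, so the leader's $\highValid$ (the received $\Valid$ of largest timestamp, all of which are for $M_{\ts^\star}$) is $M_{\ts^\star}$ and it rebroadcasts $\Agg(M_{\ts^\star},\cdot,\ts{+}1)$; hence no correct process becomes ``valid'' on any other set at $\ts{+}1$. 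Since every correct delivery occurs at some timestamp $\ge \ts^\star$, all delivered sets equal $M_{\ts^\star}$, which is uniformity.

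The main obstacle is this last induction, specifically showing that the anchor $(M_{\ts^\star},\ts^\star)$ is never lost as leaders — possibly Byzantine — are replaced: I must argue that at every timestamp $\ge \ts^\star$ some correct member of $P$ still holds and forwards a $\Valid$ for $M_{\ts^\star}$ (with a timestamp in the relevant range), so that a fresh leader cannot legitimately obtain $\highValid = \bot$ and restart aggregation with a different set, and I must verify that the $\Valid$-certificate check excludes any Byzantine process from contributing a well-formed $\Valid$ with the wrong set at a timestamp $\ge \ts^\star$. Once that bookkeeping is pinned down, everything else is quorum-intersection counting over $n = 3f+1$, together with the no-duplication guarantee that each correct process delivers at most once per round.
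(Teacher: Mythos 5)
Your proposal is correct and follows essentially the same route as the paper's proof: per-timestamp uniqueness of the echoed/readied set via quorum intersection, plus an induction across timestamps anchored on the $f+1$ correct processes that recorded $\valid$ with the delivered set, so that every later leader must adopt a $\Valid$ certificate carrying that same set. The bookkeeping obstacle you flag is closed exactly by the counting you already sketch — after the first delivery at $\ts^\star$, at most $2f$ processes can supply fresh $\mym$ messages at any timestamp, so no fresh aggregation quorum can form, and a well-formed $\Valid$ requires $2\times f+1$ $\echo$ or $f+1$ $\ready$ signatures, which by per-timestamp uniqueness can only attest the delivered set — which is precisely how the paper argues it.
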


\begin{lemma}[No duplication]
\label{lem:brd_no_duplication}
Every correct process delivers at most one set of messages.
\end{lemma}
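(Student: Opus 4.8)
The plan is to argue directly from the single guarded assignment to the $\delivered$ flag, using the fact that event handlers execute atomically. A correct process issues a $\deliver$ response only at \autoref{algJ:issue_deliver}, which sits inside the handler triggered at \autoref{algJ:2f_ready_delivered}, whose guard includes the conjunct $\neg\delivered$. First I would note that the very first statement of that handler's body (\autoref{algJ:delivered_true}) sets $\delivered \leftarrow \true$, and since the handler runs to completion before any other event is processed, $\delivered$ is already $\true$ by the time this handler could fire again. Hence, once a correct process reaches delivery, the guard $\neg\delivered$ is falsified, and no subsequent invocation of the handler can reach \autoref{algJ:issue_deliver}.

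Next I would enumerate every program point at which $\delivered$ could be reset to $\false$ within one BRD instance, and check that none is reachable after a delivery in the same round. The only assignment of $\false$ to $\delivered$ occurs in $\mathit{kickstart}$ (\autoref{alg:order-exec}, \autoref{algJ:install_handler2}), which a process executes only when it is (re)started for a fresh round after joining — that is, a genuinely new BRD instance, not a re-entry within the current one. Crucially, the $\newLeader$ handler (\autoref{alg:reconf3}, \autoref{algJ:brc-new-leader}) resets $\echoed$, $\readied$, $\valid$, $\highValid$, $q$, $M$, $\Sigma$, and the timer, but it deliberately leaves $\delivered$ untouched; this is precisely what lets the property survive an arbitrary sequence of leader changes during dissemination. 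I would also observe that $\deliver$ is emitted nowhere else in Algorithms~\ref{alg:reconf2}--\ref{alg:reconf3}, so there is no alternative code path to a second delivery.

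Putting these together yields the claim: a correct process executes \autoref{algJ:issue_deliver} at most once per instance, hence delivers at most one set of messages. The step I expect to require the most care is the interaction with leader change and reconfiguration: one must confirm that $\delivered$ genuinely persists across every $\newLeader$ event and that the $\mathit{kickstart}$ reset is confined to a new instance (new round) rather than a re-execution within the round under consideration. Both facts follow by inspection of the handlers, but they are the subtle points — a careless implementation that cleared $\delivered$ on leader change would violate no-duplication — so the write-up should make this persistence explicit rather than treat it as obvious.
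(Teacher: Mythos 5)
Your proof is correct and takes essentially the same approach as the paper, which likewise derives no-duplication directly from the $\neg\delivered$ guard at the delivery handler and the fact that $\delivered$ is set before the response is issued. Your extra observations---that the $\newLeader$ handler leaves $\delivered$ untouched and that the only reset occurs in $\mathit{kickstart}$ for a fresh round---simply make explicit a persistence point the paper's one-line proof leaves implicit.
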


\begin{lemma}[Validity]
\label{lem:brd_validity}
If a correct process delivers a set of messages containing $m$ from a correct sender $p$, then $m$ was broadcast by $p$.

\end{lemma}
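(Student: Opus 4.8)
The plan is to trace the certificate carried by a delivered set back to the moment the message $m$ first entered an aggregation, and there invoke a \emph{signing invariant}: a correct process produces a signature on a pair $\langle m, t\rangle$ only on \autoref{algJ:send_m_leader} (inside the $\broadcast(m)$ handler) or on \autoref{algJ:send_collect-p} (the $\newLeader$ handler, where it sends $\langle \mym, t\rangle$), and in both cases $\mym$ was set to $m$ on \autoref{algJ:save_m}, which is reached only from the $\broadcast(m)$ handler. Hence, by unforgeability of signatures, if a valid signature of a correct process $p$ on some $\langle m, t\rangle$ exists anywhere, then $p$ invoked $\mathit{brd}.\broadcast(m)$. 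Since ``$m$ from a correct sender $p$'' means precisely that $m$ is bundled with such a signature of $p$, the lemma reduces to showing that a delivered set's membership is traceable to genuine per-message signatures.

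I would then establish that claim by induction that peels off one leadership at a time. A correct process delivers a set $M$ only on \autoref{algJ:issue_deliver}, after $2\times f+1$ $\ready(M,\ts)$ messages, so at least $f+1$ correct processes broadcast $\ready(M,\cdot)$; a correct process broadcasts $\ready(M,\cdot)$ only after $2\times f+1$ $\echo(M,\cdot)$ or $f+1$ $\ready(M,\cdot)$ messages, and broadcasts $\echo(M,\cdot)$ only on delivering $\Agg(M,\Sigma,\ts)$ from the current leader with $\Sigma$ attesting $M$. Consider the \emph{earliest} moment (in real time) at which a correct process vouches for $M$. It cannot be a $\ready$ via the $f+1$-$\ready$ or $2\times f+1$-$\echo$ rule, nor a delivery, since each presupposes an earlier $\echo$ or $\ready$ of the same $M$ by a correct process; so it is an $\echo(M,\ts)$ triggered by $\Agg(M,\Sigma,\ts)$, and for the same reason $\Sigma$ cannot be an $\echo(M)$ or $\ready(M)$ certificate --- it must be the per-message form, i.e.\ $\ge 2\times f+1$ signatures $\langle m',\ts\rangle$ whose union over $m'$ is $M$ (the set the leader accumulates on \autoref{algJ:merge}--\ref{algJ:cert_for_proposal}). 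Then $m\in M$ comes with a valid signature of its claimed sender, and if that sender $p$ is correct the signing invariant gives $\mathit{brd}.\broadcast(m)$ at $p$. For the inductive step I would note that no step ever \emph{fabricates} a set: a new leader, on \autoref{algTOB:adopt_candidate}, rebroadcasts exactly the set inside its $\highValid$, obtained from a $\Valid(M,\Sigma,t)$ certificate that some correct process had already vouched for earlier; hence every later $\echo/\ready/\deliver$ of $M$ is justified by an earlier one, and the ``earliest'' analysis covers all of them.

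The main obstacle I expect is exactly this leader-change interaction: when a fresh leader drives dissemination, its $\Agg$ is attested not by per-message signatures but by an $\echo$ or $\ready$ certificate from a previous leadership, so one must argue carefully that $M$ is never invented along the way --- only adopted via the $\valid/\highValid$ mechanism --- and hence the chain of attestations, though it changes form across leader changes, always bottoms out at the genuine per-message signatures collected by the first leader that proposed $M$. Making the ``earliest correct vouch for $M$'' step rigorous (ruling out that a Byzantine leader smuggles in a set attested only by forged or stale certificates) is where the cryptographic assumptions and the quorum-counting fact --- every $2\times f+1$ senders include at least $f+1$ correct ones --- do the real work.
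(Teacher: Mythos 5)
Your proposal is correct and follows essentially the same route as the paper's proof: trace a delivery back through the quorum of $\ready$ messages to correct $\echo$ senders, then to an $\Agg$ whose certificate is either per-message signatures collected from a quorum (bottoming out in the authenticity of the sender's own $\broadcast$) or an adopted $\echo$/$\ready$ certificate from a previous leadership handled by induction. Your ``earliest correct vouch'' formulation and the explicit signing invariant are just a more carefully spelled-out version of the paper's induction plus its appeal to the validity of the authenticated links.
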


\noindent \textbf{Reconfiguration.}
\vspace{-1mm}

\begin{lemma}[Completeness]
\label{lem:reconfig-completeness-main}
\label{lem:reconfig-completeness}
If a correct process $p$ requests to join (or leave) cluster $i$, 
then every correct process will eventually have a configuration $C$ such that
$p \in C$ (or $p \notin C$).

\end{lemma}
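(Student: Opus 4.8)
The plan is to trace the join request of a correct process $p$ through the three steps of the reconfiguration workflow it must survive — local collection, Byzantine reliable dissemination, and inter-cluster propagation — and then argue that stage~3 of every correct process applies it to the membership record of cluster $i$. The leave case is entirely symmetric (replace $\RequestJoin$ and $\join(p)$ by $\RequestLeave$ and $\leave(p)$, and the set-union by set-difference in $\reconfigure$), so I only sketch the join case. \emph{Step 1 (the request is permanently held by every correct replica of cluster $i$).} Since $p$ is correct, it keeps rebroadcasting $\RequestJoin$ on every $\clientTimer$ expiry (\autoref{alg:reconf1}), learning the current round from the $\mathit{Ack}$ messages it does receive, until it has $2 f_i+1$ matching acknowledgements. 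Using the authenticated best-effort broadcast guarantee together with the fact that after GST messages between correct processes arrive within a bounded delay, there is a point after which every correct replica of cluster $i$ has delivered a $\RequestJoin$ from $p$, inserted $\join(p)^{\sigma}$ into its $\recs$, and — because a pending reconfiguration is deleted from $\recs$ only when it is actually applied in $\mathit{kickstart}$ — retains it until it is applied. Fix any round $r$ after this point.

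\emph{Step 2 (the round-$r$ aggregated set contains $\join(p)$ at every correct replica of cluster $i$).} In round $r$, every correct replica of cluster $i$ eventually calls $\sendRecs$ and hands $\brd$ its $\recs$, which contains $\join(p)^{\sigma}$. By \autoref{lem:brd_termination} every correct replica of cluster $i$ eventually delivers some set $\overline{\recs}$ of $\Recs$ contributions, and by \autoref{lem:brd_integrity} that set contains the contributions of at least a quorum of distinct replicas, hence of at least $f_i+1$ \emph{correct} replicas; each such contribution already held $\join(p)^{\sigma}$, so $\cup\,\overline{\recs} \ni \join(p)$ — and, crucially, a Byzantine leader cannot drop it, since any quorum it certifies must include one of these correct contributions. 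By \autoref{lem:brd_uniformity} all correct replicas of cluster $i$ deliver the same $\overline{\recs}$, so each appends the same $\Reconfig(\cup\,\overline{\recs})$ to $\operations_i$ together with the certificates $\Sigma,\Sigma'$. The liveness obligations here — that round $r$ is reached and that $\brd$ terminates despite a possibly Byzantine leader — are discharged by \autoref{lem:validity1}, \autoref{lem:leader-agreement}, and \autoref{lem:brd_termination}.

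\emph{Steps 3--4 (propagation and execution).} Correct replicas of cluster $i$ already hold this reconfiguration. For any other cluster, \autoref{lem:inter-termination} guarantees its correct processes eventually receive the round-$r$ operations of cluster $i$, and \autoref{lem:inter-agreement} guarantees these coincide with what the correct replicas of cluster $i$ hold; in particular the reconfiguration attached to cluster $i$ is $\Reconfig(\cup\,\overline{\recs}) \ni \join(p)$, carried with the valid $\Sigma,\Sigma'$ so it passes the $\Inter$/$\Local$ validity checks. Thus in round $r$ every correct process (of every cluster) holds $\join(p)$ as part of cluster $i$'s reconfiguration set, and by \autoref{lem:inter-termination} every correct process eventually receives operations from all clusters in round $r$ and calls $\execute$ (\autoref{alg:order-exec}), which invokes $\reconfigure$ on $\cup\,\overline{\recs}$ and adds $p$ to $C_i$. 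Hence, from the completion of round $r$ on, every correct process has a configuration $C_i$ with $p \in C_i$, as claimed; the leave case ends instead with $C_i \leftarrow C_i \setminus \{p\}$.

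\emph{Main obstacle.} The delicate part is Step~1: one must show there really is a round in which \emph{all} correct replicas of cluster $i$ simultaneously hold $\join(p)^{\sigma}$ before invoking $\sendRecs$, since this is exactly what prevents a Byzantine leader from cherry-picking a certified quorum of $\Recs$ contributors that happen to omit the request. Making this precise requires combining $p$'s retry loop, the round tag carried by $\RequestJoin$ and $\mathit{Ack}$, partial synchrony, and the persistence of collected reconfigurations in $\recs$ until applied; it also depends on the whole protocol making round-by-round progress, which in turn rests on the termination lemmas for $\brd$, for inter-cluster broadcast, and for the local and remote leader-change mechanisms.
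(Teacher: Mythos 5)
Your proposal is correct and follows essentially the same route as the paper's proof: first establish that the request is persistently held by (a quorum of) correct replicas of cluster $i$ via retransmission after GST, then use the integrity, uniformity and termination of the Byzantine Reliable Dissemination together with quorum intersection to show the aggregated set for some round must contain the request, and finally invoke the inter-cluster broadcast termination and agreement lemmas so that every correct process installs it in the execution stage. The only cosmetic difference is that the paper phrases the first step via the ``prepared'' condition (a quorum of acknowledgments) and a quorum-intersection argument, whereas you argue that all correct replicas hold the request before invoking $\sendRecs$ in a sufficiently late round, which amounts to the same argument.
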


\begin{lemma}[Accuracy]
\label{lem:reconfig-accuracy-main}
Consider a correct process $p$ that has a configuration $C$ in a round, and then another configuration $C'$ in 
a later round.
If a correct process $p \in C_i' \setminus C_i$, then $p$ requested to join the cluster $i$.
Similarly, if 
a correct process $p \in C_i \setminus C_i'$, then $p$ requested to leave the cluster $i$.

\end{lemma}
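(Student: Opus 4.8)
\quad
The plan is to argue that the $i$-th component of a correct process's configuration is only ever enlarged (resp.\ shrunk) by genuinely requested joins (resp.\ leaves). First I would localize the state changes: a correct process mutates its view $C_i$ of cluster $i$ only inside the function $\reconfigure(i,\rc)$ of \autoref{alg:order-exec}, inserting $q$ exactly when it processes $\join(q)\in\rc$ and deleting $q$ exactly when it processes $\leave(q)\in\rc$ (the only other assignment to a configuration, the wholesale copy $C\leftarrow C'$ at join-kickstart in \autoref{algJ:reply_received}, merely reproduces a configuration already held by a quorum, hence by a correct process). So it suffices to prove the \emph{single-step claim}: if a correct process executes $\reconfigure(i,\rc)$ with $\join(q)\in\rc$ (resp.\ $\leave(q)\in\rc$) for a \emph{correct} $q$, then $q$ previously issued a $\join$ (resp.\ $\leave$) request for cluster $i$. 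Given the claim, suppose $q\in C_i'\setminus C_i$ for a correct $q$. By Agreement and Total-order (\autoref{lem:general-termination}, \autoref{lem:general-total-order}), $p$ applies the same globally ordered sequence of $\Reconfig$ operations as every other correct process, and since insertion into $C_i$ happens only through $\reconfigure(i,\cdot)$ with a $\join(q)$ element, at least one such operation occurred between the two rounds; the claim then yields that $q$ requested to join cluster $i$. The leave direction is symmetric, using that a correct $q\in C_i$ is a member of exactly cluster $i$ and therefore can only issue a leave for $i$.

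Next I would establish the single-step claim by tracing the provenance of $\rc$. The operation $\Reconfig(\rc)$ that the executing process applies with origin cluster $i$ was obtained either (i) from its own BRD instance in cluster $i$, delivered as $\overline{\Recs(r,\recs)},\Sigma$ with a valid certificate $\Sigma'$ (\autoref{alg:reconf1-dissem}, \autoref{algJ:deliver-collection}), with $\rc=\bigcup\overline{\recs}$; or (ii) inside a $\Local$ message carrying the very same certificates $\Sigma,\Sigma'$, whose validity is re-checked on receipt in \autoref{alg:inter-cluster}. Either way, by BRD Integrity (\autoref{lem:brd_integrity}) the certificate $\Sigma$ witnesses that $\bigcup\overline{\recs}$ was aggregated from $\Recs$ messages of a quorum of cluster-$i$ replicas, so every element of $\rc$ originated in the $\recs$ set of some replica of cluster $i$. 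Now focus on the element $\join(q)^{\sigma}\in\rc$ for the correct process $q$: it is accompanied by $\sigma$, a signature of $q$ on a $\RequestJoin$ message. By unforgeability of signatures, $\sigma$ can only have been produced by $q$ itself, and a correct $q$ emits such a signature solely in the request handlers at \autoref{algJ:join_handler} and \autoref{algJ:join_timer_expires}, i.e.\ only after $q$ issued a $\join$ request; moreover that request targets cluster $i$, because a correct joiner broadcasts $\RequestJoin$ only within the cluster it intends to join and the collecting quorum witnessed by $\Sigma$ belongs to cluster $i$. The $\leave(q)^{\sigma}$ case is identical, via \autoref{algJ:leave_handler}. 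This proves the claim.

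I expect the genuinely delicate point to be ruling out \emph{spurious} $\join(q)$/$\leave(q)$ entries for a correct $q$: a Byzantine replica of cluster $i$ may place arbitrary content into its own $\recs$, and such content does propagate into the aggregated set $\bigcup\overline{\recs}$, so BRD's quorum certificates alone do not sanitize $\rc$. The argument above circumvents this precisely by exploiting that $q$ is correct --- any entry that actually changes $q$'s membership must be a \emph{validly signed} request of $q$ (signatures are checked by a correct replica when it admits a request during collection at \autoref{algJ:requestS_delivered}/\autoref{algJ:requestL_delivered}, and are re-verified before a request is applied), and correctness of $q$ together with unforgeability makes such a signature impossible unless $q$ really requested the reconfiguration. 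The remaining work is bookkeeping: lifting the single-step claim over arbitrarily many rounds and across the boundary between the observer's cluster and cluster $i$, which is exactly what Agreement and Total-order provide, together with an induction showing the kickstart state-transfer never introduces a fresh spurious membership.
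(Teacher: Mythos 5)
Your proposal is correct and follows essentially the same route as the paper's proof: localize membership changes to the $\reconfigure$/install step, note that applied reconfiguration sets carry validity certificates (quorum/BRD checks), and conclude via the requester's own signature and authenticity of the channels that a \emph{correct} $p$ appearing in a $\join(p)$ (resp.\ $\leave(p)$) entry must itself have issued the request. You merely spell out the provenance of $\rc$ and the multi-round bookkeeping that the paper's terser argument leaves implicit.
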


\begin{lemma}[Uniformity]
\label{lem:reconfig-agreement-main}
In every round, 
the configurations that every pair of correct processes
execute
are the same.

\end{lemma}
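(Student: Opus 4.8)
The plan is to prove the statement by induction on the round number, strengthening the claim to the invariant that at the start of each round every correct process holds the same configuration --- the same set of clusters and, for each cluster $C_j$, the same membership set and hence the same threshold $f_j$. The base case is the common initial configuration fixed at system start-up. For the inductive step, I note that the configuration used in round $r+1$ is the one produced in the execution stage of round $r$ (Algorithm~\ref{alg:order-exec}, function $\reconfigure$): it is obtained from the round-$r$ configuration by adding, for each cluster $C_j$, the processes that requested to join and removing those that requested to leave (lines~\ref{algJ:install_join},~\ref{algJ:install_leave},~\ref{algJ:update_other_var}) --- a deterministic function of the round-$r$ configuration and the per-cluster reconfiguration sets $\rc_j$. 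Since the inductive hypothesis already supplies a common starting configuration, it suffices to show that every pair of correct processes agrees, for each $j$, on the set $\rc_j$ it applies in round $r$.

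Fixing a cluster $C_j$, I would argue agreement in two cases. For a correct process in $C_j$, its $\rc_j$ is the union of the reconfiguration set that the BRD module delivers inside $C_j$ (Algorithm~\ref{alg:reconf1-dissem}, lines~\ref{algJ:deliver-collection}--\ref{algJ:insert}); by BRD Uniformity (Lemma~\ref{lem:brd_uniformity}) together with BRD No duplication (Lemma~\ref{lem:brd_no_duplication}), all correct processes in $C_j$ deliver the same set, so $\rc_j$ is common to them. I would also record that this delivery carries the certificate $\Sigma'$ of $2 f_j + 1$ $\ready$ signatures and that, by the internal agreement of BRD, no other set admits such a certificate in round $r$. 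For a correct process $q$ in a different cluster $C_k$, the $\Reconfig$ set attributed to $C_j$ arrives inside an inter-cluster $\Local$ message whose certificates $q$ validates (Algorithm~\ref{alg:inter-cluster}, line~\ref{algI:local_handler}); by the Inter-cluster Broadcast Agreement lemma (Lemma~\ref{lem:inter-agreement}) every pair of correct processes receives the same operations attributed to $C_j$ in round $r$, and combining this with the uniqueness of the $\Sigma'$-certified set identifies that $\Reconfig$ component with the set delivered by BRD inside $C_j$. Hence every correct process applies the same $\rc_j$.

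I would then close the induction: applying the common sets $(\rc_j)_j$ to the common round-$r$ configuration via the deterministic $\reconfigure$ update yields the same round-$(r+1)$ configuration (including the same thresholds) at every correct process, and a process that joins $C_j$ in round $r$ adopts this same configuration because it installs the state, round, and members carried by the quorum of $\CurrState$ messages it collects during kickstart (Algorithm~\ref{alg:order-exec}, line~\ref{algJ:reply_received}). This re-establishes the invariant and proves the lemma.

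The main obstacle I anticipate is the bridge between the purely intra-cluster guarantee of BRD Uniformity and the inter-cluster guarantee of Inter-cluster Broadcast Agreement: I must argue that the reconfiguration set $C_j$'s correct members deliver locally is exactly the set that remote clusters accept and apply, even when $C_j$'s leader is Byzantine and propagates the set itself. This hinges on the $\Sigma'$ certificate being unforgeable and uniquely pinning down the delivered set --- a property internal to the BRD protocol rather than one of its stated interface lemmas --- so I would need either to extract that uniqueness from the BRD pseudocode (via the quorum intersection of $\ready$ messages) or to state it as an auxiliary lemma. A secondary, more mechanical concern is keeping ``quorum'' and ``$f_j$'' meaning the same thing at every correct process throughout: this is exactly what the strengthened invariant provides, but it must be threaded carefully through every certificate-validity check, since those checks are parameterized by the (inductively common) thresholds.
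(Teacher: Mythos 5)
Your proposal is correct and follows essentially the same route as the paper's proof: the paper likewise splits into the intra-cluster case, settled by BRD Uniformity (Lemma~\ref{lem:brd_uniformity}), and the cross-cluster case, settled by the certificate check at $\Local$ delivery together with Inter-cluster Broadcast Agreement (Lemma~\ref{lem:inter-agreement}), and then observes that installing the new membership is a deterministic function of the delivered reconfiguration set. What you add beyond the paper is worth noting. First, the paper does not wrap the argument in an explicit induction on rounds; it tacitly assumes all correct processes enter the round with the same memberships and thresholds, which is exactly your strengthened invariant (and is what makes every certificate-validity check meaningful), so your formulation is the more self-contained one. Second, the paper never explicitly bridges the set delivered locally by BRD inside $C_j$ with the certified set that remote clusters accept when $C_j$'s leader is Byzantine; it simply appeals to Lemma~\ref{lem:inter-agreement} for ``the reconfiguration requests installed at cluster $j$,'' which on its face only covers pairs of processes receiving $\Local$ messages. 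Your observation that this bridge rests on the uniqueness of the quorum-of-$\ready$ certificate $\Sigma'$ (by quorum intersection inside BRD, essentially the same argument as in the proof of BRD Uniformity) pins down a step the paper leaves implicit, and stating it as an auxiliary lemma, as you suggest, would make the argument tighter than the paper's own.
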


\noindent \textbf{Reconfigurable Clustered Replication.}

\noindent
\textbf{Theorem \ref{lem:general-validity-main}} (Validity). \
\textit{}

\noindent
\textbf{Theorem \ref{lem:general-termination-main}} (Agreement). \
\textit{}

\noindent
\textbf{Theorem \ref{lem:general-total-order-main}} (Total order). \
\textit{}

%
%
%
%

\section{Proofs}
\label{sec:proofs}
 
 \subsection{Remote Leader Change}

\textbf{Lemma \ref{lem:validity1}} (Eventual Succession). \ 
\textit{}

 \begin{proof}
 	Let $C_2$ be any other cluster in the system except $C$. There are two cases regarding the delivery of $m$ in cluster $C_2$.
 	
 	In the first case, at least one correct process $p$ in $C_2$ delivers $m$. 
 	Then, it uses $\rb$ to broadcasts $m$ to all members of the local cluster at \autoref{alg:inter-cluster}, \autoref{algI:local_send}. 
 	By validity of reliable broadcast, all the correct processes in $C_2$ deliver $m$.
 	
 	In the second case, none of the correct processes in $C_2$ delivers $m$. 
 	We prove that processes in $C_2$ will invoke a remote leader change for $C$ and finally correct processes in $C$ adopt a new leader.
 	If none of the correct processes of $C_2$ delivers $m$, 
 	then their timers will eventually be triggered at \autoref{alg:het-remote-leader-change}, \autoref{algV:timer_trigger} and all of the correct processes broadcast $\LComplaint$ at \autoref{algV:Drvc_send}.
 	Thus, the signatures of all of them are stored in $\cs_{1}$ variable at \autoref{algV:add_complain}.
 	Since there are at least $2  \times  f_2 + 1$ correct processes in cluster $C_2$,
 	all the correct processes eventually receive enough $\LComplaint$ messages,
 	and $\cs_{1}$ will be large enough.
 	Thus, $f_2 + 1$ processes in $C_{2}$ send $\RComplaint$ messages,
 	and each send it to $f + 1$ distinct processes in $C$ at \autoref{algV:Rvc_send}.
 	Thus, at least one correct process in $C$ eventually delivers the $\RComplaint$ message at \autoref{algV:Rvc_deliver} and verifies the validity of 
 	the accompanying signatures $\Sigma$.
 	Then, it broadcasts the $\Complaint$ message locally at \autoref{algV:rb_Rvc}.
 	By validity of $\abeb$, all the correct processes in $C$ deliver the complain at \autoref{algV:Rvc_rb_deliver}, 
 	and request the leader election module to move to the next leader at \autoref{algV:local_complain}.
 	Thus, the leader election module will eventually choose a new leader.
 	Thus, all the correct processes in $C$
 	will eventually adopt a new leader at \autoref{alg:leader-change} \autoref{algLC:new_leader}-\ref{algLC:record-leader-ts}.
 	
 \end{proof}

\begin{lemma}[Local Complaint Synchronization]
	\label{lem:local_complaint_counter}
	If a correct process in cluster $C_i$ installs $cn_j = k$, then all the correct processes in $C_i$ eventually install $cn_j = k$.
\end{lemma}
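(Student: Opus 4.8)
The plan is to argue by induction on $k$, the engine being the validity of the best-effort broadcast $\abeb$ --- a message broadcast by a correct process is eventually delivered by every correct process --- together with the timer and amplification steps of \autoref{alg:het-remote-leader-change}. First fix the round $r$ in which the witnessing correct process installs $\cn_j = k$; since every process resets $\cn_j$ to $0$ at a round transition and otherwise only ever increments it by one (at \ref{algV:reset1}), and since correct processes of $C_i$ pass through rounds in lockstep (Stage~3 is entered only after operations from all clusters are in hand), the pair $(r,k)$ names a well-defined event. The base case $k = 0$ is immediate, as every correct process has $\cn_j = 0$ at the start of round $r$.

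For the inductive step, assume the claim for $k - 1$ and let $p$ be a correct process that installs $\cn_j = k$ with $k \ge 1$. Since $\cn_j$ only moves up by one, $p$ passed through $\cn_j = k-1$ earlier in round $r$, so by the induction hypothesis every correct process of $C_i$ eventually installs $\cn_j = k-1$. Assume for the moment that $\operations_j = \perp$ holds at every correct process throughout round $r$ (the delicate point; see below). When a correct process first reaches $\cn_j = k-1$ it empties $\cs_j$, sets $\complained_j$ to $\false$, and resets $\timer_j$ (at \ref{algV:reset3}, or at the start of round $r$ when $k = 1$). From then on it must broadcast $\LComplaint(j, k-1, r)$: if it stays at $\cn_j = k-1$, then $\timer_j$ eventually expires and it does so at \ref{algV:Drvc_send}; otherwise it reaches $\cn_j = k$, which it can only do by accumulating $2 \times f_\i + 1$ signatures on $\LComplaint(j, k-1, r)$ in $\cs_j$, and en route it crosses the threshold $f_\i + 1$ with $\complained_j$ still $\false$, triggering the amplified broadcast at \ref{algV:amplify_Drvc}. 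Hence every correct process of $C_i$ broadcasts $\LComplaint(j, k-1, r)$.

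Since there are at least $2 \times f_\i + 1$ correct processes in $C_i$, validity of $\abeb$ delivers all of these $\LComplaint(j, k-1, r)$ messages to every correct process. Now any correct process that has not yet installed $\cn_j = k$ is, and forever remains, at $\cn_j = k-1$ with $\operations_j = \perp$; so each of these $2 \times f_\i + 1$ deliveries passes the guard of \ref{algV:Drvc_deliver} (the round matches, the message counter equals its $\cn_j$, and $\operations_j = \perp$), and its signature is recorded in $\cs_j$ at \ref{algV:add_complain} --- a set emptied when the process reached $\cn_j = k-1$ and not emptied again. Thus $|\cs_j| \ge 2 \times f_\i + 1$ holds, the guard of \ref{algV:2f_complain} fires, and the process installs $\cn_j = k$ at \ref{algV:reset1}, completing the induction.

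The step I expect to be delicate is precisely the standing assumption $\operations_j = \perp$, which is the side condition in the guard of \ref{algV:Drvc_deliver}: a correct process that has already obtained the round-$r$ operations of $C_j$ stops absorbing complaints, and the argument above would not close for it. I would discharge this through the case analysis already used in the proof of Lemma~\ref{lem:validity1}: the statement is meant to be applied in the regime where the remote leader withholds those operations, so that no correct process of $C_i$ ever delivers them and $\operations_j = \perp$ holds everywhere throughout round $r$, making the argument go through verbatim. Making that dependency explicit --- and checking the remaining bookkeeping used above, namely the per-round reset and monotone increment of $\cn_j$, the freshness of the signatures counted in $\cs_j$, and round-synchrony within $C_i$ --- is where the care lies.
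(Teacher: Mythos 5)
Your proposal is correct in substance and uses the same skeleton as the paper's proof: induction on the counter, with the inductive step powered by validity of $\abeb$ and the $f_\i+1$ / $2\times f_\i+1$ thresholds. The route through the inductive step differs mildly: the paper starts from the installing process, whose $2\times f_\i+1$ collected signatures yield $f_\i+1$ correct broadcasters of $\LComplaint(j,k-1,r)$, and then lets the amplification at \autoref{algV:f_complain}--\autoref{algV:amplify_Drvc} make all correct processes broadcast; you instead argue directly that every correct process broadcasts, via the timer at \autoref{algV:timer_trigger} if it stays at $\cn_j=k-1$ and via amplification if it advances. (Small imprecision in your dichotomy: a process that advances to $k$ may already have broadcast through its timer, so $\complained_j$ can be $\true$ when $|\cs_j|$ crosses $f_\i+1$ and no amplification fires --- but it has broadcast anyway, so your conclusion stands.) The more interesting difference is your "delicate point": the guard of \autoref{algV:Drvc_deliver} requires $\operations_j=\perp$, and the timer is stopped at \autoref{algI:stop_timer} once $C_j$'s operations arrive, so a correct process that has already received them neither complains nor accumulates $\cs_j$, and the unconditional statement does not close for it. The paper's own proof silently ignores this guard and effectively argues in the regime where no correct process of $C_i$ has received $C_j$'s round-$r$ operations, which is the regime in which the lemma is applied (\autoref{lem:validity1}, \autoref{lem:remote_complaint_counter}). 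So your conditional treatment is, if anything, more careful than the paper's; the residual work you identify (making the $\operations_j=\perp$ hypothesis explicit, or weakening the conclusion for processes that already hold $C_j$'s operations) is a gap in the lemma statement itself rather than in your argument relative to the paper's.
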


\begin{proof}
	We prove this lemma by induction. 
	
	For $cn_j = 0$, all the correct processes assign $cn_j$ to be the same value $0$ at initialization.
	
	The induction hypothesis is that if a correct process installs $cn_j = k$, then all the correct processes eventually install $cn_j =k$. 
	
	We prove that if a correct process in cluster $C_i$ installs $cn_j = k + 1$, then all the correct processes in $C_i$ eventually install $cn_j = k + 1$
	
	A correct process $p$ increments $cn_j$ to $k + 1$ at \autoref{algV:reset1} after verifying $2f_i + 1$ $\LComplaint$ messages has been delivered for the same $cn_j = k$ at \autoref{algV:2f_complain}.
	Thus at least $f_i + 1$ correct processes have broadcast $\LComplaint$ messages for $cn_j = k$.
	By the validity of $\abeb$, all the correct processes eventually delivers at least $f_i + 1$ consistent $\LComplaint$ messages at \autoref{algV:f_complain} and verify the complaint counter:
	by induction hypothesis and $p$ installed $cn_j = k$, all the correct processes eventually install $cn_j = k$.
	Then correct processes amplify the complain by broadcasting $\LComplaint$ messages for $cn_j = k$ at \autoref{algV:amplify_Drvc}.
	There are at least $2f_i + 1$ correct processes in cluster $C_i$.
	By the validity of $\abeb$, eventually at least $2f_i + 1$ $\LComplaint$ messages are delivered to all correct processes at \autoref{algV:2f_complain} and they increment $cn_j$ to $k + 1$ at \autoref{algV:reset1}.
	Therefore, all the correct processes install $cn_j = k + 1$.
	
	We conclude the induction proof: for $k \geq 0$, if a correct process install $cn_j = k$, then all the correct processes install $cn_j = k$. 
\end{proof}

\begin{lemma}[Remote Complaint Synchronization]
	\label{lem:remote_complaint_counter}
	If a correct process in cluster $C_i$ installs $rcn_j = k$, then all the correct processes in $C_i$ eventually install $rcn_j = k$.
\end{lemma}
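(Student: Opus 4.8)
The plan is an induction on $k$, structurally parallel to Lemma~\ref{lem:local_complaint_counter} but now spanning two clusters: an increment of $rcn_j$ at a correct process of $C_i$ (at \autoref{algV:rcn-inc}) is triggered by delivery of a $\Complaint$ message inside $C_i$ (broadcast at \autoref{algV:rb_Rvc}), which is in turn produced from an $\RComplaint$ message sent by processes of $C_j$ (\autoref{algV:Rvc_send}), which is itself the output of the local complaint machinery run inside $C_j$. So the proof threads Lemma~\ref{lem:local_complaint_counter} applied to $C_j$ together with the quorum-intersection and ``the first $f+1$ processes contain a correct one'' arguments already used in the proof of Lemma~\ref{lem:validity1}. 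The base case $k=0$ is immediate since every correct process initializes $rcn_j \leftarrow 0$.

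For the inductive step, assume every correct process of $C_i$ eventually installs $rcn_j = k$, and suppose some correct $p \in C_i$ installs $rcn_j = k+1$. By \autoref{algV:rcn-inc} this happened because $p$ delivered, via $\abeb$, a message $\Complaint(k, j, \Sigma)$ whose $\Sigma$ carries $2 f_j + 1$ signatures from distinct processes of $C_j$ on $\LComplaint$ messages with counter $k$; hence at least $f_j + 1$ \emph{correct} processes of $C_j$ broadcast $\LComplaint$ with counter $k$ about $C_i$'s leader. I would then replay the amplification argument inside $C_j$: by validity of $\abeb$ and Lemma~\ref{lem:local_complaint_counter}, all correct processes of $C_j$ eventually install the value $k$ for their complaint counter targeting $C_i$, deliver the $\ge f_j + 1$ consistent $\LComplaint$ messages, amplify, and so $\ge 2 f_j + 1$ of them broadcast $\LComplaint$ with counter $k$; consequently the first $f_j + 1$ processes of $C_j$, at least one of which is correct, send $\RComplaint$ with counter $k$, each to $f_i + 1$ distinct processes of $C_i$ (\autoref{algV:Rvc_send}).

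Thus at least one correct process $q \in C_i$ eventually delivers this valid $\RComplaint$ via $\apl$; once $q$'s value of $rcn_j$ has reached $k$ (which it does eventually by the induction hypothesis), the precondition of the handler at \autoref{algV:Rvc_deliver} holds and $q$ broadcasts $\Complaint(k, j, \Sigma')$ via $\abeb$ at \autoref{algV:rb_Rvc}. Since $q$ is correct, validity of $\abeb$ delivers $\Complaint(k, j, \Sigma')$ to every correct process of $C_i$; combined with the induction hypothesis, each such process eventually has $rcn_j = k$ when it delivers this message and therefore increments to $rcn_j = k+1$ at \autoref{algV:rcn-inc}, closing the induction.

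The main obstacle is precisely this cross-cluster step: one must show that a \emph{single} observed increment in $C_i$ is enough to force an entire complaint round to complete inside $C_j$ (so that a correct process of $C_j$ actually emits the matching $\RComplaint$), and that the resulting $\RComplaint$, and then the induced $\Complaint$, are ultimately processed once each correct receiver's counter has caught up to $k$ --- which is where Lemma~\ref{lem:local_complaint_counter}, the two separate quorum thresholds ($2f_j{+}1$ signatures from $C_j$ versus $f_i{+}1$ recipients in $C_i$), and the correct-member-in-the-sender-set argument from Lemma~\ref{lem:validity1} all have to be combined without slipping on the $C_i$ versus $C_j$ bookkeeping.
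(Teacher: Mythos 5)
Your proposal is correct and follows essentially the same route as the paper's proof: induction on $k$, using the $2 \times f_j + 1$ signatures in $\Sigma$ together with Lemma~\ref{lem:local_complaint_counter} to force the complaint round to complete in $C_j$ so that a correct member of the sender set emits $\RComplaint$, then validity of $\apl$ to reach one correct process of $C_i$, and validity of $\abeb$ plus the induction hypothesis to propagate the increment to all correct processes of $C_i$. The only difference is presentational --- you re-run the amplification argument inside $C_j$ explicitly where the paper simply cites Lemma~\ref{lem:local_complaint_counter} --- which does not change the substance.
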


\begin{proof}
	We prove this lemma by induction. 
	
	For $rcn_j = 0$, all the correct processes assign $rcn_j$ to be the same value $0$ at initialization.
	
	The induction hypothesis is that if a correct process in cluster $C_i$ installs $rcn_j = k$, then all the correct processes eventually install $rcn_j =k$. 
	
	We prove that if a correct process $p$ in cluster $C_i$ installs $rcn_j = k + 1$, then all the correct processes in $C_i$ eventually install $rcn_j = k + 1$
	
	A correct process $p$ increments $rcn_j$ to $k + 1$ at \autoref{algV:rcn-inc} after verifying $2f_j + 1$ $\LComplaint$ messages was in $\Sigma$ for the same $rcn_j = k$ at \autoref{algV:2f_complain}.
	Thus by \autoref{lem:local_complaint_counter} and $\Sigma$ verifies that a correct process in $C_j$ installed $cn_j = k + 1$, all the correct processes in $C_j$ eventually install $cn_j = k + 1$ at \autoref{algV:reset1}.
	There are at most $f_j$ Byzantine processes in cluster $C_j$ and $\S$ contains $f_j + 1$ processes, therefore at least one correct process in $\S$ sends $\RComplaint(k, j, \Sigma, r)$ messages to $f_i + 1$ processes in $C_i$.
	By the validity of $\apl$, at least one correct process in $C_i$ receives the $\RComplaint$ message at \autoref{algV:Rvc_deliver}  and broadcasts $\Complaint(k, j, \Sigma)$ message at \autoref{algV:rb_Rvc}.
	By the validity of $\abeb$, all the correct processes in $C_i$ eventually delivers $\Complaint$ messages at \autoref{algV:Rvc_rb_deliver} and verify the complaint counter:
	by induction hypothesis and $p$ installed $rcn_j = k$, all the correct processes eventually install $rcn_j = k$.
	They increment the remote complaint counter $rcn_j$ to $k + 1$ \autoref{algV:rcn-inc}.
	Therefore, all the correct processes install $rcn_j = k + 1$.
	
	We conclude the induction proof: for $k \geq 0$, if a correct process install $rcn_j = k$, then all the correct processes install $rcn_j = k$. 
\end{proof}

\textbf{Lemma \ref{lem:leader-agreement}} (Eventual Agreement). \


\begin{proof}
	We prove this lemma in three steps. 
	Firstly, we prove if a correct process in $C_i$ issue response $\newLeader$ for $\ts$, then eventually all correct process in $C_i$ issue response $\newLeader$ for $\ts$.
	Secondly, we prove that eventually all the correct process stop changing leader and stay in the same timestamp.
	Finally, since the leader is deterministically chosen according to the timestamp and cluster membership, we prove that eventually all the correct process eventually adopt the same leader.
	
	For the first statement, $\lem$ issue response for two type of requests: $\complain$ and $\nextLeader$.
	For $\complain$ request, we directly use the eventual agreement property of underlying module.
	For $\nextLeader$ request, a correct process $p$ in cluster $C_i$ requests a $\nextLeader$ at \autoref{algV:local_complain}.
	Let us assume that $p$ installs $rcn_j = n$ before the $\nextLeader$ request at \autoref{algV:rcn-inc}.
	By \autoref{lem:remote_complaint_counter}, all the correct processes in $C_i$ eventually install $rcn_j = n$.
	By assumption, this request is apart from the previous remote leader change events and $\Delta - \timer_i > \epsilon$.
	Then all the correct process request the $\nextLeader$ for the same $\Complaint$ message.
	Therefore the $\ts$ at all correct processes are eventually the same.
	
	For the second statement, correct processes eventually wait long enough for a correct leader to complete inter-broadcast stage: the timer for remote leader change increases exponentially and eventually, all the messages are delivered within a bounded delay after GST. 
	When all $\Complaint$ messages have been received, all the correct processes in the same cluster don't issue new complains and by \autoref{lem:remote-leader-integrity}, they stay in the same $\ts$.
	
	For the third statement, by \autoref{lem:reconfig-agreement-main}, all the correct processes in the same cluster maintain a consistent group membership for each round. Then all of them deterministically choose the same process as leader based on group member and timestamp.
\end{proof}

\textbf{Lemma \ref{lem:remote-leader-integrity}} (Overthrow resistance). \


\begin{proof}
	The correct process requests the leader election module to adopt the next leader at \autoref{alg:het-remote-leader-change}, \autoref{algV:local_complain}.
	This request is after receiving a $\Complaint$ message at \autoref{algV:Rvc_rb_deliver} with the following checks: 
	(1) the expected next complaint counter $\rcn_j$ is equal to the received complain number $c$, and 
	(2) the signatures $\Sigma$ include at least $2  \times  f_j + 1$ signatures from $C_j$.
	The first check prevents replay attacks; thus, no complaints about previous leaders can be reused.
	Therefore, all the signatures in $\Sigma$ are complaints for the current leader.
	The second one implies that
	a correct process in $C_j$ sent the $\RComplaint$ message after receiving $2  \times  f_j + 1$ $\LComplaint$ messages at \autoref{algV:2f_complain}.
	Thus, at least $f_j + 1$ correct processes sent $\LComplaint$ messages.
	A correct process sends a $\LComplaint$ message at two places: 
	(1) the timer triggers at \autoref{algV:timer_trigger}; 
	(2) the process amplifies the received complaints at \autoref{algV:f_complain}.
	The first case reached the conclusion.
	In the second case, a correct process only amplifies after receiving $f_j + 1$ $\LComplaint$ messages.
	Thus, at least one correct process sent a $\LComplaint$ message with the same two cases as above.
	This second case is the inductive case, and the first case is the base case.
	Since the number of processes is finite, by induction,
	this case is reduced to the first case in a finite number of steps.
\end{proof}

\subsection{Inter-cluster Broadcast}

\textbf{Lemma \ref{lem:inter-termination-main}} (Inter Broadcast Termination). \


\begin{proof}
	We prove the termination property for inter-cluster broadcast with the help of \autoref{lem:validity1}.
	A leader of cluster $i$ should send $\Inter$ message to $f_j + 1$ processes in cluster $j$ for all $i \neq j$ at \autoref{algI:inter_send}. By the validity of remote leader change, either this $\Inter$ message was delivered to all correct processes in cluster $j$ or all the correct processes in cluster $i$ change a leader.
	In the first case we conclude the proof.
	In the second case, eventually the correct processes in cluster $i$ adopt a correct leader.
	The correct leader sends $\Inter$ messages to $f_j + 1$ processes in cluster $j$.
	By the validity of  $\apl$, at least one correct process $p$ in cluster $j$ delivers the $\Inter$ message at \autoref{algI:inter_received}.
	Then $p$ broadcasts the received content in $\Local$ message at \autoref{algI:local_send}.
	By the validity of $\abeb$, all the correct processes in cluster $j$ eventually deliver the $\Local$ message at \autoref{algI:local_handler}.
	We generalize the same reasoning for all the other cluster and conclude the proof.
\end{proof}

\textbf{Lemma \ref{lem:inter-agreement-main}} (Inter Broadcast Agreement). \


\begin{proof}
	Let process $p$ receives $\Local(r, j, ops, \Sigma)$ and $p'$ receives $\Local(r, j, ops', \Sigma')$.
	Correct processes only delivery valid $\Local$ messages, which means $\Sigma$ attests $\ops$ and $\Sigma'$ attests $\ops'$.
	Then $\Sigma$ and $\Sigma'$ both contains $2f+1$ commit signatures for each operation in $\ops$ and $\ops'$.
	By the agreement property of TOB in the first stage and $|\ops| = |\ops'|$, $\ops$ and $\ops'$ contains the same set of operations.
	By the total order property of the TOB, operations in $\ops$ and $\ops'$ have the same order.
	Thus, $\ops = \ops'$.
\end{proof}

\subsection{Byzantine Reliable Dissemination}
\label{sec:proofs-brd}

\begin{lemma}[Integrity]
	The delivered set contains at least a quorum of messages from distinct processes.
\end{lemma}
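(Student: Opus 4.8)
The plan is to trace the delivered set $M$ backwards through the $\ready$/$\echo$ amplification chain down to the leader's collection step, and then induct on the structure of the ``$\Sigma$ attests $M$'' predicate. First, observe that a correct process reaches the $\deliver(M,\Sigma)$ event only after receiving $\ready(M,\ts)$ messages from a quorum ($2f+1$) of processes, so at least $f+1$ \emph{correct} processes broadcast $\ready(M,\ts)$. A correct process broadcasts $\ready(M,\ts)$ either after delivering $\echo(M,\ts)$ from a quorum or after delivering $\ready(M,\ts)$ from $f+1$ processes; ordering the correct processes by the time at which they first broadcast $\ready(M,\ts)$, the earliest one cannot have used the $f+1$-$\ready$ rule (that would demand a strictly earlier correct $\ready(M,\ts)$ sender), so it used the echo rule and received $\echo(M,\ts)$ from a quorum. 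Hence at least one correct process broadcast $\echo(M,\ts)$, which it does only after delivering some $\Agg(M,\Sigma',\ts)$ from the leader with $\Sigma'$ attesting $M$.

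The core step is then the invariant: \emph{whenever a correct process holds a certificate that attests a set $M$, that $M$ is composed of the messages contributed by at least a quorum of distinct processes}; i.e., at some strictly earlier point some (possibly Byzantine) leader received signed messages $\langle m,t\rangle^{\sigma}$ from at least $2f+1$ distinct processes, and the messages so contributed are exactly the elements of $M$. I would prove this by well-founded induction on the time of the protocol event. The predicate ``attests $M$'' has three forms. In the base case $\Sigma$ already carries at least $2f+1$ signed $\langle m,\cdot\rangle$ messages for $M$; since signatures are unforgeable these come from $2f+1$ distinct processes, each of which invoked $\broadcast(m)$ and forwarded its own $m$ to the leader, and by the leader's collection handler these $m$'s are precisely what was accumulated into $M$, so the claim holds outright. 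In the inductive cases $\Sigma$ carries at least $2f+1$ $\echo(M)$ signatures, or at least $f+1$ $\ready(M)$ signatures; in either case at least one of them was produced by a correct process, which did so only after delivering an $\Agg(M,\cdot,\cdot)$ (respectively, after the echo or ready amplification rule) that supplied $M$ together with an attesting certificate at a strictly earlier time, so the induction hypothesis applies. Combining this invariant with the first paragraph, the $\Agg(M,\Sigma',\ts)$ that justified the first correct echo of $M$ attests $M$, hence the delivered $M$ is contributed by a quorum of distinct processes, which is the claim.

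The step I expect to be the main obstacle is keeping the induction sound across the leader-change paths. When a new leader adopts $\highValid$ and re-broadcasts $\Agg(M,\Sigma,\ts')$ under its \emph{new} timestamp $\ts'$, the signatures inside $\Sigma$ are $\echo$ or $\ready$ signatures carrying an \emph{older} timestamp, so I first need to establish that the ``attests $M$'' check (both at the $\Agg$-delivery handler and at the $\Valid$-reception handler, line~\ref{algTOB:newleadercollect_received}) only tests that the signatures endorse the set $M$ and not that they match the current timestamp. With that in hand, $\highValid$ is only ever assigned a pair $\langle M,\Sigma\rangle$ that already carries an attesting certificate, because $\valid$ is set only at the $\echo$-amplification point (with at least $2f+1$ echo signatures) and the $\ready$-amplification point (with at least $f+1$ ready signatures), both of which fall under the inductive cases, and the new leader records a received $\Valid$ set as $\highValid$ only after re-checking its attestation. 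A secondary point to make explicit is the reading of ``messages from distinct processes'': since each contributed message is accompanied by its sender's signature, a set $\Sigma$ of at least $2f+1$ distinct signatures certifies at least $2f+1$ distinct contributing processes, which is precisely the provenance guarantee the Integrity property is meant to express.
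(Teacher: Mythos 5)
Your proof is correct and follows essentially the same route as the paper's: trace the delivered set back through the $\ready$/$\echo$ quorums to a correct process's check that the $\Agg$ certificate attests $M$, then induct on the form of the attesting certificate (the base case of $2\times f+1$ signed original messages versus the $\echo$/$\ready$-signature cases arising from adopted $\valid$ sets). Your explicit earliest-correct-$\ready$-sender argument, the well-founded induction on event time, and the remark that the attestation check must ignore the stale timestamps inside adopted certificates are just more careful renderings of steps the paper compresses into ``by induction.''
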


\begin{proof}
	 A set of messages is delivered at \autoref{algJ:issue_deliver} 
	 which is after the delivery of $2f_i + 1$ of $\ready$ messages (at \autoref{algJ:2f_ready_delivered}).
	At least $f_i + 1$ correct processes sent $\ready$ messages since there are only $f_i$ Byzantine processes in a cluster $i$.
	A correct process only sends $\ready$ message when it receives $2f_i + 1$ $\echo$ messages or $f_{i} + 1$ $\ready$ messages.
	Then by induction, at least $2f_i + 1$ $\echo$ messages were received by a correct process.
	Then at least $f_{i} + 1$ correct processes sent $\echo$ messages.
	A correct process only sends $\echo$ messages when it verifies $M$ is valid (at \autoref{algJ:proposal_delivered}).
	A $M$ is valid if and only if $\Sigma$ includes $2f_i + 1$ distinct signatures and $M$ is the union of all the $m$ sets in those messages;
	Or $M$ is adopted from the $\valid$ and $\Sigma$ contains $2f_i + 1$ $\echo$ or $f_i + 1$ $\ready$ messages.
	In the first case, the delivered $M$ contains at least a quorum of $m$.
	In the second case, by induction $M$ was in $2f_i + 1$ of $\echo$ messages
	and the correct processes who sent the $\echo$ message verify that $M$ originally was a union of $2f_i + 1$ $m$. 
\end{proof}

\begin{lemma}[Termination]
	If all correct processes broadcast messages then every correct process eventually delivers a set of messages.
\end{lemma}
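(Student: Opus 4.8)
The plan is to reduce termination to the existence of a correct leader that is eventually installed at every correct process and is retained long enough to drive one full dissemination. The leader-election module $\lem$ already guarantees that a correct leader is eventually elected with a monotonically increasing timestamp, and that this election is uniform across correct processes. Using the partial-synchrony assumption (after GST message delays are bounded) together with the fact that the internal $\timer$ of $\brd$ is reset to a strictly longer period on each $\newLeader$ event, I would argue that eventually some correct process $l$ with timestamp $\ts$ is installed everywhere \emph{and} no correct process complains again at \autoref{algLC:timout}-\ref{algLC:brc-complain} before it delivers. Hence it suffices to show: once a correct leader is stably installed after GST, every correct process delivers a set.

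Fix such a correct leader $l$. On the $\newLeader$ event (at \autoref{algJ:brc-new-leader}) every correct process resets its dissemination state and sends $l$ either $\Valid(\valid)$ if it has recorded a $\valid$ set (at \autoref{algTOB:newleadercollect_send}) or — since by hypothesis it has broadcast and therefore $\mym \ne \bot$ — the pair $\langle \mym, \ts \rangle$ (at \autoref{algJ:send_collect-p}). As there are at least $2f+1$ correct processes, $l$ eventually satisfies $|q| \ge 2f+1$ and broadcasts an $\Agg(M,\Sigma,\ts)$ message: the aggregate of the $2f+1$ collected messages when it received no $\valid$ set (at \autoref{algJ:leader_converge}), or $\highValid$ when it did (at \autoref{algTOB:adopt_candidate}). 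I would then verify that in both cases $\Sigma$ attests $M$ in the sense required at \autoref{algJ:proposal_delivered}: in the first case $\Sigma$ contains $2f+1$ signatures on the constituent messages; in the second, $\Sigma$ is the echo/ready certificate stored inside a $\valid$ set at \autoref{algJ:update_rconfigcandidate1} or \ref{algJ:update_rconfigcandidate2}, which carries $2f+1$ $\echo(M)$ or $f+1$ $\ready(M)$ messages.

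Given that a correct leader broadcasts a well-attested $\Agg(M,\Sigma,\ts)$, the rest follows by the validity of $\abeb$: every correct process delivers and accepts it and broadcasts $\echo(M,\ts)$ (at \autoref{algJ:send_echo}); hence every correct process delivers $\ge 2f+1$ $\echo$ messages and broadcasts $\ready(M,\ts)$ (at \autoref{algJ:echo_delivered}-\ref{algJ:send_ready}); hence every correct process delivers $\ge 2f+1$ $\ready$ messages and, unless it has already delivered, issues $\deliver(M,\Sigma)$ (at \autoref{algJ:2f_ready_delivered}-\ref{algJ:issue_deliver}). A correct process that had delivered earlier under a prior (possibly Byzantine) leader has of course already delivered a set, so in all cases every correct process delivers a set of messages.

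The step I expect to be the main obstacle is the retention claim in the first paragraph: I must show that the $\brd$ timeout eventually outlasts a complete correct-leader dissemination after GST, so that no complaint is raised at \autoref{algLC:timout} and the system does not churn past the correct leader before it finishes. This relies on the timeout growing across leader changes plus eventual synchrony; I would also make precise that stale complaints raised against earlier leaders do not count, since $\lem$ ties complaints to the current timestamp, so only finitely many faulty leaders precede the stable correct one. A secondary point requiring care is the attestation check for the adopt case, i.e.\ that a certificate collected at the $\echo$/$\ready$ step under an old timestamp is still accepted when re-broadcast by $l$ under the new timestamp $\ts$ — which holds because the attestation predicate at \autoref{algJ:proposal_delivered} concerns the number and type of signatures on $M$, not the timestamp they were gathered under, while the freshness check $t=\ts$ is satisfied since all correct processes installed $l$ with the same $\ts$.
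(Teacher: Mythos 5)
Your argument for the case of a stably installed correct leader matches the paper's main line (all correct processes send $\Valid$ or $\langle\mym,\ts\rangle$ to the new leader, the leader reaches $|q|\geq 2\times f+1$ and broadcasts an attested $\Agg$, and the $\echo$/$\ready$ cascade then delivers everywhere). The gap is in the reduction itself: you assume that a correct leader is eventually installed at every correct process and retained, but the leader-election module only rotates leaders after a quorum of complaints, and a correct process that has already delivered stops its $\timer$ (\autoref{algJ:stoptimer}) and never complains through BRD. So consider a Byzantine leader that drives $f+1$ correct processes to delivery and starves the remaining $f$: only $f$ correct processes ever complain, the $f+1$ threshold for complaint amplification in $\lem$ is never reached, no leader change occurs, and your "eventually stable correct leader" premise is simply false. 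Your closing remark that previously-delivered processes "have already delivered" covers only them; it gives the $f$ starved processes no route to delivery, so as written the proof does not establish termination in this scenario.

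The paper closes exactly this hole with a case split that you omit. If some correct process delivers, then it received $2\times f+1$ $\ready$ messages, hence at least $f+1$ correct processes broadcast $\ready(M,\ts)$; by validity of $\abeb$ every correct process eventually receives $f+1$ such messages and amplifies at \autoref{algJ:f_ready_delivered}--\ref{algJ:amplify_ready}, so all $2\times f+1$ correct processes broadcast $\ready$ and everyone delivers at \autoref{algJ:2f_ready_delivered} -- no leader change is needed, which is why the starvation of complaints is harmless. Only in the complementary case, where no correct process delivers, do all correct processes complain, which guarantees the quorum of complaints, and leader rotation then eventually reaches a correct leader after GST; from that point your second and third paragraphs apply essentially verbatim (and your observations about growing timeouts and timestamp-independence of the attestation check belong to that branch). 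Adding this dichotomy as the outer structure of your proof repairs it and brings it in line with the paper's argument.
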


\begin{proof}
	We consider two cases based on whether there is a correct process delivered a set of messages.
	
	Case 1: If there is a correct process that delivers, then eventually all the correct processes deliver.
	A correct process delivers $M$ after receiving $2f_i + 1$ $\ready$ message at \autoref{algJ:2f_ready_delivered}.
	Then at least $f_i + 1$ correct processes broadcast the $\ready$ message at \autoref{algJ:send_ready}.
	By the validity of $\abeb$, eventually all the correct processes deliver $f_i + 1$ $\ready$ message at \autoref{algJ:f_ready_delivered} and broadcast the same message at \autoref{algJ:amplify_ready}.
	Eventually, all the correct processes deliver $2f_i + 1$ $\ready$ messages and issue delivery response (at \autoref{algJ:issue_deliver}).

	Otherwise, Case 2: 
	if no correct process delivers, 
	then each correct process complains about the current leader.
	Then by the eventual agreement property of the Byzantine leader election, 
	all the correct processes eventually adopt the same correct leader.
	Upon the last leader election delivered at \autoref{algJ:brc-new-leader}, all the correct processes send $\Valid$ or $\mym$ to the correct leader at \autoref{algTOB:newleadercollect_send} or \autoref{algJ:send_collect-p}.
	Since the set of correct processes is a quorum, then the correct leader either delivers a quorum of $\mym$ messages at \autoref{algJ:leader_converge} or a $\Valid$ message at \autoref{algTOB:newleadercollect_received}.
	Then we have two cases, either there is a valid $\valid$ or not.
	In the first case, the correct leader adopts $M$ from $\valid$.
	In the second case, the correct leader composes a new set of reconfiguration requests.
	Both cases can be verified and accepted by correct processes at \autoref{algJ:proposal_delivered}.
	Then all the correct processes send $\echo$ message at \autoref{algJ:send_echo} and eventually $2f_i + 1$ $\echo$ messages are delivered to all the correct processes.
	Then all the correct processes send $\ready$ message at \autoref{algJ:send_ready} and eventually $2f_i + 1$ $\ready$ message are delivered to all correct processes.
	Then all the correct processes issue delivery response at \autoref{algJ:issue_deliver} and we conclude the proof.
\end{proof}

\begin{lemma}[Uniformity]
	No correct processes deliver different set of messages
\end{lemma}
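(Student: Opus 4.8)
The plan is to prove that every set a correct process delivers equals the very first set ever delivered by a correct process, from which the claim follows immediately. Two facts would be used repeatedly: each cluster has $n=3f+1$ processes with at most $f$ Byzantine, so any two $(2f+1)$-subsets overlap in at least one correct process; and for a fixed leader timestamp $\ts$ a correct process broadcasts $\echo(\cdot,\ts)$ and $\ready(\cdot,\ts)$ at most once, since the $\echoed$ and $\readied$ flags are reset only on a leader change. The first step would be a within-timestamp uniqueness argument: if correct processes deliver $M$ and $M'$ at the same timestamp $\ts$, then $M=M'$. Delivering $M$ at \autoref{algJ:2f_ready_delivered} needs $2f+1$ messages $\ready(M,\ts)$; the correct process that sent $\ready(M,\ts)$ earliest in real time could not have used the amplification rule \autoref{algJ:f_ready_delivered} (that requires an earlier correct $\ready(M,\ts)$), so it used \autoref{algJ:echo_delivered}, and hence $2f+1$ messages $\echo(M,\ts)$ exist; likewise $2f+1$ messages $\echo(M',\ts)$ exist; the two echoer sets share a correct process, which would then have echoed two distinct sets at $\ts$, a contradiction. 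The same reasoning shows that a correct process recording $\valid=\langle M,\cdot,\ts\rangle$ (at \autoref{algJ:update_rconfigcandidate1} or \autoref{algJ:update_rconfigcandidate2}) witnesses $2f+1$ messages $\echo(M,\ts)$, so the set recorded at timestamp $\ts$ is determined by $\ts$ alone.

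Next I would lift this across timestamps. Let $M$ be delivered by a correct process at timestamp $\ts$; by the first step at least $f+1$ correct processes recorded $\valid=\langle M,\cdot,\ts\rangle$. I would prove, by induction on $t\ge\ts$, that whenever $2f+1$ messages $\echo(\cdot,t)$ exist the echoed set is $M$, so that anything delivered or recorded as $\valid$ at timestamp $t$ equals $M$. The base case is the first step. For $t>\ts$, a correct process echoes $M''$ at \autoref{algJ:proposal_delivered} only after the timestamp-$t$ leader broadcasts a validly attested $\Agg(M'',\Sigma'',t)$, which the leader produces either by fresh aggregation \autoref{algJ:leader_rb} (possible only while its $\highValid$ is $\bot$) or by adopting $\highValid$ \autoref{algTOB:adopt_candidate}. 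The crux is to show the leader must adopt, and adopt $M$: the $\ge f+1$ correct processes behind $M$'s delivery carry and forward a $\valid$ record of $M$ with timestamp $\ge\ts$ on every new-leader event (\autoref{algTOB:newleadercollect_send}), and by the induction hypothesis any such record they may re-create at an intermediate timestamp still has set $M$; a quorum intersection between these $\ge f+1$ processes and the $2f+1$ reporters the leader collects (\autoref{algTOB:newleadercollect_received}) forces $\highValid$, chosen by largest timestamp at \autoref{algTOB:high-valid1}, to have timestamp $\ge\ts$, and by the induction hypothesis applied to the attestation it carries, to have set $M$. This closes the induction. To conclude: if correct $p$ delivers $M$ at $\ts$ and correct $p'$ delivers $M'$ at $\ts'\ge\ts$, the first step settles the case $\ts'=\ts$, and for $\ts'>\ts$ the induction at $t=\ts'$ gives $M'=M$, since $p'$'s delivery entails $2f+1$ messages $\echo(M',\ts')$.

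The main obstacle is the ``survival'' claim inside the induction: that a set delivered at timestamp $\ts$ can never be displaced by a later leader. This amounts to an invariant on the new-leader handler \autoref{algJ:brc-new-leader} --- that the $\ge f+1$ correct processes which locked $M$ never afterwards hold or forward a $\valid$ record with a competing set or a stale timestamp, so that every later collection quorum of size $2f+1$ out of $3f+1$ must contain one of them and thereby pin the new leader to $M$. Making this invariant precise (tracking exactly which $\valid$ each such process carries through a chain of leader changes, and arguing that the attestation attached to a forwarded $\valid$ cannot certify a rival set) and weaving it into the induction hypothesis is the part that needs genuine care; the fresh-aggregation branch of a later leader is harmless only because a leader holding a non-$\bot$ $\highValid$ is forbidden from taking it.
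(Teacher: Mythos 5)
Your plan is correct and follows essentially the same route as the paper's proof: first same-timestamp uniqueness by quorum intersection on $\echo$/$\ready$ (tracing the earliest correct $\ready$ back to a quorum of $\echo$ messages), then an induction across leader timestamps showing that the $f+1$ correct processes that locked $\valid$ with the delivered set force every later leader's $2f+1$-collection to contain one of them, so the leader cannot fresh-aggregate and must adopt a $\highValid$ whose attested set is pinned to $M$ by the induction hypothesis. Your quorum-intersection phrasing of the "must adopt" step is just a restatement of the paper's counting argument that at most $2f$ processes can still supply $\mym$ messages, so no substantive difference.
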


\begin{proof}
	There are two cases regarding the delivery of messages for $p_1$ and $p_2$: either they deliver messages with the same $\ts$ or different $\ts$.
	
	In the first case, since any pair of quorums has a correct process in the intersection, if $p_1$ delivers $M_1$ and $p_2$ delivers $M_2$, $M_1 = M_2$.
	Otherwise, the correct process sends different $\ready$ messages for the same round and $\ts$, which is not permitted by the protocol (at \autoref{algJ:send_ready}, \autoref{algJ:amplify_ready}). 
	
	In the second case, let us assume that $p_1$ delivers first with timestamp $\ts_1$ and then $p_2$ delivers with another timestamp $\ts_2$.
	Without losing generality, let us assume that $\ts_1 < \ts_2$.
	If $p_1$ delivers $M_1$ with $\ts_1$, then $p_1$ receives at least a quorum of $\ready$ messages.
	A correct process set its $\valid$ before sending $\ready$ messages (at \autoref{algJ:update_rconfigcandidate1}, \autoref{algJ:update_rconfigcandidate2}).
	Therefore, at least $f_i + 1$ correct processes set their $\valid$ variable with $M_1$.
	For the next timestamp $\ts_1 < \ts_i  \leq  \ts_2$, it collects a quorum of $\mym$ messages or at least one $\Valid$ message.
	By assumption, cluster $i$ has $3f_i + 1$ members in total, then at most $2f_i$ processes have not set $\valid$ and can send $\mym$ message, which is not a quorum.
	Therefore, the leader for $\ts_i$ waits for the $\Valid$ message and adopts its value.
	Valid $\valid$ requires either $2f_i + 1$ $\echo$ messages or $f_i + 1$ $\ready$ messages for the same $\ts$.
	By induction, since there are only $f_i$ Byzantine processes, a correct process receives $2f_i + 1$ $\echo$ messages before sending out $\ready$ messages and triggering the amplification. 
	Since any pair of quorums has a correct process in the intersection, there is only one $M$ that can be echoed by a quorum of processes and appears in $\valid$.
	The leader for $\ts_i$ can only propose $M_1$ that will be accepted by correct processes at \autoref{algJ:proposal_delivered}.
	From $\ts_i$ to $\ts_2$, the $\valid$ can only be updated to the same $M_1$.
	Then when $p_2$ delivers $M_2$ in $\ts_2$, $M_2 = M_1$.
\end{proof}

\begin{lemma}[No duplication]
	Every correct process delivers at most one set of messages
\end{lemma}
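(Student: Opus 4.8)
The plan is to argue directly from the boolean flag $\delivered$ that guards the unique delivery point. A correct process issues a $\deliver$ response only at \autoref{alg:reconf3}, \autoref{algJ:issue_deliver}, inside the handler of \autoref{algJ:2f_ready_delivered}; every other occurrence of $\deliver$ in \autoref{alg:reconf2} and \autoref{alg:reconf3} is an \emph{incoming} delivery from $\apl$ or $\abeb$, not the BRD module's own response. That handler fires only when $\neg\delivered$ holds, and its first statement (\autoref{algJ:delivered_true}) sets $\delivered \leftarrow \true$ before the response is issued. So the first step is to record the local invariant: after the handler runs once, $\delivered$ is set, so the handler is permanently disabled and no second $\deliver$ response can be produced within the current BRD instance (one instance per round).

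The only point that needs care is to confirm that $\delivered$ is never reset to $\false$ during a single instance, in particular not by a leader change. The variable is written in exactly three places: its initialization to $\false$, the delivery handler (to $\true$), and the $\newLeader$ handler at \autoref{algJ:brc-new-leader}. Inspecting the latter, it resets $\echoed$, $\readied$, $\valid$, $\highValid$, $q$, $M$, $\Sigma$ and the timer, but deliberately leaves $\delivered$ untouched; hence a leader change in the middle of dissemination does not re-enable \autoref{algJ:2f_ready_delivered}. The assignment $\delivered \leftarrow \false$ inside $\mathit{kickstart}$ (\autoref{alg:order-exec}) occurs only in stage~3, after the round and thus the current BRD instance have already completed, so it merely initializes the next instance rather than affecting the current one.

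Combining these observations gives the result immediately: the single write $\delivered \leftarrow \true$ that precedes a correct process's only delivery permanently disables the only code path that emits a $\deliver$ response, for the remainder of the instance; hence the process delivers at most one set of messages. I do not anticipate a genuine obstacle here---the argument is a self-contained guard-and-flag invariant that, unlike the quorum-intersection reasoning used for Integrity and Uniformity, does not interact with message counts or signatures; the only work is the bookkeeping of where $\delivered$ is read and written.
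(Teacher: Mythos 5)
Your proof is correct and follows essentially the same route as the paper's, which simply observes that the guard $\neg\delivered$ at \autoref{algJ:2f_ready_delivered} together with setting $\delivered \leftarrow \true$ at \autoref{algJ:delivered_true} before the response at \autoref{algJ:issue_deliver} makes the delivery point fire at most once. Your extra bookkeeping---checking that the $\newLeader$ handler does not reset $\delivered$ and that the reset in $\mathit{kickstart}$ only occurs in stage~3 for the next round---is a welcome elaboration of the same flag-and-guard argument, not a different approach.
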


\begin{proof}
	This lemma follows directly from the condition (at \autoref{algJ:2f_ready_delivered}) before the delivery response is issued at \autoref{algJ:issue_deliver}.
\end{proof}

\begin{lemma}[Validity]
	If a correct process delivers a set of messages containing $m$ from a correct sender $p$, then $m$ was broadcast by $p$
\end{lemma}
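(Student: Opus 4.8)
The plan is to trace the causal history of a delivery back to the moment the delivered set was first assembled from individually signed contributions, and then invoke unforgeability of signatures.

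First I would peel off the final delivery step. A correct process issues $\deliver(M,\Sigma)$ at \autoref{algJ:issue_deliver} only after receiving $2\times f+1$ $\ready(M,\ts)$ messages for the current timestamp $\ts$ (\autoref{algJ:2f_ready_delivered}), at least $f+1$ of which are from correct processes. Exactly as in the proof of Integrity, a correct process emits $\ready(M,\ts)$ only after $2\times f+1$ matching $\echo(M,\ts)$ messages (\autoref{algJ:echo_delivered}) or after $f+1$ matching $\ready(M,\ts)$ messages (\autoref{algJ:f_ready_delivered}); since the number of processes is finite, a short induction on this amplification chain shows that some correct process accepted an $\Agg(M,\Sigma,\ts)$ message at \autoref{algJ:proposal_delivered}. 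So it suffices to prove the following claim by induction on the timestamp $t$: if a correct process accepts $\Agg(M,\Sigma,t)$ at \autoref{algJ:proposal_delivered}, then every $m\in M$ whose accompanying signature is that of a correct process $p$ was broadcast by $p$.

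For the induction I would case-split on the three ways $\Sigma$ may attest $M$. In the base (composition) case, $\Sigma$ consists of $2\times f+1$ signed contributions $\langle m_k,t\rangle^{\sigma_k}$ from distinct processes and $M=\bigcup_k\{m_k\}$ — this is what a leader with no recorded $\valid$ assembles at \autoref{algJ:leader_converge} (or \autoref{algJ:leader_converge2}) from the $\langle m,\ts\rangle$ messages it collects. If $m\in M$ carries the signature of a correct $p$, then by unforgeability $p$ actually sent $\langle m,t\rangle$, and a correct process sends such a message only at \autoref{algJ:send_m_leader} or at \autoref{algJ:send_collect-p}, in both cases with $m=\mym$, i.e.\ the very message it passed to $\broadcast$ at \autoref{algJ:broadcast_init}; hence $p$ broadcast $m$. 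In the two adoption cases, $\Sigma$ consists of $2\times f+1$ $\echo(M)$ messages or $f+1$ $\ready(M)$ messages tagged with some timestamp $t'$; since a new leader adopts a candidate only through $\highValid$ and carries the same set $M$ forward (\autoref{algTOB:newleadercollect_received}, \autoref{algTOB:adopt_candidate}), and timestamps are monotonically increasing, we have $t'<t$. Applying the finite amplification argument again within timestamp $t'$, some correct process accepted $\Agg(M,\Sigma'',t')$ at \autoref{algJ:proposal_delivered}, so the induction hypothesis at $t'$ gives the claim for $M$, in particular for $m$. Since adoption never adds elements to $M$, no message with a correct signer can appear in $M$ that was not already present at the base timestamp, closing the induction; combined with the opening reduction this proves the lemma.

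The main obstacle is getting the induction structure right: there are really two interleaved inductions — an inner one on the $\echo$/$\ready$ amplification within a fixed timestamp (shared with the Integrity proof, which I would state once as an auxiliary fact) and an outer well-founded induction on the strictly decreasing timestamps produced by leader changes. The subtle point to nail down is that after a leader change the original per-message signatures are replaced by $\echo$/$\ready$ signatures inside the $\valid$/$\highValid$ machinery, so the provenance of an individual $m$ is not literally transported along the leader-change path; one must argue instead that the set $M$ is preserved verbatim across adoptions, so that attributing $m$ to a correct sender is always justified by the base case, where $M$ was first built from signed $\langle m,\ts\rangle$ contributions.
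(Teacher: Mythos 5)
Your proof is correct and follows essentially the same route as the paper's: trace a delivery back through the $\ready$/$\echo$ amplification to a correct process accepting a valid $\Agg$, then case-split on whether $\Sigma$ attests $M$ by a quorum of signed sender contributions (in which case the correct sender's $\langle m,\ts\rangle$ message pins $m$ to its $\broadcast$ call) or by adoption certificates, recursing to an earlier timestamp in the latter case. You merely make explicit the two nested inductions and invoke signature unforgeability where the paper cites the validity/authenticity of $\apl$; the substance is the same.
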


\begin{proof}
	If a correct process delivers a set of messages, then it receives a quorum of $\ready$ messages. 
	A ready message is send by a correct process if it receives a quorum of $\echo$ messages or $f+1$ ready messages.
	Since there are only $f$ Byzantine processes, then by induction, the first ready message sent by a correct process is because of receiving a quorum of echo messages.
	A correct process only send echo message if delivers the  $\Agg$ from the leader with valid certificate.
	A valid $\Agg$ message states that $M$ is either collected from a quorum of distinct processes through $\apl$ or adopted from the previous leader.
	For the first case, by the validity of $\apl$, if the sender of $m$ is correct, then it sends $m$ to the leader.
	For the second case, $M$ can be adopted only if it carries a certificate with a quorum of $\echo$ messages for $M$ or $f+1$ $\ready$ messages for $m$.
	By the same induction, the messages contained in $M$ is broadcast by its sender $p$ if $p$ is correct.
\end{proof}

\subsection{Reconfiguration}

\textbf{Lemma \ref{lem:reconfig-completeness-main}} (Completeness). \


\begin{proof}
	We prove the completeness in two steps:
	first we prove that all the reconfiguration requests will be in a prepared state which we will formally define later;
	then we prove that all the prepared reconfiguration requests will be delivered within one round.
	
	We define that a new process prepares a join request when it receives at least a quorum of replies from the existing replicas.
	Our protocol guarantees that a new process officially joins the system in the round it is prepared. 
	Similarly, we define a leaving process that prepares a leave request when its $\RequestLeave$ message has been delivered to a quorum of existing replicas.
	Our protocol guarantees that a leaving process officially leaves the system in the round it is prepared. 
	
	For the first statement, 
	when a correct process $p$ requests to join (or leave) the cluster $C_\i$, it sends out $\RequestJoin$ (or $\RequestLeave$) messages to all the existing processes at \autoref{algJ:request_states} (or at \autoref{algJ:request_leave}).
	If $p$'s request is not installed in a long time \autoref{algJ:join_timer_expires},
	it resends the $\RequestJoin$ (or $\RequestLeave$) message and doubles the timer at \autoref{algJ:repeat_join} (or \autoref{algJ:request_leave-repeat}). 
	Therefore $\RequestJoin$ (or $\RequestLeave$)  messages sent out by $p$ at \autoref{algJ:request_states} will be delivered at all the correct processes in $C_{i}$ in the first stage at \autoref{algJ:requestS_delivered} after GST.
	Upon receiving the $\RequestJoin$ and $\RequestLeave$ message at \autoref{algJ:update_tentative} and \autoref{algJ:update_tentative2},
	correct processes in the system add the reconfiguration request into their $\recs$ variable.
	Since all the correct processes in a cluster is a quorum, $p$'s reconfiguration request is eventually prepared.

	We prove the second statement in two steps.
	First, we prove that any set of installed reconfiguration requests at round $r$ includes $p$'s reconfiguration request.
	Second, we prove that eventually, all correct processes install a set of reconfiguration requests in round $r$.
	
	For the first step, 
	at the end of the local ordering stage of each round at \autoref{algJ:send_collect},
	correct processes use Byzantine reliable dissemination module to deliver 
	the reconfiguration requests $\recs$ that they have collected.
	Assume that $p$'s reconfiguration request is prepared in round $r$.
	By the integrity of BRD, the delivered set contains a quorum of messages send by distinct processes.
	Since every pair of quorums have at least one correct process in their intersection,
	at round $r$,
	there is always a correct process which sends $p$'s reconfiguration request in the BRD message and the message is included in the delivered set.
	
	For the second step, we consider the delivery of reconfiguration requests for both local and remote clusters.
	
	For the remote clusters, by \autoref{lem:inter-termination-main} all the correct processes in the remote cluster deliver $\Local$ message, which is verified to contain reconfiguration requests at \autoref{algI:local_handler}. 
	Correct processes eventually receives all the $\Local$ message at \autoref{algI:receive_all_clusters} and install reconfiguration at \autoref{algI:execution_request}.
	
	For the local cluster, 
	by the termination property of BRD, all the correct nodes in the local cluster eventually deliver a set of reconfiguration requests through BRD at \autoref{algJ:deliver-collection}.
	They insert the reconfiguration requests at \autoref{algJ:insert}.
	By \autoref{lem:inter-termination-main}, all the correct processes receive enough $\Local$ message and install the reconfiguration requests at \autoref{algJ:install_handler}.
	
	In conclusion, a set of reconfiguration requests is eventually installed at all the correct processes and we conclude the second step.
\end{proof}

\textbf{Lemma \ref{lem:reconfig-agreement-main}} (Uniformity). \


\begin{proof}
	Let us assume that two correct processes $p_1$ and $p_2$ installed new configurations.
	The correct process installs new group membership at \autoref{algJ:install_handler}, which is at the order and execution stage.
	We prove agreement for correct processes in both local and remote clusters.
	
	For the local cluster, a correct process installs a reconfiguration request from $\operations_{i}$ at \autoref{algE:apply-reconf}.
	$\operations_i$ is updated at \autoref{algJ:insert}, which is after the delivery of an instance of BRD at \autoref{algJ:deliver-collection}.
	By the uniformity property of BRD, all the correct processes deliver the same set of messages.
	Since the installation of new membership is deterministic and only dependent on the set of reconfiguration requests, we have $C = C'$.
	
	For remote cluster reconfiguration, a correct process in cluster $i$ installs the reconfiguration requests for cluster $j$ at the order and execution stage at \autoref{algJ:install_handler}.
	$\operations_{j}$ is updated after verifying the $ \sigma $ at \autoref{algI:local_handler}.
	$ \sigma $ is valid if and only if for each reconfiguration request in $T$, it contains a quorum of signatures from cluster $j$ in round $r$.
	By \autoref{lem:inter-agreement-main}, the reconfiguration requests installed at cluster $j$ are the same. 
	Therefore, we conclude $C = C'$.

\end{proof}

\textbf{Lemma \ref{lem:reconfig-accuracy-main}} (Accuracy). \


\begin{proof}
	Since we have $p_n  \in  C_2 \setminus C_1 \land r_2 > r_1$, $p_n$ is not originally a member of this cluster.
	The cluster membership is updated at \autoref{algJ:install_join}, which is after verifying each reconfiguration request is valid:
	each reconfiguration request is delivered after a quorum of $\ready$ messages.
	At \autoref{algJ:proposal_delivered}, every correct process checks the validity of $\rc$, including its signatures from $p_n$. 
	By the authenticity of $\apl$, if $p_n$ is correct, then it is the sender of the $\RequestJoin$ messages and thus requested to join.
	The same reasoning applies to $\leave$ requests.
	
\end{proof}

\subsection{Replication System}

By \autoref{lem:reconfig-agreement-main}, at the beginning of each round, all the correct processes have the same configuration.
Thus during the execution of each round, all the correct processes maintain a static membership and we prove termination and total order properties for each round.
We prove validity for eventual progress.

\textbf{Theorem \ref{lem:general-validity-main}} (Validity). \


\begin{proof}
	Based on the validity of the underlying TOB protocol in the first stage, if a valid operation $o$ is submitted to a cluster $i$ (at \autoref{algTOB:local_handler}), then $o$ is eventually delivered at a correct process $p$ in $C_i$ at \autoref{algTOB:TOB_received} and included in $\operations_i$ (at \autoref{algTOB:append-trans}).
	By the \autoref{lem:inter-termination-main}, each correct process receives $\Local$ message from each other cluster and call $\execute$ function (at \autoref{algE:execution_handler}).
	Since $o$ was included in $\operations_i$, it is executed at \autoref{algE:apply-trans} and we conclude the proof.
\end{proof}

\textbf{Theorem \ref{lem:general-termination-main}} (Agreement). \


\begin{proof}
	A correct process deliver a operation in the execution stage (at \autoref{alg:order-exec}), which is stored in $\operations$.
	$\operations$ are updated in the inter-cluster stage (at \autoref{alg:inter-cluster}) for remote clusters and in the local ordering stage (at \autoref{algTOB:append-trans}) for the local cluster.
	Based on whether $o$ is an operation from the local cluster, we prove the termination in two cases.
	
	Case 1: $o$ is from the local cluster. Then we prove that $o$ will be delivered locally and remotely in round $r$.
	For the local cluster, we can directly use the termination property provided by the underlying TOB protocol: all the correct processes eventually deliver $o$.
	Since correct processes in the local cluster are waiting for a fixed number of operations to be delivered in a batch for each round, they will not move to the next round before they deliver $o$ in round $r$.
	Then we proved that $o$ will be delivered locally in round $r$.
	
	For the remote delivery of $o$, by the total order and termination property of underlying TOB protocol, $o$ will be delivered at the leader and included in the $\Local$ message.
	Then by the \autoref{lem:inter-termination-main}, all the other remote clusters will receive a $\Local$ message for each cluster, including the current one.
	 Thus, $o$ will be delivered in the $\Local$ message and inserted to $\operations$.
	 Finally, after all the $\Local$ messages are delivered, $o$ will be executed in the execution stage.
	
	Case 2: $o$ is from a remote cluster. Then we prove that $o$ will be delivered at all the other clusters. 
	
	If $o$ is from a remote cluster, then $\operations$ is only updated if $\Sigma$ is valid and the deliver is for the same round.
	$\Sigma$ is valid if it contains a quorum of commit certificate for each operation in $\ops$:
	a quorum of commit messages certify the delivery in the local order protocol.
	$\operations$ is updated when receiving a valid $\Local$ message.
	Then by the \autoref{lem:inter-agreement-main} and \autoref{lem:inter-termination-main}, $o$ will be delivered at all the other clusters through the same $\Local$ message. 
	
\end{proof}

\textbf{Theorem \ref{lem:general-total-order-main}} (Total order). \


\begin{proof}
	We prove the total order property in two steps.
	
	First we prove that if a pair of processes $p_1$ and $p_2$ both execute $o$ and $o'$, then they execute $o$ and $o'$ in the same order.
	Without loss of generality, let us assume that $o$ is executed in $p_1$ before $o'$
	By \autoref{lem:general-termination-main}, all the correct processes deliver the same operations for each cluster.
	Then they combine the operations in the predefined order based on the cluster identifier.
	Within each cluster, $\ops$ have been ordered across all the correct processes by the total order property of the underlying TOB protocol.
	Thus the combined operations keeps a total order across all the operations from all the clusters for round $r$: $o$ is executed before $o'$ at all correct processes including $p_2$.
	
	In the second step, we prove by contradiction.
	Assume that process $p'$ executed operation $o'$ before operation $o$.
	The process $p$ executed only $o$, or executed $o$ before $o'$.
	In the case that it has executed only $o$ and not $o'$, then, by the \autoref{lem:general-termination-main}, it will eventually execute $o'$ after $o$. Thus, we will reach a state where $p$ and $p'$ have a different order for the two operations $o$ and $o'$, which contradicts the first statement.
\end{proof}

\end{document}